\documentclass{article}

\usepackage{styles/arxiv}

\usepackage{amsmath,amsfonts,amssymb,amsthm,booktabs,color,epsfig,graphicx,hyperref,url}
\usepackage{natbib}
\usepackage{mathtools}

\usepackage{multirow}
\usepackage{subcaption}
\usepackage{bbm}
\usepackage{todonotes} 
\usepackage{diagbox} 

\theoremstyle{plain}
\newtheorem{theorem}{Theorem}

\newtheorem{proposition}{Proposition}

\theoremstyle{definition}

\newtheorem{assumption}{Assumption}

\newcommand{\atan}{\text{atan}}
\newcommand{\zerob}{\pmb 0}
\newcommand{\R}{\mathbb{R}}

\newcommand{\Z}{\mathbb{Z}}

\newcommand{\E}{\mathbb{E}}
\newcommand{\Prob}{\mathbb{P}}
\newcommand{\var}{\text{Var}}
\newcommand{\cov}{\text{Cov}}
\newcommand{\one}{\mathbbm{1}}

\newcommand{\mub}{\boldsymbol{\mu}}

\newcommand{\lambdab}{\boldsymbol{\lambda}}
\newcommand{\thetab}{\boldsymbol{\theta}}
\newcommand{\Thetab}{\boldsymbol{\Theta}}
\newcommand{\taub}{\boldsymbol{\tau}}
\newcommand{\varthetab}{\boldsymbol{\vartheta}}
\newcommand{\xb}{\boldsymbol{x}}
\newcommand{\alphab}{\boldsymbol{\alpha}}

\newcommand{\deltab}{\boldsymbol{\delta}}
\newcommand{\Rb}{\boldsymbol{R}}
\newcommand{\kappab}{\boldsymbol{\kappa}}
\newcommand{\xib}{\boldsymbol{\xi}}
\newcommand{\Xb}{\boldsymbol{X}}
\newcommand{\Vb}{\boldsymbol{V}}

\newcommand{\pb}{\boldsymbol{p}}
\newcommand{\eb}{\boldsymbol{e}}

\title{Arctan-skewing: a new skewing scheme on the hypertorus}

\author{
	\href{https://orcid.org/0000-0002-2290-8437}{\includegraphics[scale=0.06]{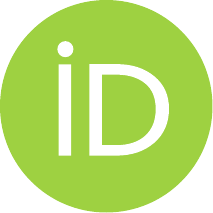}\hspace{1mm}Christophe Ley} \\
    Department of Mathematics \\
	University of Luxembourg\\
	\texttt{christophe.ley@uni.lu} \\
    \And
	\href{https://orcid.org/0009-0006-1372-7032}{\includegraphics[scale=0.06]{styles/orcid.pdf}\hspace{1mm}Sophia Loizidou} \\
    Department of Mathematics \\
	University of Luxembourg\\
	\texttt{sophia.loizidou@uni.lu} \\
}

\begin{document}

\maketitle

\begin{abstract}
In this paper we propose and study arctan-skewing, a general approach to skew any symmetric distribution on the unit hypertorus. Arctan-skewing retains the main advantages of the popular sine-skewing mechanism, but improves on it by overcoming its two major limitations. These are a  condition on the skewness parameter and a singular Fisher Information matrix in the vicinity of symmetry when combined with well-known symmetric models from the literature such as the circular von Mises or bivariate Cosine models. Arctan-skewing thus represents a compelling alternative to model skewed data on the circle and torus, notably from the field of protein bioinformatics. After a thorough investigation of its stochastic properties, we present a maximum likelihood estimation procedure and devise optimal tests for symmetry against our new class of skewed alternatives. Finally we show that arctan-skewed models better fit protein data compared to their sine-skewed counterparts.
\end{abstract}

\keywords{Directional statistics;
Flexible modeling;
Skewness;
Structural bioinformatics;
Tests for symmetry;
Toroidal data.}

\section{Introduction} \label{sec: intro}

Probability distributions are the building blocks of statistical modelling and inference. 
Flexible modelling is an important part of statistical research, as it is concerned with the construction of new probability distributions that are versatile enough to model complex data exhibiting features such as asymmetry, heavy/light tails or dependence structures, features that the  Gaussian distribution fails to represent in $\R^k, k\geq1$. 
A very large body of literature considers the problem of modelling asymmetric data given their omnipresence in fields like finance, biology, environmental sciences or meteorology, to mention but these. 
Besides several ad hoc proposals, general approaches have been presented over the past decades, including finite mixture models, mean-variance mixtures, two-piece models, the transformation approach,  the  probability integral transformations,  power transformations, or copulas; see \cite{jones2015families}, \cite{babic2019comparison} and \cite{ley2021flexible} for an overview. 
The arguably most popular proposal is the so-called symmetry modulation or Azzalini-skewing \citep{azzalini2018symmetry}, where a symmetric probability density function (pdf) $f(\cdot)$ is multiplied by a so-called skewing function $\Pi(\cdot;\deltab)$ satisfying conditions similar to those that cumulative distribution functions (cdfs) typically fulfill and where $\deltab$ is a skewing parameter.
The resulting density $f(\cdot)\Pi(\cdot;\deltab)$  is asymmetric unless $\deltab=\zerob$, in which case one retrieves the original density $f(\cdot)$. 
A highly appealing property of this symmetry modulation lies in the fact that no normalizing constant needs to be calculated, as $f(\cdot)\Pi(\cdot;\deltab)$ directly integrates to 1 under these conditions. 
Famous representatives of symmetry modulation are the scalar skew-normal distribution of \cite{azzalini_class_1985}, the multivariate skew-normal of \cite{azzalini1996multivariate}, the skew-elliptical distributions of \cite{azzalini1999statistical}, the skew-$t$ distribution of \cite{azzalini2003distributions}, and the skew-symmetric family of \cite{wang2004skew}. Examples on other manifolds include the sine-skewing on the unit circle by \cite{abe2011sine}, the skew-rotationally-symmetric distributions on the unit sphere of \cite{ley_skew-rotationally-symmetric_2017}, the skewing on general manifolds by \cite{jupp_general_2016}, or the sine-skewing mechanism on the hypertorus by \cite{ameijeiras-alonso_sine-skewed_2022}.

Besides the ease of construction and good interpretability of symmetry modulated models, they furthermore enjoy appealing stochastic properties such as a simple random number generation mechanism. 
However, from an inferential viewpoint some of these models are infamous for suffering from a singular Fisher Information matrix in the vicinity of symmetry, meaning that the score vectors for location and skewness become co-linear when $\deltab = \zerob$.
This singularity prevents the construction of non-trivial tests for symmetry. 
More specifically, the one- and multi-dimensional skew-normal distribution is infamous for this issue, while the skew-$t$ is free from it. \cite{Hallin_singularity2012} and \cite{HallinLey2014} have solved the long-standing question of which Azzalini-type models suffer from singular Fisher Information matrices on the real line and in $\R^k$ by providing a complete characterization of singularity-inducing mismatches between $f(\cdot)$ and $\Pi(\cdot;\deltab)$. 

In this paper, we are interested in the hypertorus as support. Mathematically speaking, the $d$-dimensional hypertorus $\mathcal{T}_d = \left\{ \thetab \in [-\pi,\pi)^d \right\}$ is the Cartesian product of $d$ unit circles.
Important datasets from various domains are represented as points on the torus, such as wind direction measured at different times during the day \citep{johnson_measures_1977, kato_distribution_2009, shieh_inferences_2005}, directions of steepest descent before and after an earthquake \citep{rivest_decentred_1997},
and morphological data from human neurons \citep{puerta_regularized_2015}. 
A domain that has given rise
in the past two decades to a lot of such toroidal data is bioinformatics, where dihedral angles from proteins can be
viewed as data on the torus \citep{TWCM, boomsma_generative_2008, golden2017generative}. 
As mentioned before, symmetry modulation on the torus has been proposed by \cite{ameijeiras-alonso_sine-skewed_2022}, resulting in the so-called sine-skewed distributions. 
Starting from a symmetric density $\thetab\mapsto f_0(\thetab - \mub; \varthetab), \thetab\in\mathcal{T}_d,$ on the unit hypertorus of dimension $d$, with symmetry center (or location parameter) $\mub \in \mathcal{T}_d$ and $\varthetab$ the corresponding set of non-location parameters, sine-skewed densities are of the form  
\begin{align}\label{eq: sine_skewed_density}
    \thetab\mapsto f_{\mub, \lambdab}(\thetab; \varthetab) := f_0(\thetab - \mub; \varthetab)\left(1+\sum_{j=1}^{d} \lambda_j \sin (\theta_j-\mu_j)\right) \quad \mbox{subject to}\,\, 
    \sum_{j=1}^{d} |\lambda_j| \leq 1,
\end{align}
where  $\lambdab \in [-1,1]^d$ is the skewness parameter. 
Sine-skewed distributions on the torus play an important role in the modelling of protein structures and have been implemented by computer scientists in the deep probabilistic programming language Pyro. 
Prior structures for Bayesian inference for the case when $f_0$ is the Sine distribution of \cite{singh_probabilistic_2002} have been proposed in \cite{ronning_time_efficient_2021}, while efficient yet robust (in the sense of being valid under a broad class of distributions satisfying the null hypothesis) tests for symmetry, both about specified and unspecified center, have been proposed on the torus against  sine-skewed alternatives in \cite{symmetry_test}. 
However, as for their equivalents from Euclidean spaces $\R^k$, the sine-skewed distributions also suffer from singular Fisher Information matrices in the vicinity of symmetry. 
A complete characterization of which symmetric densities $f_0$ lead to singularity has been provided in \cite{ley_simple_2014} for $d=1$ and in \cite{schutte_singularity_2026} for any dimension, showcasing in particular that this occurs when $f_0$ is the circular von Mises distribution, the bivariate Cosine distribution \citep{mardia_protein_2007} and, more generally, a multivariate extension of the Cosine distribution \citep{mardia_multivariate_2008}. 
This is an unpleasant drawback of sine-skewed models, since the von Mises and Cosine models are among the most frequently used symmetric distributions on the unit circle $\mathcal{T}_1$ and unit torus $\mathcal{T}_2$.
Another disadvantage of sine-skewing lies in the condition $\sum_{i=1}^d \left| \lambda_i \right| \leq1$ as this restricts the parameter space and complicates numerical parameter estimation. These drawbacks represent an issue especially because of the strong need for good flexible models on the torus handling asymmetric data \citep{Mardia13}.

In the present paper, we therefore propose a new symmetry-modulating approach that overcomes these issues whilst keeping the appealing properties of sine-skewing. In particular, our new model, which we term \emph{arctan-skewed},  has a simple random number generation mechanism, achieves the same range of skewness as its sine-skewed antecedent without being subjected to the cumbersome parameter restriction condition, and is free from Fisher Information singularities for all known symmetric densities $f_0$ from the literature; see Section~\ref{sec: model} for details. We explain inferential properties in Section~\ref{sec: inference}, namely maximum likelihood estimation procedures and optimal tests for symmetry against our arctan-skewed alternatives.  
Through a real data analysis in Section~\ref{sec: real} we moreover show the superior fitting ability of our new model. 
We conclude the paper with final comments in Section~\ref{sec: final}.
The proofs of all results can be found in Section~\ref{sec: technical details}. 
The observed Fisher Information matrix and some simulations related to the tests of Section~\ref{sec: optimtest} are provided in the Supplementary Material.

\section{The arctan-skewed model and its main properties\label{sec: model}}

\subsection{Our proposal}\label{sec: ourprop}
As mentioned in the Introduction, our goal is to propose an alternative to sine-skewing which retains most of its attractive properties whilst overcoming its two main shortcomings. To this end, we start from the density~\eqref{eq: sine_skewed_density} and modify it accordingly. For the sake of simplicity, we will drop the parameter $\varthetab$ from the symmetric density and just write $f_0(\thetab-\mub)$. Our symmetric densities $f_0$ all belong to the class of functions
\begin{align}\label{eq: def: F}
    \mathcal{F} := \left\{\begin{array}{ll}
& f_0(\theta_1 + 2\pi k_1, \ldots, \theta_d + 2\pi k_d) = f_0(\theta_1, \ldots, \theta_d) \ \forall k_i\in\Z, i=1,\ldots,d,\\
f_0(\thetab): & f_0(\thetab) = f_0(-\thetab)\ \forall \thetab \in \mathcal{T}_d, \\
& \int_{\mathcal{T}_d} f_0(\thetab) {\mathrm d} \thetab = 1.
\end{array} \right\},
\end{align}
which ensures that we have symmetric and $2\pi$-periodic densities.

In order to lift the restriction $\sum_{j=1}^{d} |\lambda_j| \leq 1$, we introduce a bounded function acting on $\lambda_j \sin (\theta_j-\mu_j)$. We choose the arctan function which varies within $[-\pi/2,\pi/2]$, yielding 
$$
     f_0 (\thetab - \mub) \left( 1 + \frac{2}{\pi d} \sum_{i=1}^d \arctan (\lambda_i \sin (\theta_i - \mu_i)) \right).
$$
The univariate version of this proposal, named Cauchy-skewed circular distribution, was first mentioned in \cite{Abe2022}.
This construction however still suffers from the second drawback, the singular Fisher information matrix at $\lambdab=\zerob$ when $f_0$ is the Cosine density. 
Indeed, the score function for skewness in the above proposal is the same as for the sine-skewed model (see \cite{schutte_singularity_2026} for more details). 
Therefore we introduce a modification inside the arctan function which keeps it an odd function in $\theta_i-\mu_i$. 
This leads us to our new \textbf{arctan-skewed densities} of the form 
\begin{equation}\label{eq: sum_model}
    f(\thetab - \mub; \lambdab) := f_0 (\thetab - \mub) \left( 1 + \frac{2}{\pi d} \sum_{i=1}^d \arctan \left( \frac{\lambda_i \sin (\theta_i - \mu_i)}{1 + \sin^2 (\theta_i - \mu_i)} \right) \right).
\end{equation}
Of course other choices than dividing by $1 + \sin^2 (\theta_i - \mu_i)$ are possible, in the same way as the famous skew-t distribution of \cite{azzalini2003distributions} could have been built differently (see the discussion at the end of their Section 4.1); as we will see, our choice leads to desirable properties. 
It is  motivated by the fact that  the derivative of $\kappa\mapsto{\rm arctan}(\kappa\sin(\theta))$ is $\sin(\theta)/(1+\sin^2(\theta))$ at $\kappa=1$, hence it corresponds to the rate of change of the distance between sine-skewing and its symmetric counterpart.
Other choices of $\kappa$ would be possible; we choose the smallest integer one that does not oversimplify our model.
Multiplying the sum with $2\pi/d$  ensures that the resulting function is positive when the components of the skewness parameter $\lambdab\in\R^d$ go to plus and/or minus infinity (depending on the sign of $\sin (\theta_i - \mu_i)$ for each $i$). Thus, since $$\frac{2}{\pi d} \sum_{i=1}^d \arctan \left( \frac{\lambda_i \sin (\theta_i - \mu_i)}{1 + \sin^2 (\theta_i - \mu_i)} \right) \geq -1$$ and  $\int_{\mathcal{T}_d}f_0 (\thetab)\arctan \left( \lambda \sin \theta_i / \left( 1 + \sin^2 \theta_i \right) \right) \textrm{d} \thetab =0$ for all $i\in\{1,\ldots,d\}$ by symmetry arguments, we readily have that \eqref{eq: sum_model} is a valid density on $\mathcal{T}_d$. The normalizing constant of the new, skewed distribution $f(\cdot)$ is the same as $f_0(\cdot)$ and requires no complex calculations. This can also be seen through the expression of the cdf. For $\thetab \in \mathcal{T}_d$ we have
\begin{equation}
    F ( \thetab;\lambdab ) = F_0(\thetab ) + \frac{2}{\pi d} \sum_{i=1}^d \int_{-\boldsymbol{\pi}}^{\thetab} f_0 (\xb ) \arctan \left( \frac{\lambda_i \sin (x_i )}{1 + \sin^2 (x_i )} \right) \mathrm{d} \xb,
\end{equation}
where $F_0(\cdot)$ is the cdf corresponding to $f_0$. Using the fact that the functions $\arctan(\cdot)$ and $\sin(\cdot)$ are asymmetric around 0, it holds that
\begin{equation*}
    F ( \boldsymbol{\pi};\lambdab ) - F ( -\boldsymbol{\pi};\lambdab ) = F_0 ( \boldsymbol{\pi} ) - F_0 ( -\boldsymbol{\pi} ) = 1.
\end{equation*}

When $d=1$, hence on the circle, our proposal inscribes itself within the general skewing setting proposed by \cite{umbach_building_2009}, who considered skew densities of the form
\begin{equation*}
    f_\mu (\theta) = 2 f(\theta - \mu) G( w(\theta - \mu)),
\end{equation*}
where $f$ and $g$ are circular densities which are symmetric about $0$ and $G( \theta ) =  \int_{-\pi}^\theta g(\alpha) \mathrm{d} \alpha$. 
Also, $w$ is an odd  function with $|w(\theta)| \leq \pi$ and periodic.
Our arctan-skewed densities correspond to   $G(\cdot)$ as the cdf of the uniform distribution, meaning $G(\theta) = \left(\pi + \theta\right)/2 \pi$, and to $
    w(\theta) = 2\arctan \left( \lambda \sin \theta / \left( 1 + \sin^2 \theta \right) \right)$
for $\lambda \in \R$. 

In order to illustrate the effect of our arctan-skewing on symmetric distributions, we provide various contour plots in Figures~\ref{fig:contour_plots_1} and~\ref{fig:contour_plots_2} for the bivariate case. More precisely, Figure~\ref{fig:contour_plots_1} shows the original symmetric densities and their arctan-skewed counterparts for 
\begin{itemize}
\item[$\bullet$] the bivariate wrapped Cauchy (bwC) distribution \citep{kato_mobius_2015} with density
\begin{align}
    f_{\text{bwC}} (\thetab; \mub, \xib, \rho) 
    = c(\xib, \rho) & \Bigl \{  c_0(\xib, \rho) - c_1(\xib, \rho) \cos(\theta_1 - \mu_1) - c_2(\xib, \rho) \cos(\theta_2 - \mu_2)\nonumber\\
    &  - c_3(\xib, \rho) \cos(\theta_1 - \mu_1) \cos(\theta_2 - \mu_2) - c_4(\xib, \rho) \sin(\theta_1 - \mu_1) \sin(\theta_2 - \mu_2) \Bigr\}^{-1} \label{eq: bivariate_wrapped_cauchy}
\end{align}
where $\mub \in \mathcal{T}_2$ and 
$c(\xib, \rho)$, $c_0(\xib, \rho)$, $c_1(\xib, \rho)$, $c_2(\xib, \rho)$, $c_3(\xib, \rho)$, $c_4(\xib, \rho)$ are functions of 
$\xib \in [0,1)^2$  and  $\rho \in (-1,1)$ (see \cite{kato_mobius_2015} for details);
\item[$\bullet$] the Sine distribution \citep{singh_probabilistic_2002} with density
\begin{equation*}
    f_{\text{S}} (\thetab; \mub, \kappab, \beta) = C_{\text{S}} \exp \left\{ \kappa_1 \cos(\theta_1 - \mu_1) + \kappa_2 \cos(\theta_2 - \mu_2) + \beta \sin(\theta_1 - \mu_1)\sin(\theta_2 - \mu_2) \right\}
\end{equation*}
with appropriate yet complicated normalizing constant $C_{\text{S}}$, $\mub \in \mathcal{T}_2$, $\kappab \in \R_{\geq 0}^2$ and $\beta\in\R$;
\item[$\bullet$] the Cosine distribution \citep{mardia_protein_2007} with density
\begin{equation*}
    f_{\text{C}} (\thetab; \mub, \kappab, \beta) = C_{\text{C}} \exp \left\{ \kappa_1 \cos(\theta_1 - \mu_1) + \kappa_2 \cos(\theta_2 - \mu_2) - \beta \cos(\theta_1 - \mu_1 - \theta_2 + \mu_2) \right\}
\end{equation*}
with appropriate yet complicated normalizing constant $C_{\text{C}}$, $\mub \in \mathcal{T}_2$, $\kappab \in \R_{\geq 0}^2$ and $\beta\in\R$.
\end{itemize}
In Figure~\ref{fig:contour_plots_2} we delve more into the arctan-skewing effect on the bivariate wrapped Cauchy density~\eqref{eq: bivariate_wrapped_cauchy} by varying the skewness parameter. Both figures combined provide a visual aid to better grasp how arctan-skewing gradually turns  symmetric toroidal distributions into asymmetric ones.

\begin{figure}[htbp]
    \centering

    \begin{subfigure}{0.48\textwidth}
        \centering
        \includegraphics[height=5cm, width=\textwidth,
            trim=0cm 1cm 0cm 2cm, clip]{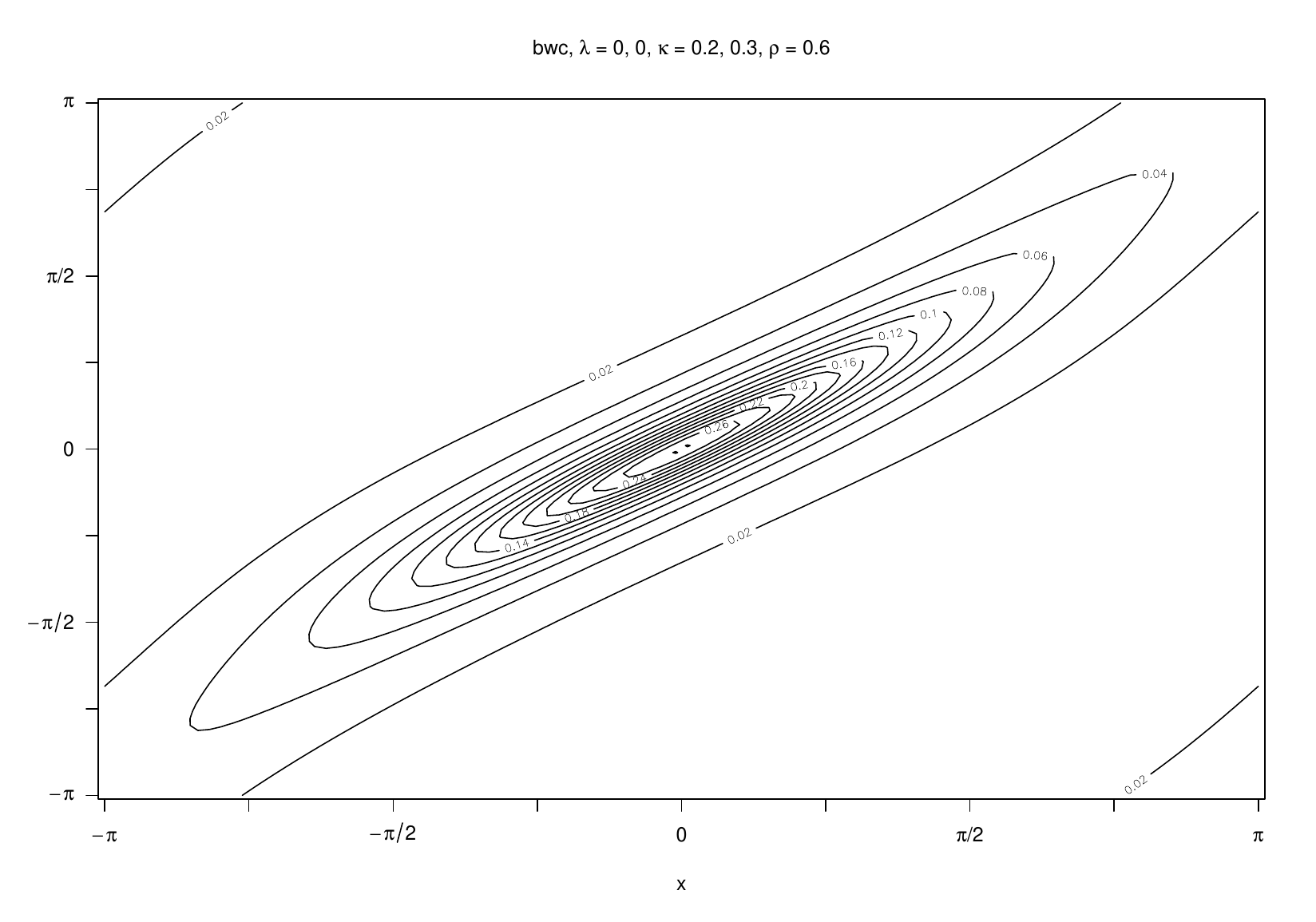}
        \caption{Bivariate wrapped Cauchy density with $\lambda_1, \lambda_2=0$.}
    \end{subfigure}
    \begin{subfigure}{0.48\textwidth}
        \centering
        \includegraphics[height=5cm, width=\textwidth,
            trim=0cm 1cm 0cm 2cm, clip]{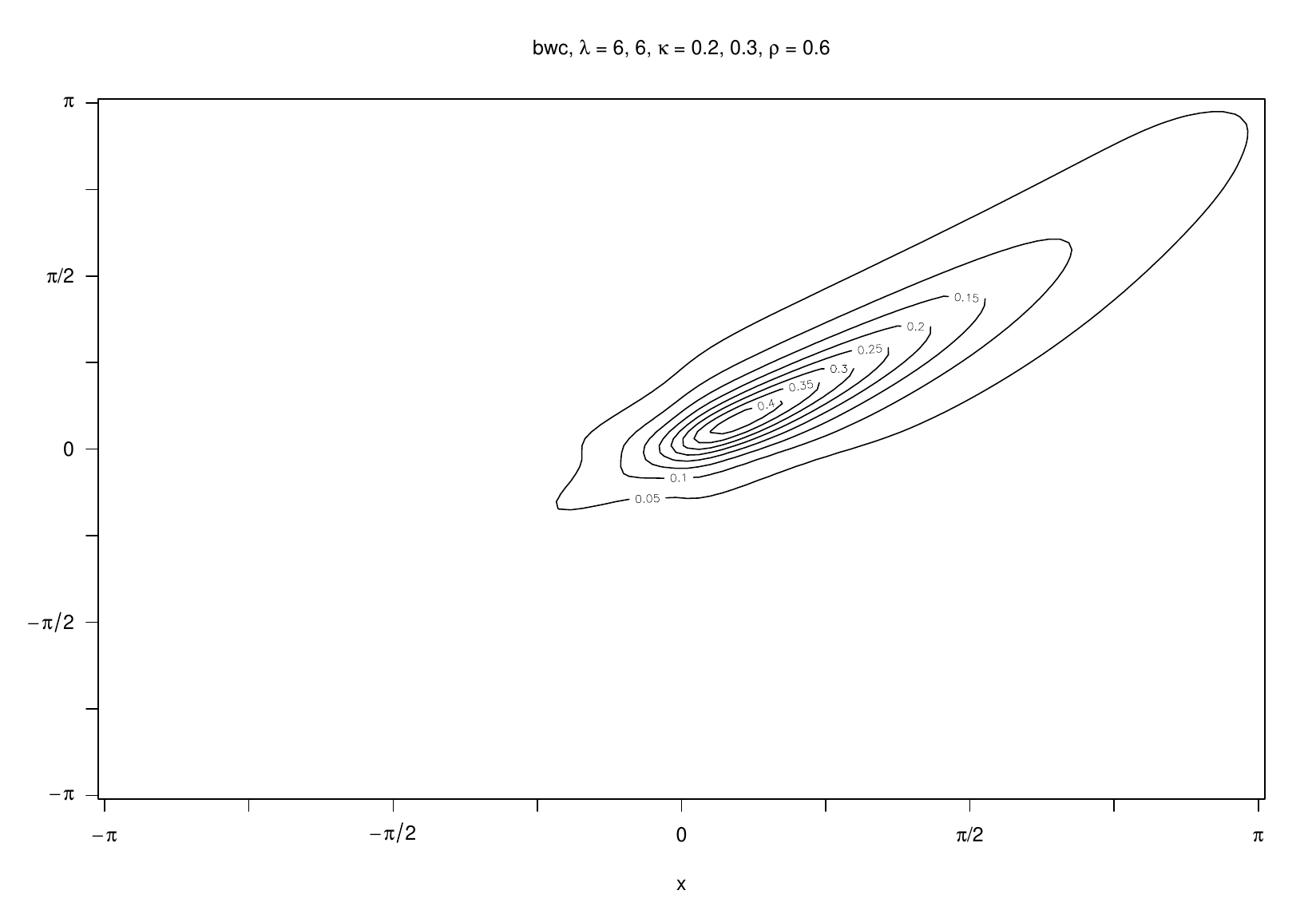}
        \caption{Bivariate wrapped Cauchy density with $\lambda_1, \lambda_2=6$.}
    \end{subfigure}

    \begin{subfigure}{0.48\textwidth}
        \centering
        \includegraphics[height=5cm, width=\textwidth,
            trim=0cm 1cm 0cm 2cm, clip]{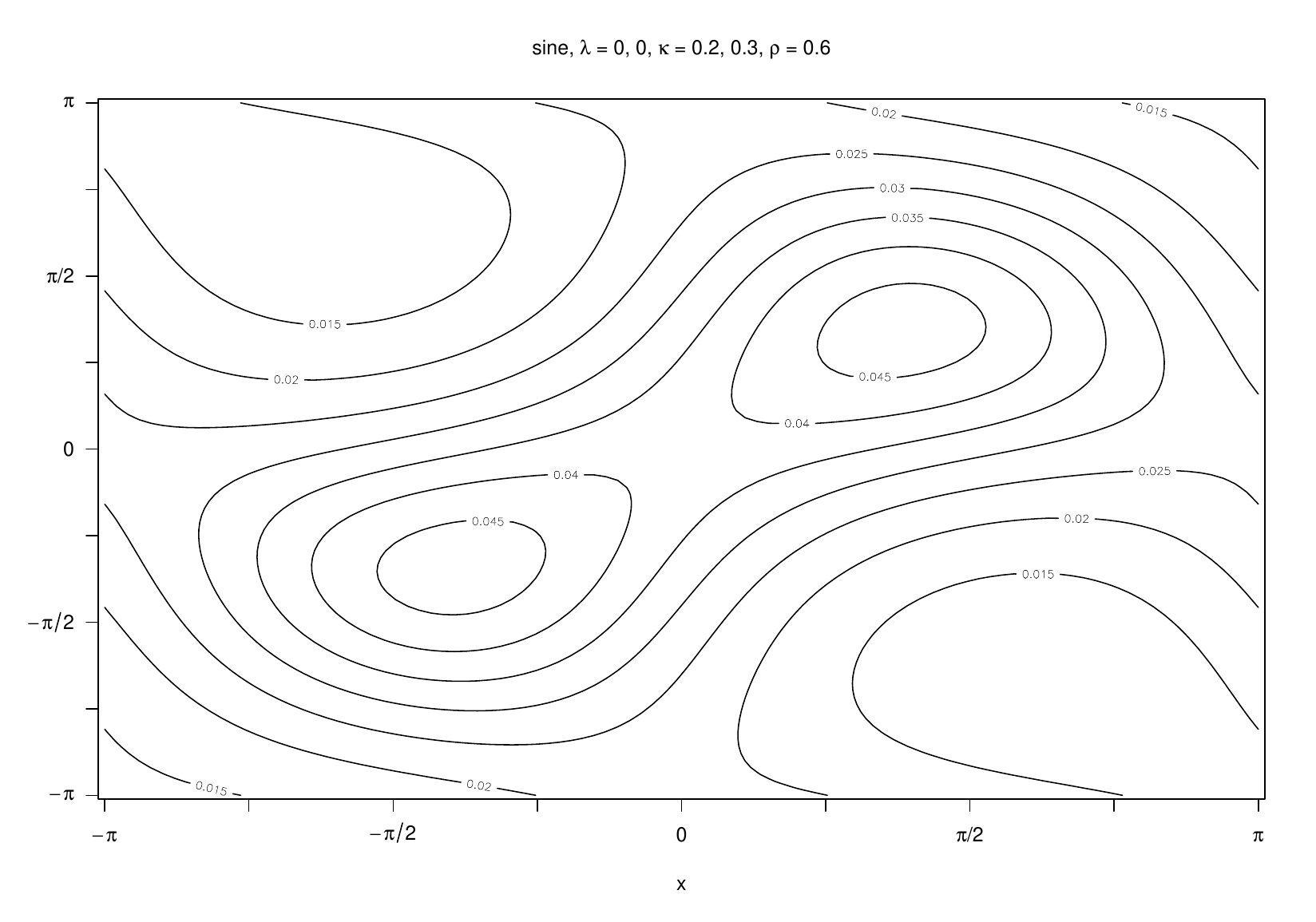}
        \caption{Sine density with $\lambda_1, \lambda_2=0$.}
    \end{subfigure}
    \begin{subfigure}{0.48\textwidth}
        \centering
        \includegraphics[height=5cm, width=\textwidth,
            trim=0cm 1cm 0cm 2cm, clip]{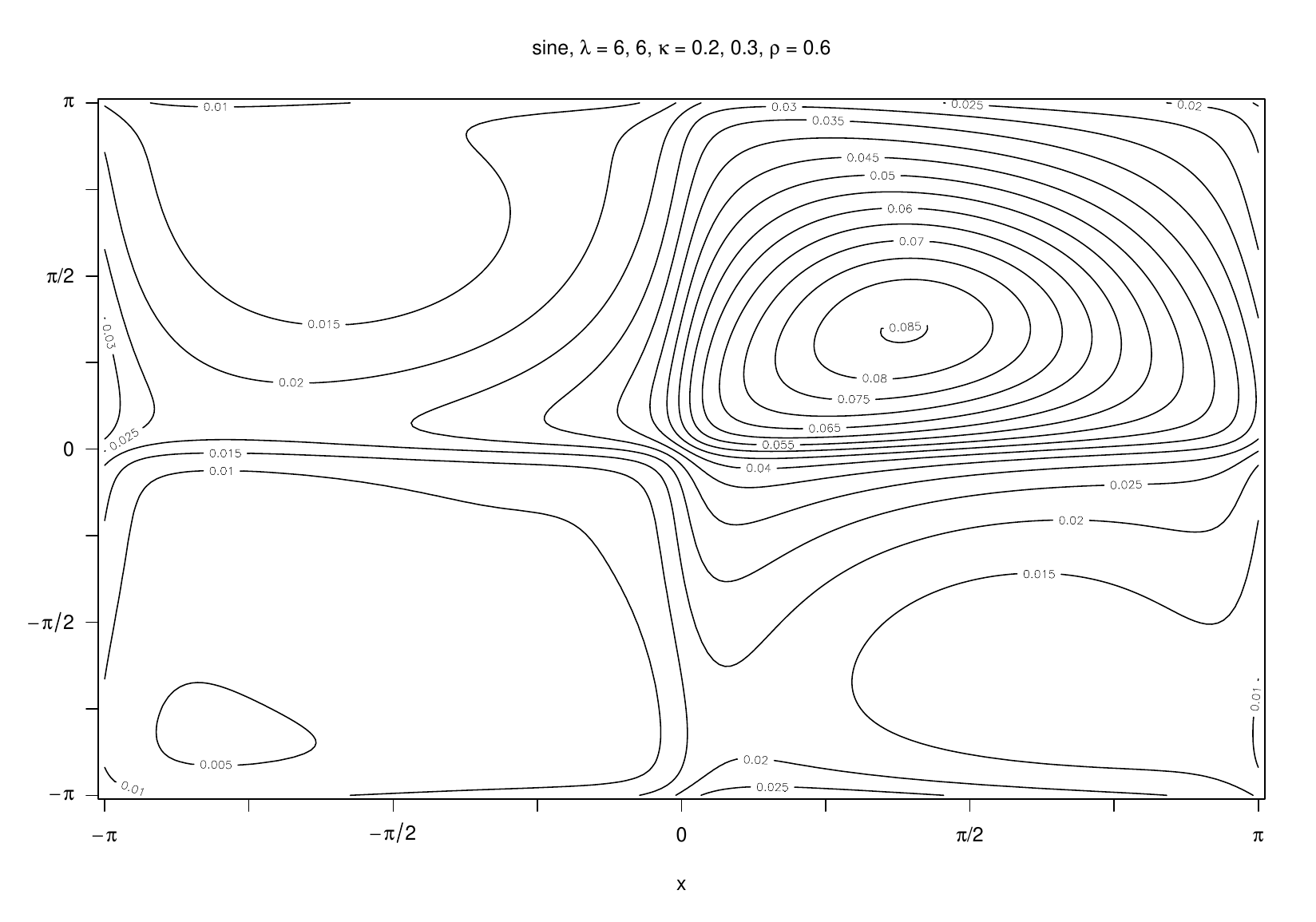}
        \caption{Sine density with $\lambda_1, \lambda_2=6$.}
    \end{subfigure}

    \begin{subfigure}{0.48\textwidth}
        \centering
        \includegraphics[height=5cm, width=\textwidth,
            trim=0cm 1cm 0cm 2cm, clip]{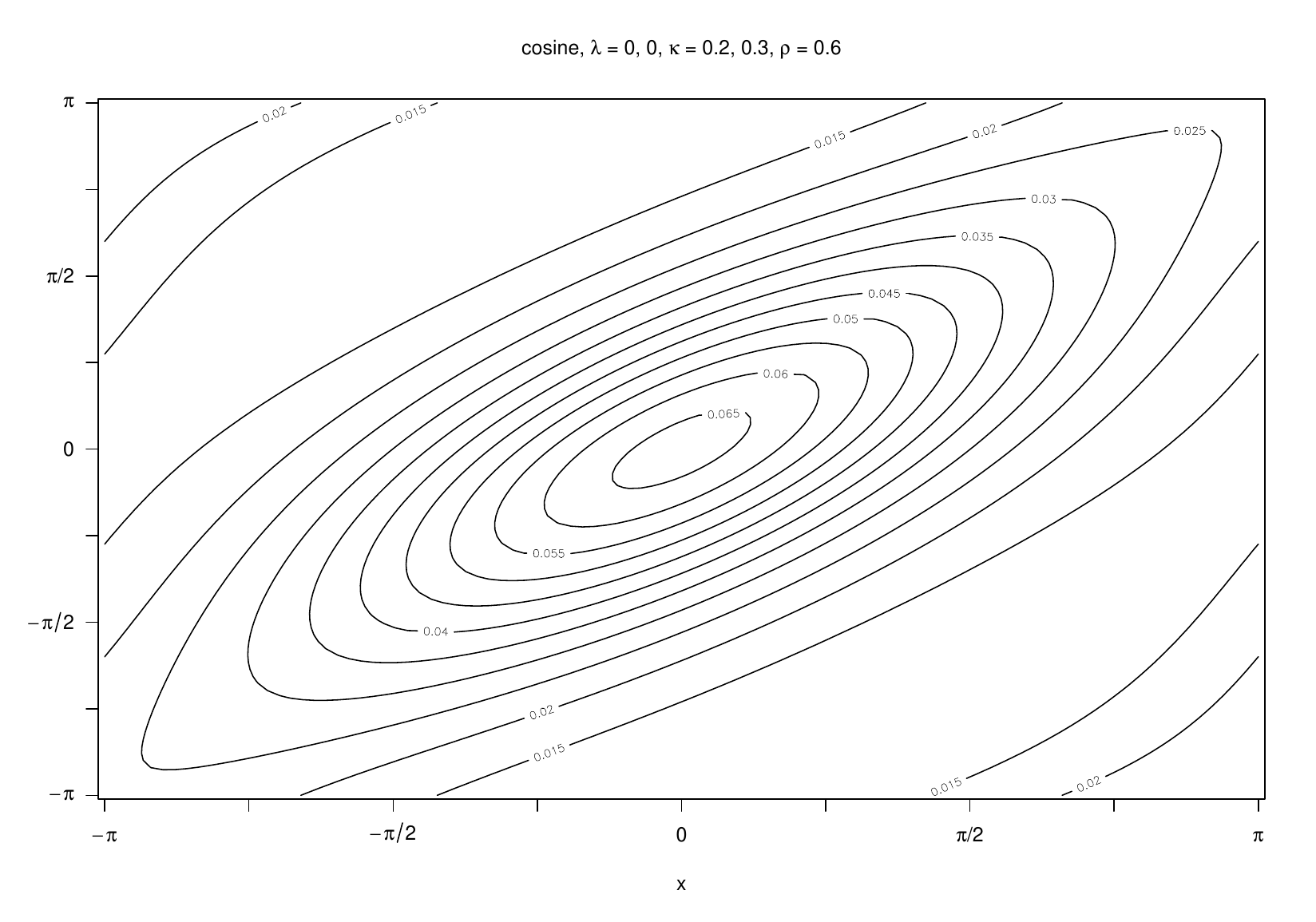}
        \caption{Cosine density with $\lambda_1, \lambda_2=0$.}
    \end{subfigure}
    \begin{subfigure}{0.48\textwidth}
        \centering
        \includegraphics[height=5cm, width=\textwidth,
            trim=0cm 1cm 0cm 2cm, clip]{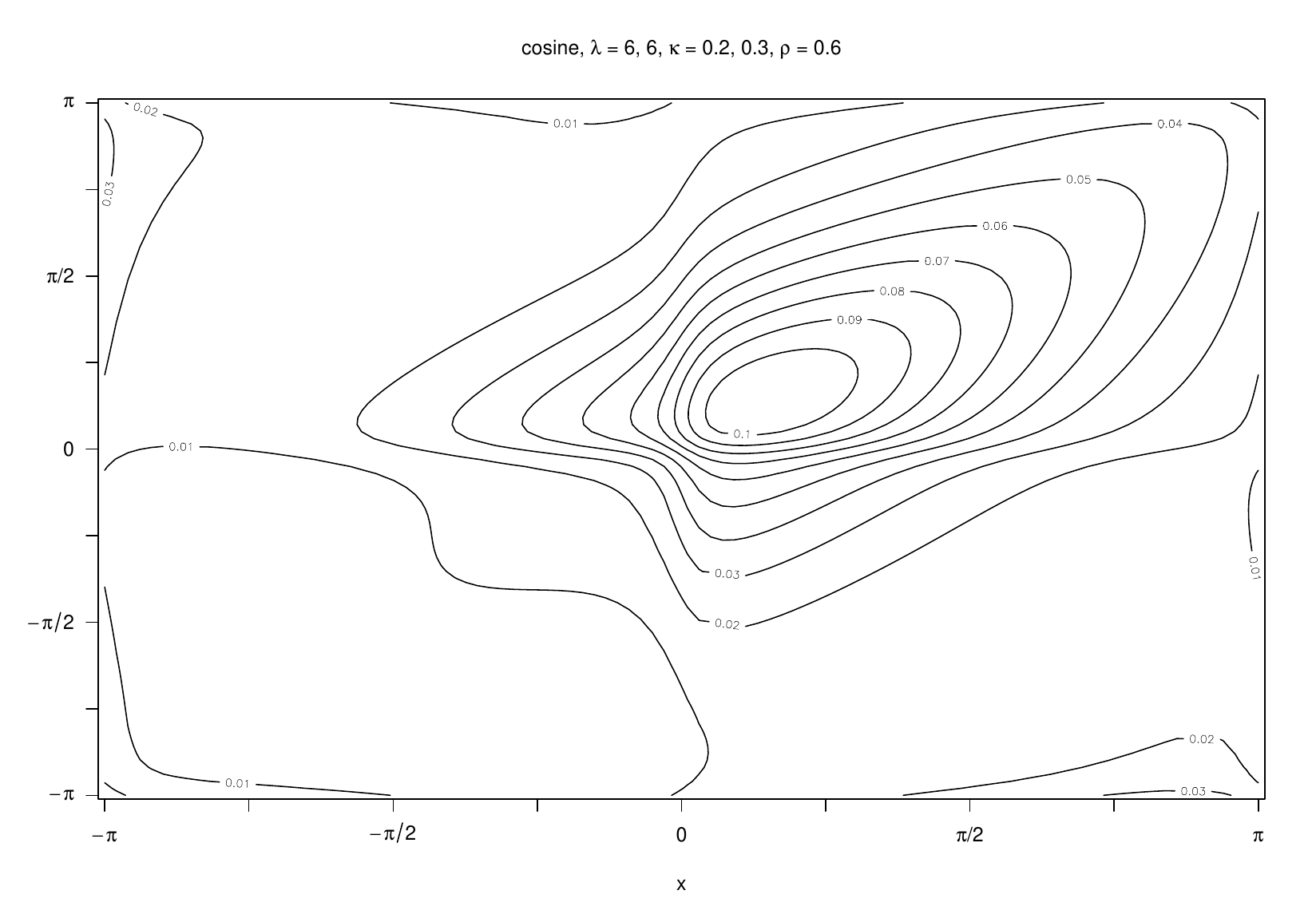}
        \caption{Cosine density with $\lambda_1, \lambda_2=6$.}
    \end{subfigure}

    \caption{Contour plots of arctan-skewed bivariate wrapped Cauchy, Sine and Cosine density with parameters $\xi_1 = \kappa_1 = 0.2$, $\xi_2 = \kappa_2 = 0.3$, $\rho = 0.6$ and skewness parameters $\lambda_1, \lambda_2$ as indicated under each plot.}
    \label{fig:contour_plots_1}
\end{figure}

\begin{figure}[htbp]
    \centering

    \begin{subfigure}{0.48\textwidth}
        \centering
        \includegraphics[height=5cm, width=\textwidth, trim=0cm 1cm 0cm 2cm, clip]{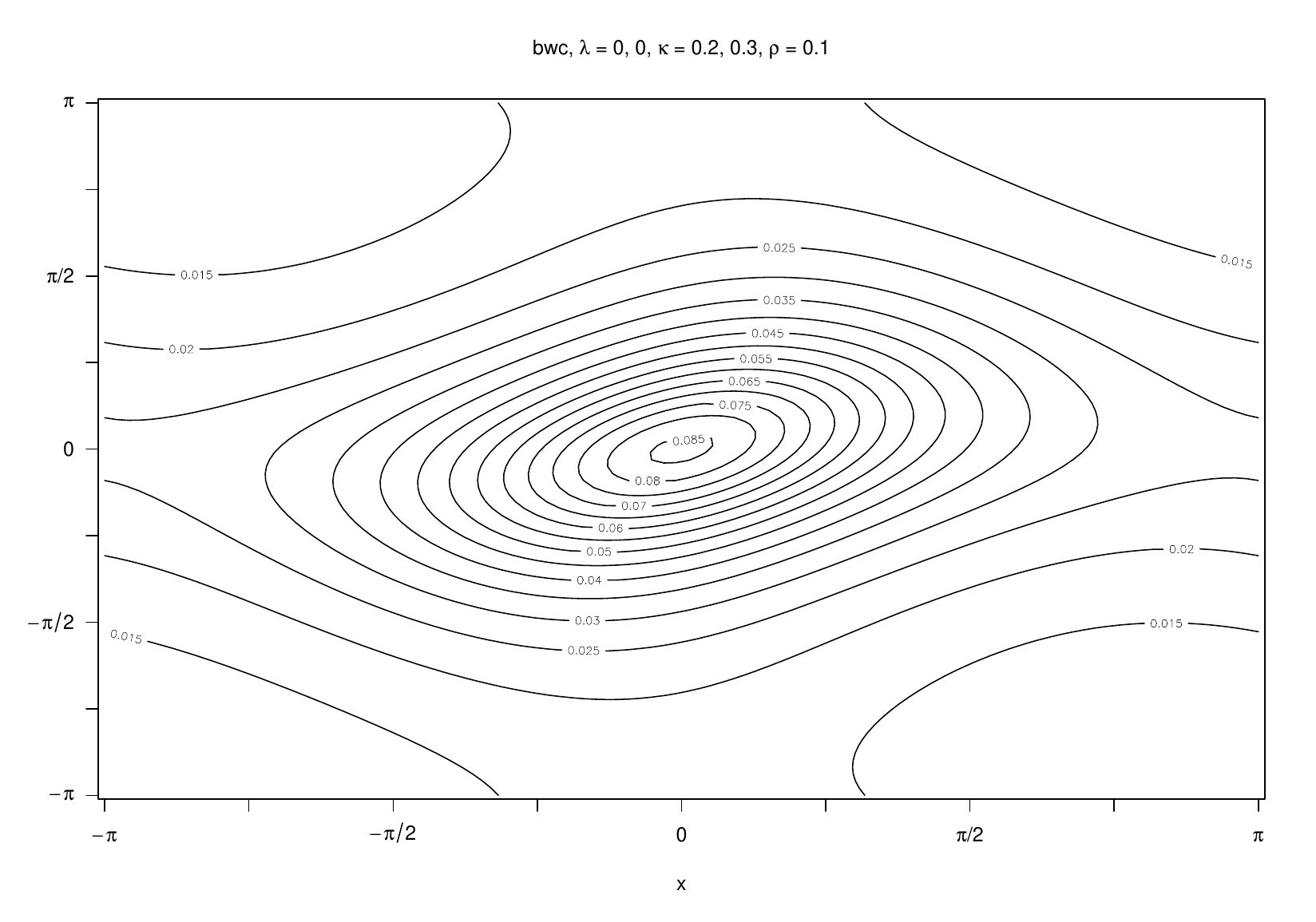}
        \caption{$\lambda_1 = \lambda_2 = 0$}
    \end{subfigure}
    \begin{subfigure}{0.48\textwidth}
        \centering
        \includegraphics[height=5cm, width=\textwidth, trim=0cm 1cm 0cm 2cm, clip]{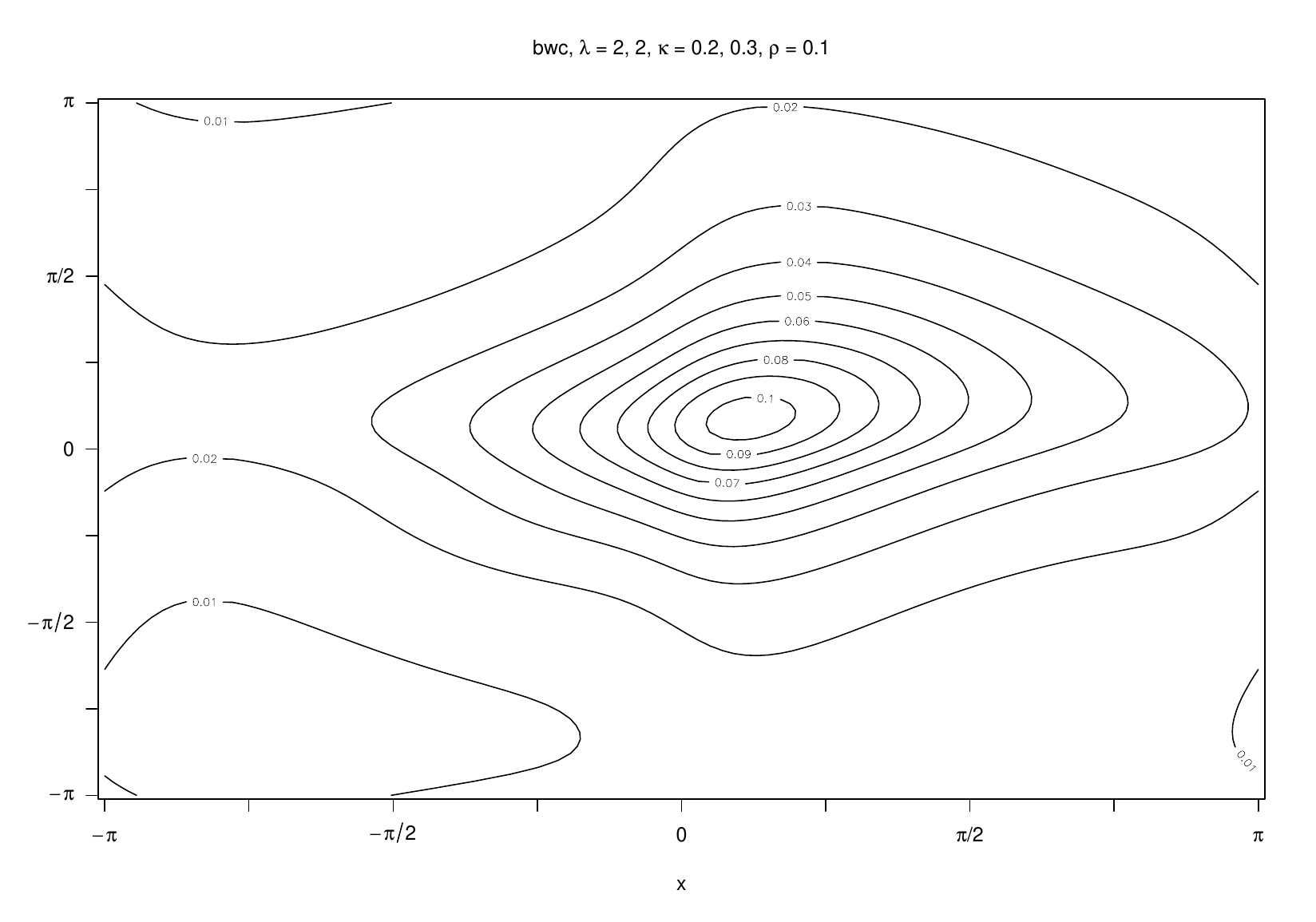}
        \caption{$\lambda_1 = \lambda_2 = 2$}
    \end{subfigure}

    \begin{subfigure}{0.48\textwidth}
        \centering
        \includegraphics[height=5cm, width=\textwidth, trim=0cm 1cm 0cm 2cm, clip]{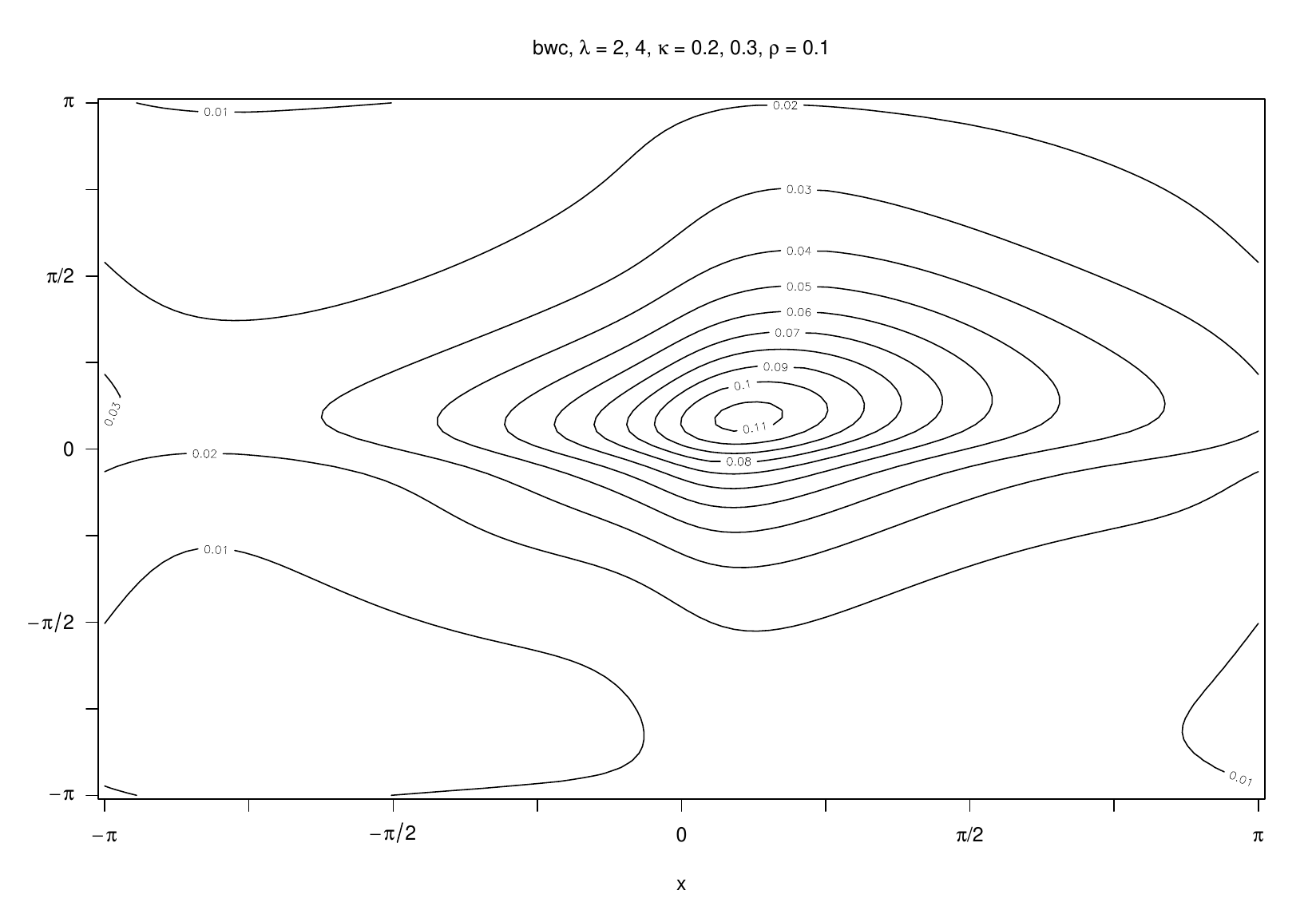}
        \caption{$\lambda_1 = 2$, $\lambda_2 = 4$}
    \end{subfigure}
    \begin{subfigure}{0.48\textwidth}
        \centering
        \includegraphics[height=5cm, width=\textwidth, trim=0cm 1cm 0cm 2cm, clip]{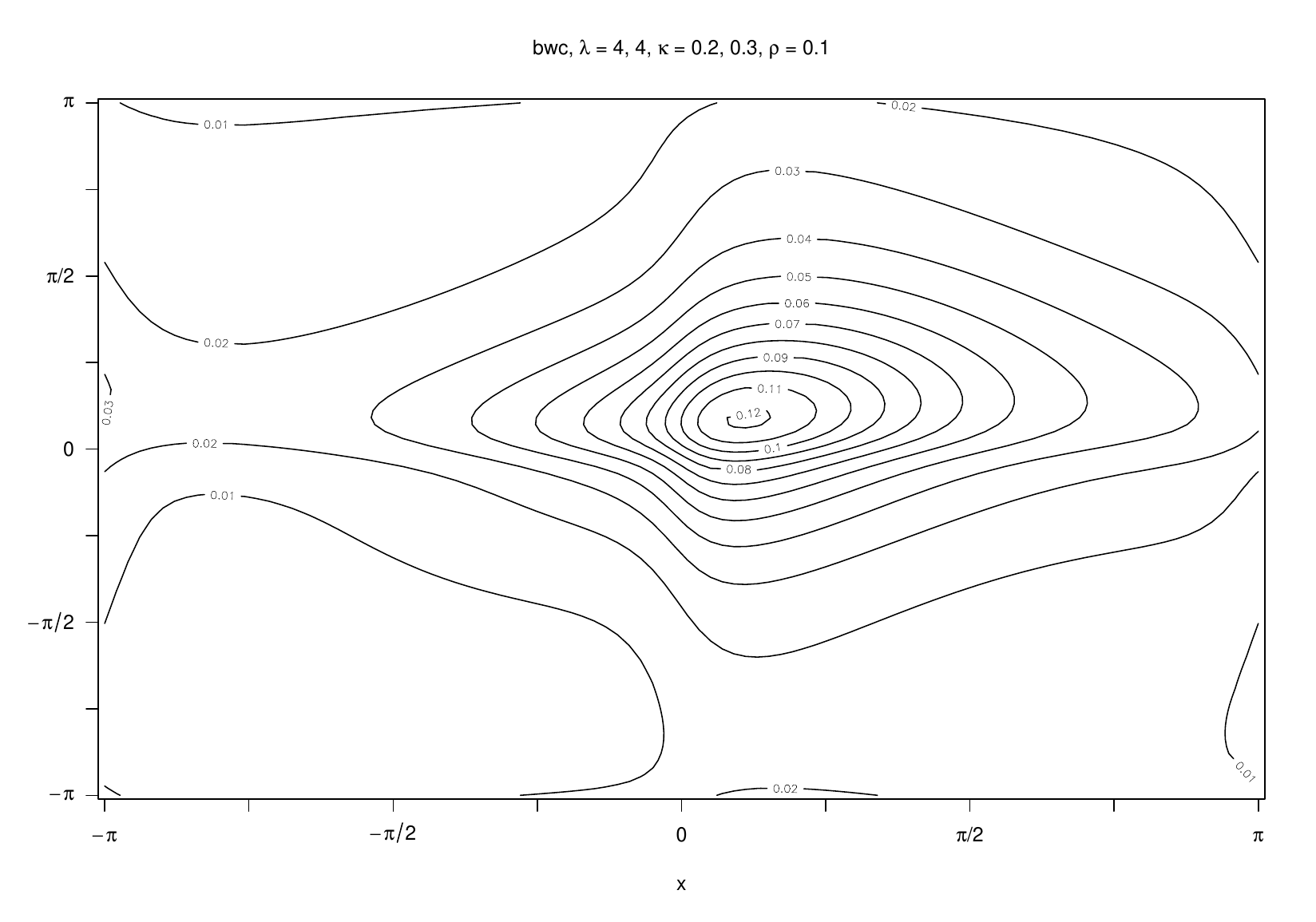}
        \caption{$\lambda_1 = \lambda_2 = 4$}
    \end{subfigure}
    \begin{subfigure}{0.48\textwidth}
        \centering
        \includegraphics[height=5cm, width=\textwidth, trim=0cm 1cm 0cm 2cm, clip]{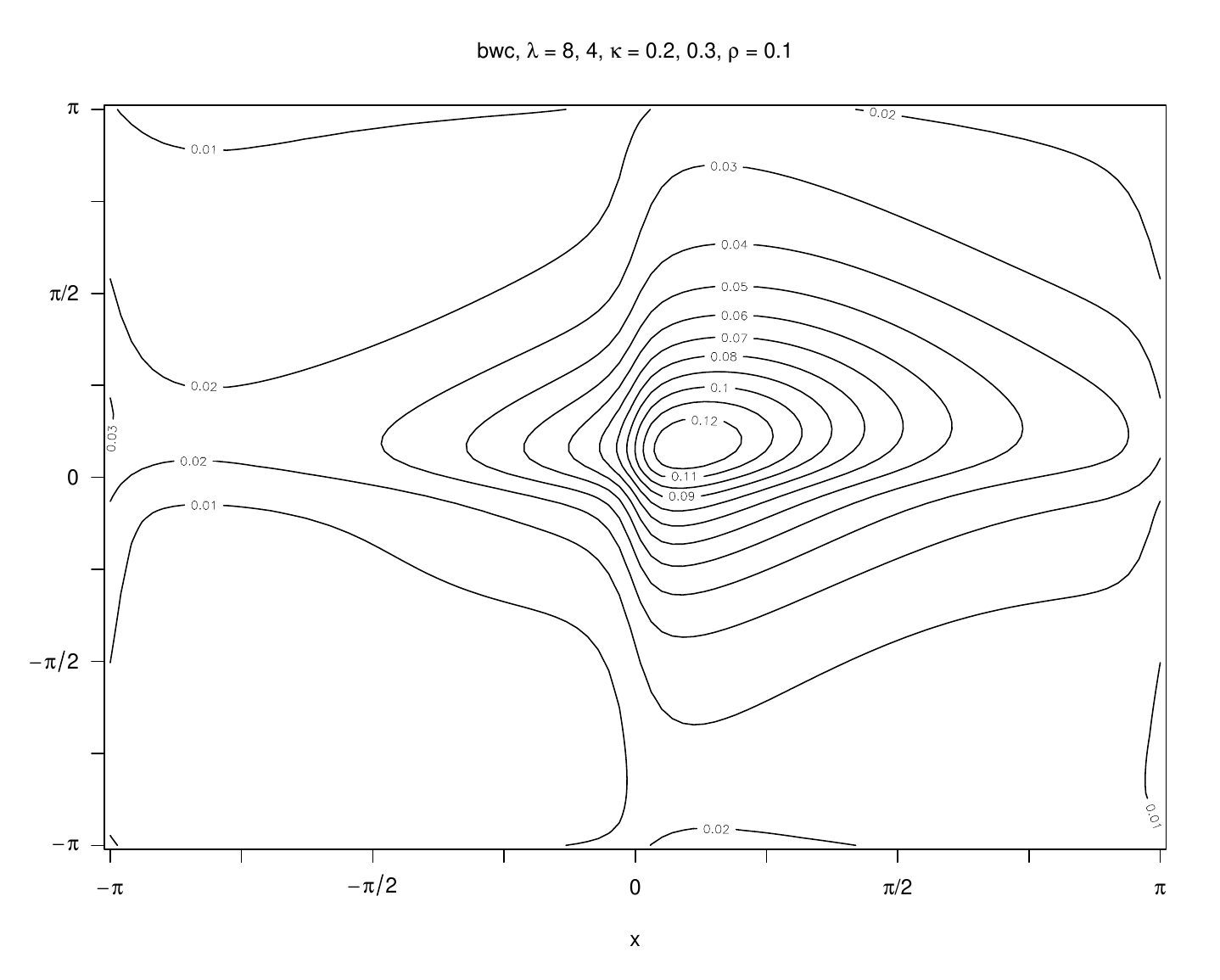}
        \caption{$\lambda_1 = 8$, $\lambda_2 = 4$}
    \end{subfigure}
    \begin{subfigure}{0.48\textwidth}
        \centering
        \includegraphics[height=5cm, width=\textwidth, trim=0cm 1cm 0cm 2cm, clip]{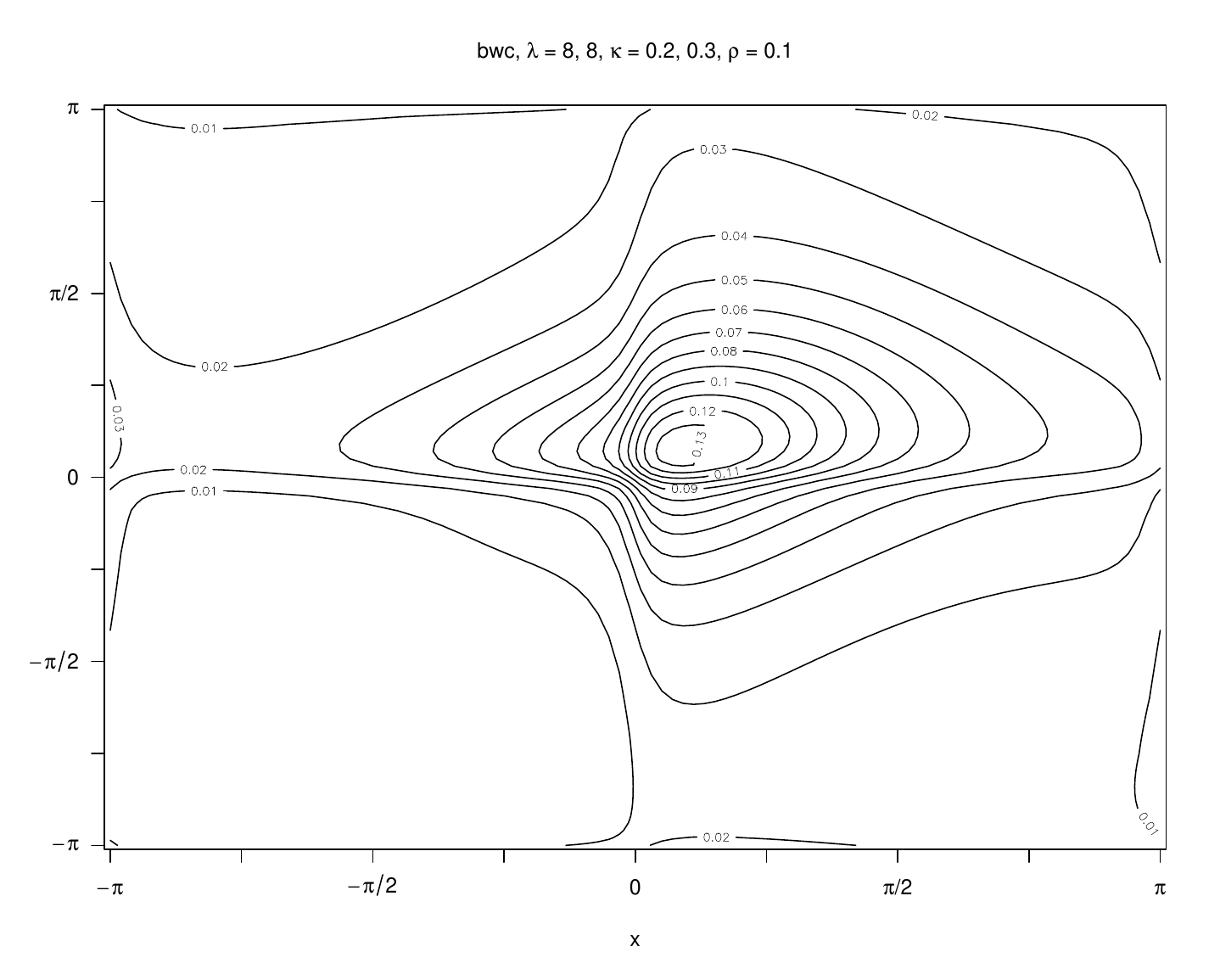}
        \caption{$\lambda_1 = \lambda_2 = 8$}
    \end{subfigure}

    \caption{Contour plots of arctan-skewed bivariate wrapped Cauchy densities with parameters $\xi_1 = 0.2$, $\xi_2 = 0.3$, $\rho = 0.1$ and skewness parameters $\lambda_1, \lambda_2$ as indicated under each plot.}
    \label{fig:contour_plots_2}
\end{figure}

In the rest of this section, we will investigate the main properties of our new model such as marginal and conditional distributions (Section~\ref{sec: margcond}), moment expressions (Section~\ref{sec: mom}) and ensuing measures for  dependence (Section~\ref{sec: depmeas}), as well as random number generation (Section~\ref{sec: randomgen}). However, before delving into these properties, we will first show in Section~\ref{sec: singularity} that our new model is indeed singularity-free for all known symmetric densities $f_0$ in the literature and explain why we did not opt for another possible arctan-skewing scheme in Section~\ref{sec: altprop}.

\subsection{Fisher Information singularity investigation\label{sec: singularity}}

In Theorem~\ref{thm: singularity} below, we give a complete characterization of which densities $f_0$ lead to a singular Fisher Information matrix. Before providing the statement of the theorem, we need to define the function
    \begin{equation} \label{eq: solution}
    h_0( \thetab {- \mub}) := f_0(\thetab {- \mub}) \prod^d_{i=1} \left( \frac{\sec ^2\left(\frac{\theta_i - \mu_i}{2}\right)-2 \sqrt{2}+2}{\sec ^2\left(\frac{\theta_i - \mu_i}{2}\right)+2 \sqrt{2}+2} \right)^{\delta_i}
    \end{equation}
for $\deltab = (\delta_1, \dots,\delta_d)^\intercal \in \R^d \setminus \{ \zerob \}$. 

\begin{theorem}\label{thm: singularity}
 Suppose that the mapping $\thetab \mapsto f_0(\thetab)$ is differentiable on $\mathcal{T}_d$.
The Fisher Information matrix of the arctan-skewed version of $f_0(\thetab {- \mub})$ in the vicinity of symmetry is singular if and only if there exists a vector $\alphab = \left( \alpha_1, \ldots, \alpha_d \right)^\intercal\in\R^d$ with at least one $\alpha_i \neq 0$, $i\in\{1,\ldots,d\}$, 
such that $h_0(\thetab {- \mub})$ defined in \eqref{eq: solution} 
satisfies
\begin{equation*}
    h_0(\thetab {- \mub} + t\alphab) = h_0(\thetab {- \mub})
\end{equation*}
for all $t\in \mathbb{R}$ and $\thetab \in [-\pi,\pi)^d $.
\end{theorem}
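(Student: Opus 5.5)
The plan is to reduce singularity to an exact linear dependence between the location and skewness scores at $\lambdab=\zerob$, and then integrate that dependence along a straight line. At $\lambdab=\zerob$ the log-density of \eqref{eq: sum_model} has two kinds of scores, written in terms of $\xb=\thetab-\mub$. The location score is $\partial_{\mu_i}\log f = -\partial_i \log f_0(\xb)$. The skewness score is $\partial_{\lambda_i}\log f|_{\lambdab=\zerob} = \frac{2}{\pi d}\, s(x_i)$ with $s(x):=\sin x/(1+\sin^2 x)$; this uses that $\frac{\mathrm d}{\mathrm d\lambda}\arctan(\lambda u)|_{\lambda=0}=u$. The Fisher Information matrix is the covariance of the score vector, so it is singular if and only if there is $(\boldsymbol a,\boldsymbol b)\neq\zerob$ such that $\sum_i a_i\partial_i\log f_0(\xb)=\sum_i b_i s(x_i)$ for almost every $\xb$. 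Here the factor $2/(\pi d)$ has been absorbed into $\boldsymbol b$, and by continuity the identity then holds everywhere.

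\textbf{Step 1: $\boldsymbol a\neq\zerob$.} If $\boldsymbol a=\zerob$, then $\sum_i b_i s(x_i)\equiv 0$. Each summand depends on a different coordinate, and $s\not\equiv 0$, so this forces $\boldsymbol b=\zerob$. Hence we set $\alphab:=\boldsymbol a\neq \zerob$, and the left-hand side of the dependence is exactly the directional derivative $\frac{\mathrm d}{\mathrm dt}\log f_0(\xb+t\alphab)|_{t=0}$.

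\textbf{Step 2: an explicit antiderivative of $s$.} The substitution $u=\cos x$ gives $\int s(x)\,\mathrm dx=-\int \frac{\mathrm du}{2-u^2}=-\frac{1}{2\sqrt2}\log\frac{\sqrt2+\cos x}{\sqrt2-\cos x}$. Writing $\cos x = 2\cos^2(x/2)-1=\frac{2}{\sec^2(x/2)}-1$ and simplifying should identify the ratio $G(x):=\frac{\sec^2(x/2)-2\sqrt2+2}{\sec^2(x/2)+2\sqrt2+2}$ appearing in \eqref{eq: solution} with this quotient, up to an inversion and a positive constant. We then obtain $\frac{\mathrm d}{\mathrm dx}\log G(x)=c\, s(x)$ for an explicit constant $c\neq0$ (of size $2\sqrt2$). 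This is a routine but fiddly computation, and I would verify it first.

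\textbf{Step 3: equivalence with invariance of $h_0$.} With $h_0=f_0\prod_i G(x_i)^{\delta_i}$, the directional derivative along $\alphab$ is
\[
\tfrac{\mathrm d}{\mathrm dt}\log h_0(\xb+t\alphab)\big|_{t=0}=\sum_i\alpha_i\partial_i\log f_0(\xb)+c\sum_i\delta_i\alpha_i s(x_i).
\]
Suppose first that $b_i=0$ whenever $\alpha_i=0$. Then choosing $\delta_i=-b_i/(c\alpha_i)$ when $\alpha_i\ne0$ and $\delta_i$ arbitrary otherwise (taking it nonzero, so that $\deltab\neq\zerob$) makes this derivative vanish identically. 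Integrating in $t$ then gives $h_0(\xb+t\alphab)=h_0(\xb)$ for all $t$. Conversely, invariance of $h_0$ along $\alphab$ and differentiation at $t=0$ give back the linear dependence with $b_i=-c\delta_i\alpha_i$. Since $\alphab\neq\zerob$, the resulting score combination is nontrivial, so the Fisher Information matrix is singular.

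\textbf{Main obstacle: coordinates with $\alpha_i=0$ but $b_i\neq0$.} These must be ruled out. In that case the dependence reads $\frac{\mathrm d}{\mathrm dt}\log \tilde h(\xb+t\alphab)=\sum_{i:\alpha_i=0}b_i s(x_i)$, where $\tilde h$ absorbs the coordinates with $\alpha_i\neq0$ as above. Along the line $t\mapsto \xb+t\alphab$ the coordinates $x_i$ with $\alpha_i=0$ stay fixed, so the right-hand side is a constant $C(\xb)$. Integrating gives $\log\tilde h(\xb+t\alphab)=\log\tilde h(\xb)+tC(\xb)$. The left-hand side stays bounded in $t$, by periodicity and continuity of $f_0$, provided $f_0$ is positive; this positivity is the implicit assumption that makes the scores well defined. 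Hence $C(\xb)=0$ for all $\xb$, and therefore $b_i=0$ for all such $i$, by the linear independence argument of Step 1. This justifies the reduction used in Step 3 and completes both directions.
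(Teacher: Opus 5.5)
Your proof is correct and is essentially the paper's argument: linear dependence of the scores at $\lambdab=\zerob$, the identity $\frac{\mathrm d}{\mathrm dx}\log G(x)=2\sqrt2\,\sin x/(1+\sin^2 x)$, the choice $\delta_i=-\beta_i/(\sqrt2\pi d\alpha_i)$, and constancy of $h_0$ along the lines $\thetab_0+t\alphab$, which the paper phrases as the method of characteristics. Your boundedness argument ruling out $\alpha_i=0\neq b_i$ fills a step the paper skips: the paper simply assigns $\delta_i=-\beta_i/(\sqrt2\pi d)$ in that case, which does not make its homogeneous equation equivalent to the original one. One caveat: $\deltab\neq\zerob$ cannot be arranged when every $\alpha_i\neq0$ and $\boldsymbol b=\zerob$; this is the degenerate case of a location-singular $f_0$, which the paper explicitly sets aside.
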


The proof is given in Section~\ref{sec: technical details}. Using Theorem~\ref{thm: singularity}, it holds that the arctan-skewed version of a symmetric base density $f_0(\cdot)$ has singularity issues if and only if it is of the form
\begin{equation*}
    f_0(\thetab-\mub) = h_0(\thetab-\mub) \prod^d_{i=1} \left( \frac{\sec ^2\left(\frac{\theta_i-\mu_i}{2}\right) + 2 \sqrt{2}+2}{\sec ^2\left(\frac{\theta_i-\mu_i}{2}\right) - 2 \sqrt{2}+2} \right)^{\delta_i}.
\end{equation*}
As no distributions of this form have been proposed in the literature, no known distributions combined with our arctan-skewing suffer from singularity of the Fisher Information matrix in the vicinity of symmetry, and hence our proposal overcomes the Fisher Information matrix singularity issue related to sine-skewing.
Note that setting $\deltab = \zerob$ corresponds to the symmetric density $f_0(\thetab-\mub)$ being singular, and thus we do not consider this case in the definition of~\eqref{eq: solution}.

\subsection{An alternative model: product arctan-skewing}\label{sec: altprop}

A natural alternative construction would be a product type model of the form
\begin{equation}\label{eq: product}
     f_0 (\thetab - \mub) \prod_{i=1}^d \left( 1 + \frac{2}{\pi} \arctan \left( \frac{\lambda_i \sin (\theta_i - \mu_i)}{1 + \sin^2 (\theta_i - \mu_i)} \right) \right).
\end{equation}
Such a product form could also exist for the sine-skewing approach but has never been discussed in the literature to the best of the authors' knowledge. However it has often been orally debated at conferences. Hence our discussion here  about our product model~\eqref{eq: product} will also shed light on a product sine-skewing option. 

It is clear that~\eqref{eq: product} is positive, however it does not integrate to 1 like our original proposal~\eqref{eq: sum_model} and a normalizing constant calculation is required. Moreover, random variate generation cannot be done in the same simple way as for our arctan-skewing (see Section~\ref{sec: randomgen}), likely requiring the use of acceptance-rejection algorithms. These are two clear disadvantages of the product form~\eqref{eq: product} compared to~\eqref{eq: sum_model}.  Regarding the Fisher Information singularity problem, immediate calculations yield that the score function of \eqref{eq: product} in the vicinity of symmetry corresponds to
\begin{equation} \label{eq: score_prod}
\left( 
         -\frac{\frac{\partial}{\partial \theta_1}f_0(\thetab-\mub)}{f_0(\thetab-\mub)},
        \dots,
        -\frac{\frac{\partial}{\partial \theta_d}f_0(\thetab-\mub)}{f_0(\thetab-\mub)},
        \frac{2}{\pi} \frac{\sin(\theta_{1}-\mu_1)}{ 1+ \sin^2(\theta_{1}-\mu_1)},
        \dots,
        \frac{2}{\pi} \frac{\sin(\theta_{d}-\mu_d)}{ 1 + \sin^2(\theta_{d}-\mu_d)} 
    \right)^\intercal.
\end{equation}
The score vector \eqref{eq: score_prod} is the same as \eqref{eq: score}, up to multiplying by $1/d$ the last $d$ terms.
This means that the result proved in Theorem~\ref{thm: singularity} also holds for the product model.

Summarizing, the product idea retains positive properties of our proposal~\eqref{eq: sum_model}, meaning the absence of any
Fisher Information singularity issues and  unrestricted skewness parameter space,  but suffers from the drawbacks of normalizing constant needing to be determined and a more involved random number generation procedure. For these reasons we opted for our proposal~\eqref{eq: sum_model}.

\subsection{Marginals and conditionals}\label{sec: margcond}

Let $\Theta = (\Theta_1, \ldots , \Theta_d)^\intercal$ follow a $d$-dimensional arctan-skewed toroidal distribution and let $m$ be a subset of $\{1, \ldots , d\}$ of size $M$ and $m^c$ its complementary. Assume without loss of generality that $\mub = 0$ and denote by $f_m$ and $f_{0;m}$  the marginal skewed and symmetric densities associated with the subvector $\Thetab_m$  of $\Thetab$ corresponding to the subset $m$ of indices. The marginal distribution of the subvector $\Thetab_m$ then is given in the following theorem, whose proof is given in Section~\ref{sec: technical details}.

\begin{theorem}\label{theo:margins}
Let $f_0\in\mathcal{F}$. If $f_0(\thetab_m, \thetab_{m^c}) = f_0(\thetab_m, -\thetab_{m^c})$, then
\begin{equation*}
    f_m(\thetab_m)=f_{0;m} (\thetab_m) \left( 1 + \frac{2}{\pi d} \sum_{i\in m} \arctan \left( \frac{\lambda_i \sin (\theta_i)}{1 + \sin^2 (\theta_i)} \right) \right)
\end{equation*}
for $\thetab_m\in\mathcal{T}_M$ and $\thetab_{m^c}\in\mathcal{T}_{d-M}$.
\end{theorem}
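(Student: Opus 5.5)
The plan is to compute the marginal density of $\Thetab_m$ directly, by integrating the arctan-skewed density \eqref{eq: sum_model} (with $\mub=\zerob$) over $\thetab_{m^c}\in\mathcal{T}_{d-M}$. First I will split the sum inside the skewing factor into the indices $i\in m$ and the indices $i\in m^c$:
\begin{equation*}
f_m(\thetab_m)=\int_{\mathcal{T}_{d-M}} f_0(\thetab_m,\thetab_{m^c})\,\mathrm{d}\thetab_{m^c}\Bigl(1+\tfrac{2}{\pi d}\textstyle\sum_{i\in m}\arctan\bigl(\tfrac{\lambda_i\sin\theta_i}{1+\sin^2\theta_i}\bigr)\Bigr)+\tfrac{2}{\pi d}\textstyle\sum_{i\in m^c}I_i(\thetab_m),
\end{equation*}
where $I_i(\thetab_m):=\int_{\mathcal{T}_{d-M}} f_0(\thetab_m,\thetab_{m^c})\arctan\bigl(\lambda_i\sin\theta_i/(1+\sin^2\theta_i)\bigr)\,\mathrm{d}\thetab_{m^c}$. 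For the terms with $i\in m$ the arctan factor depends only on $\thetab_m$, so it comes out of the integral. The remaining integral of $f_0$ over $\thetab_{m^c}$ is by definition the symmetric marginal $f_{0;m}(\thetab_m)$. This already produces the claimed expression, provided the leftover terms vanish.

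The key step is therefore to show $I_i(\thetab_m)=0$ for every $i\in m^c$ and every fixed $\thetab_m$. I will use the change of variables $\thetab_{m^c}\mapsto-\thetab_{m^c}$. This map is a measure-preserving bijection of $\mathcal{T}_{d-M}$; by $2\pi$-periodicity of $f_0$ (membership in $\mathcal{F}$) we may integrate over a symmetric fundamental domain. Under this map the hypothesis $f_0(\thetab_m,\thetab_{m^c})=f_0(\thetab_m,-\thetab_{m^c})$ leaves $f_0$ unchanged. The skewing factor, however, changes sign, because $\theta\mapsto\sin\theta/(1+\sin^2\theta)$ is odd and $\arctan$ is odd. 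Hence $I_i=-I_i$, so $I_i=0$.

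The only subtle point is this partial-reflection symmetry. Note that the global symmetry $f_0(\thetab)=f_0(-\thetab)$ alone would not suffice, since it would also flip $\thetab_m$; that is exactly why the hypothesis of the theorem is needed. Integrability poses no issue, since the arctan factor is bounded by $\pi/2$ and $f_0$ is integrable. Finally, I will note that the resulting $f_m$ is a valid density on $\mathcal{T}_M$, with the factor $2/(\pi d)$ rather than $2/(\pi M)$, which completes the argument.
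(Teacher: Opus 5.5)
Your proposal is correct and follows the same route as the paper. You split the skewing sum into indices in $m$ and $m^c$, pull the $m$-terms out of the integral to obtain $f_{0;m}(\thetab_m)$, and kill each $m^c$-term by reflecting $\thetab_{m^c}\mapsto-\thetab_{m^c}$, using the partial-symmetry hypothesis and the oddness of $\theta\mapsto\arctan(\lambda\sin\theta/(1+\sin^2\theta))$. Your global change of variables is a slightly cleaner way to write this step than the paper's version, which splits the domain only into $[0,\pi]^{d-M}$ and $[-\pi,0]^{d-M}$ and so, read literally, omits the mixed orthants when $d-M\geq 2$.
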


The marginal distribution thus is of a simple form and almost arctan-skewed (the ratio $\frac{2\pi}{d}$ would need to be $\frac{2\pi}{M}$). Let us consider a slightly more general version of our proposed transformation, namely
\begin{equation}\label{eq: sum_model2}
    f(\thetab - \mub;\lambdab) = S_{\{1,\ldots,d\}, d, \lambdab}(1) f_0 (\thetab - \mub) 
\end{equation}
where
\begin{equation*}
    S_{A, d, \lambdab}(c) = 1 + \frac{2}{\pi d c} \sum_{i \in A} \arctan \left( \frac{\lambda_i \sin (\theta_i - \mu_i)}{1 + \sin^2 (\theta_i - \mu_i)} \right)
\end{equation*}
for a set $A \subseteq \{1,\ldots,d\}$ and $c \geq 1$ can depend on the observations, meaning that $c = c(\theta_1,\ldots, \theta_d)$.
Setting $c=1$ is the largest value that ensures that the density is non-negative. For this family, the result from Theorem~\ref{theo:margins} reads
$$
f_m(\thetab_m)=f_{0;m} (\thetab_m) \left( 1 + \frac{2}{\pi d} \sum_{i\in m} \arctan \left( \frac{\lambda_i \sin (\theta_i)}{1 + \sin^2 (\theta_i)} \right) \right)= S_{m, d, \lambdab}(1) f_{0;m} (\thetab_m)
$$
and hence we obtain closure under marginalization.

We now consider  conditional distributions for~\eqref{eq: sum_model}.

\begin{theorem}\label{theo:conds}
Let $f_0\in\mathcal{F}$. If $f_0(\thetab_m, \thetab_{m^c}) = f_0(\thetab_m, -\thetab_{m^c})$, then
\begin{equation*}
    f_{m|m^c}(\thetab_m| \thetab_{m^c}) =f_{0;m|m^c} (\thetab_{m} | \thetab_{m^c}) \left(1 + \frac{\frac{2}{\pi d} \sum_{i \in m} \arctan \left( \frac{\lambda_i \sin (\theta_i)}{1 + \sin^2 (\theta_i)} \right)}{1 + \frac{2}{\pi d} \sum_{i\in m^c} \arctan \left( \frac{\lambda_i \sin (\theta_i)}{1 + \sin^2 (\theta_i)} \right)} \right)
\end{equation*}
for $\thetab_m\in\mathcal{T}_M$ and $\thetab_{m^c}\in\mathcal{T}_{d-M}$.
\end{theorem}
The proof is given in Section~\ref{sec: technical details}. 
As for the marginals, we obtain a simple tractable expression, but the model is not closed under conditioning. Adopting the same more general model~\eqref{eq: sum_model2}, one can see that
\begin{align*}
    f_{m|m^c}(\thetab_m| \thetab_{m^c}) 
    & = f_{0;m|m^c} (\thetab_{m} | \thetab_{m^c}) \left(1 + \frac{\frac{2}{\pi d} \sum_{i \in m} \arctan \left( \frac{\lambda_i \sin (\theta_i)}{1 + \sin^2 (\theta_i)} \right)}{1 + \frac{2}{\pi d} \sum_{i\in m^c} \arctan \left( \frac{\lambda_i \sin (\theta_i)}{1 + \sin^2 (\theta_i)} \right)} \right) \\
    & = f_{0;m|m^c} (\thetab_{m} | \thetab_{m^c}) \left(1 + \frac{2}{\pi d S_{m^c, d, \lambdab}(1)} \sum_{i \in m} \arctan \left( \frac{\lambda_i \sin (\theta_i)}{1 + \sin^2 (\theta_i)} \right) \right) \\
    & = S_{m, d, \lambdab} \left( S_{m^c, d, \lambdab}(1) \right) f_{0;m|m^c} (\thetab_{m} | \thetab_{m^c}),
\end{align*}
which thus corresponds to closure under both marginalization and conditioning.

\subsection{Trigonometric moments}\label{sec: mom}

We denote the characteristic function  by $\{\phi_{\pb}: p_j = 0, \pm 1, \pm 2,\ldots ; j=1,\ldots,d \}$ where 
\begin{equation} \label{eq: Fourier_theory}
    \phi_{\pb} 
    := \E \left[ e^{i \pb^\intercal \Thetab} \right] 
    = \E \left[ \cos( \pb^\intercal \Thetab) \right] + i \E \left[ \sin( \pb^\intercal \Thetab) \right]
    =: \alpha_{\pb} + i\beta_{\pb},
\end{equation}
and we adopt the notation $\alpha_{\pb}^0$ and $\beta_{\pb}^0$ for the  trigonometric moments of the original symmetric density $f_0$. The trigonometric moments for our arctan-skewed distribution are then given in the following result.
\begin{theorem}\label{Theor: moments}
Let $f_0\in\mathcal{F}$. The trigonometric moments of the arctan-skewed model correspond to 
\begin{align*}
    & \alpha_{\pb}=\alpha^0_{\pb}, \\
    & \beta_{\pb} =\frac{1}{\pi d} \sum_{i=1}^d \sum_{n=1}^\infty  b_n (\lambda_i) \left( \alpha^0_{\pb - n \eb_i} - \alpha^0_{\pb + n \eb_i} \right),
\end{align*}
where   $\eb_i = (0, \ldots, 0, 1, 0, \ldots, 0)^\intercal$ denotes the vector with all elements equal to 0 except position $i$ which is equal to $1$ and
\begin{equation}\label{Fourcoeff}
    b_n (\lambda_i) = \frac{1}{\pi} \int_{-\pi}^\pi \arctan \left( \frac{\lambda_i \sin (\theta_i)}{1 + \sin^2 (\theta_i)} \right) \sin \left( n \theta_i \right) \mathrm{d} \theta_i
    \end{equation}
is the Fourier sine series coefficient of the function $ \arctan \left( \lambda_i \sin (\theta_i)/\left(1 + \sin^2 (\theta_i)\right) \right)$.
\end{theorem}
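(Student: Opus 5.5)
We work with $\mub=\zerob$ throughout; this is harmless because the trigonometric moments are defined for the centred variable. Write $g_i(\theta_i):=\arctan\bigl(\lambda_i\sin\theta_i/(1+\sin^2\theta_i)\bigr)$, so that $f(\thetab)=f_0(\thetab)\bigl(1+\frac{2}{\pi d}\sum_i g_i(\theta_i)\bigr)$. We then compute $\E[\cos(\pb^\intercal\Thetab)]$ and $\E[\sin(\pb^\intercal\Thetab)]$ directly by integrating against this density, splitting each into the $f_0$ part and the perturbation part.

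For the cosine moments, the $f_0$ part gives $\alpha^0_{\pb}$. The perturbation part is $\frac{2}{\pi d}\sum_i\int f_0(\thetab)g_i(\theta_i)\cos(\pb^\intercal\thetab)\,\mathrm d\thetab$. Each of these integrals vanishes: $f_0(-\thetab)=f_0(\thetab)$ by membership in $\mathcal{F}$, $g_i$ is odd, and cosine is even, so the integrand is odd under $\thetab\mapsto-\thetab$, and the torus is invariant under this map by periodicity. Hence $\alpha_{\pb}=\alpha^0_{\pb}$.

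For the sine moments, the $f_0$ part gives $\beta^0_{\pb}$, which is zero by the same oddness argument. It remains to compute $\frac{2}{\pi d}\sum_i\int f_0(\thetab)g_i(\theta_i)\sin(\pb^\intercal\thetab)\,\mathrm d\thetab$. Since $g_i$ is odd, $2\pi$-periodic and smooth (the argument of the arctan is a smooth bounded function), it has an absolutely convergent Fourier sine series $g_i(\theta_i)=\sum_{n\ge1}b_n(\lambda_i)\sin(n\theta_i)$ with coefficients given by \eqref{Fourcoeff}. Substituting this series and using the product-to-sum identity
\[
\sin(n\theta_i)\sin(\pb^\intercal\thetab)=\tfrac12\bigl[\cos((\pb-n\eb_i)^\intercal\thetab)-\cos((\pb+n\eb_i)^\intercal\thetab)\bigr],
\]
each term integrates against $f_0$ to $\frac12(\alpha^0_{\pb-n\eb_i}-\alpha^0_{\pb+n\eb_i})$. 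Multiplying by $\frac{2}{\pi d}$ gives the stated prefactor $\frac{1}{\pi d}$.

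The only step needing care is interchanging the sum and the integral. Smoothness of $g_i$ makes $b_n(\lambda_i)$ decay faster than any polynomial; in fact geometrically, since $g_i$ is real-analytic. Together with $|\alpha^0_{\pb}|\le1$, this gives absolute summability, so dominated convergence (or simply uniform convergence of the Fourier series on the compact torus against the integrable $f_0$) justifies the exchange. This completes the argument.
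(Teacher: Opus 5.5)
Your proposal is correct and follows essentially the same route as the paper's proof. Both split off the $f_0$ part, kill the odd contributions using $f_0(-\thetab)=f_0(\thetab)$, expand the arctan factor in its Fourier sine series, and integrate term by term after the product-to-sum identity, under the same tacit convention $\mub=\zerob$. Your justification of the sum--integral interchange, via smoothness (indeed real-analyticity) of $g_i$ giving an absolutely and uniformly convergent sine series, is more careful than the paper's appeal to mere boundedness of the functions involved.
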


The proof of this theorem is given in Section~\ref{sec: technical details}. The moments can also be calculated numerically for a chosen symmetric density $f_0(\cdot)$.
In this case, it is more convenient to use  form \eqref{eq: cosine_moments} of the moment expression.

We conclude this section by providing the toroidal mean, concentration, skewness and kurtosis. 
The elements of the toroidal mean $\mub_m = \left( \mu_{m;1}, \ldots, \mu_{m;d} \right)^\intercal$ are given by
\begin{equation*}
    \mu_{m;s} = \atan 2 \left( \beta_{\eb_s}, \alpha_{\eb_s} \right)
    = \atan 2 \left( \frac{1}{\pi d} \sum_{i=1}^d \sum_{n=1}^\infty  b_n (\lambda_i) \left( \alpha^0_{\eb_s - n \eb_i} - \alpha^0_{\eb_s + n \eb_i} \right), \alpha^0_{\eb_s} \right)
\end{equation*}
for $s \in \{ 1,\ldots, d\}$, where $\atan 2(y, x)$ computes the principal value of the argument function applied to the complex number $x+iy$.
Note that when $\lambdab = \boldsymbol{0}$, the mean direction coincides with the mean of the symmetric density $f_0(\cdot)$.
The elements of the concentration vector $\Rb = \left( R_{1}, \ldots, R_{d} \right)^\intercal$ and the variance $\Vb = \left( V_1, \ldots, V_d \right)^\intercal$ are given by 
\begin{align*}
    R_{s} & = \sqrt{\alpha_{\eb_s}^2 + \beta_{\eb_s}^2}
    = \sqrt{\left( \alpha^0_{\eb_s} \right)^2 + \left( \frac{1}{\pi d} \sum_{i=1}^d \sum_{n=1}^\infty  b_n (\lambda_i) \left( \alpha^0_{\eb_s - n \eb_i} - \alpha^0_{\eb_s + n \eb_i} \right) \right)^2 }, \quad
    V_s  = 1 - R_{s}
\end{align*}
for $s \in \{1, \ldots, d\}$. The measures of circular skewness, $\boldsymbol{\mathcal{S}} = \left( \mathcal{S}_1, \ldots, \mathcal{S}_d \right)^\intercal$, and kurtosis, $\boldsymbol{\mathcal{K}} = \left( \mathcal{K}_1, \ldots, \mathcal{K}_d \right)^\intercal$, are given by \citep[Section 3.4.2]{mardia_directional_2000}
\begin{align*}
    \mathcal{S}_s = \frac{\Bar{\beta}_{2\eb_s}}{V_s^{3/2}}, \qquad 
    \mathcal{K}_s = \frac{\Bar{\alpha}_{2\eb_s} - R_{s}^4}{V_s^{2}},
\end{align*}
for $s\in\{1,\ldots,d\}$, where $\Bar{\alpha}_{\pb}$ and $\Bar{\beta}_{\pb}$ are the trigonometric moments about the mean direction $\mub_m$, given by $\Bar{\alpha}_{\pb} = \E[\cos( \pb^\intercal (\Thetab - \mub_m))]$ and $\Bar{\beta}_{\pb} = \E[\sin( \pb^\intercal (\Thetab - \mub_m))]$.
It holds that 
\begin{align*}
    \Bar{\alpha}_{2\eb_s} 
    & = \E \left[\cos \bigr( 2\eb_s^\intercal (\Thetab - \mub_m) \bigl) \right]
    = \E \left[ \cos \bigl( 2 (\Theta_s - \mu_{m;s}) \bigr) \right] 
    = \E \left[ \cos ( 2 \Theta_s) \cos (2\mu_{m;s}) + \sin ( 2 \Theta_s) \sin (2\mu_{m;s}) \right] \\
    & = \left( \cos^2 (\mu_{m;s}) - \sin^2 (\mu_{m;s}) \right) \E \left[ \cos ( 2 \Theta_s) \right] + 2 \sin (\mu_{m;s}) \cos (\mu_{m;s}) \E \left[ \sin ( 2 \Theta_s) \right] \\
    & = \frac{\alpha^2_{\eb_s} - \beta^2_{\eb_s}}{\alpha^2_{\eb_s} + \beta^2_{\eb_s}} \alpha_{2\eb_s} 
    + \frac{2 \alpha_{\eb_s} \beta_{\eb_s}}{\alpha^2_{\eb_s} + \beta^2_{\eb_s}} \beta_{2\eb_s} 
    = \frac{1}{R_{s}^2} \left[ \left( \alpha^2_{\eb_s} - \beta^2_{\eb_s} \right) \alpha_{2\eb_s} 
    + 2 \alpha_{\eb_s} \beta_{\eb_s} \beta_{2\eb_s} \right]
\end{align*}
and similarly,
\begin{align*}
    \Bar{\beta}_{2\eb_s} 
    & = \E \left[\sin \bigr( 2\eb_s^\intercal (\Thetab - \mub_m) \bigl) \right]
    = \E \left[ \sin \bigl( 2 (\Theta_s - \mu_{m;s}) \bigr) \right] 
    = \E \left[ \sin ( 2 \Theta_s) \cos (2\mu_{m;s}) - \cos ( 2 \Theta_s) \sin (2\mu_{m;s}) \right] \\
    & = \left( \cos^2 (\mu_{m;s}) - \sin^2 (\mu_{m;s}) \right) \E \left[ \sin ( 2 \Theta_s) \right] - 2 \sin (\mu_{m;s}) \cos (\mu_{m;s}) \E \left[ \cos ( 2 \Theta_s) \right] \\
    & = \frac{\alpha^2_{\eb_s} - \beta^2_{\eb_s}}{\alpha^2_{\eb_s} + \beta^2_{\eb_s}} \beta_{2\eb_s} 
    - \frac{2 \alpha_{\eb_s} \beta_{\eb_s}}{\alpha^2_{\eb_s} + \beta^2_{\eb_s}} \alpha_{2\eb_s} 
    = \frac{1}{R_{s}^2} \left[ \left( \alpha^2_{\eb_s} - \beta^2_{\eb_s} \right) \beta_{2\eb_s} 
    - 2 \alpha_{\eb_s} \beta_{\eb_s} \alpha_{2\eb_s} \right].
\end{align*}

Thanks to the expression for the measure of skewness, we can compare the range of skewness of our arctan-skewed model to that of the original sine-skewed model. Both models share in common that the cosine moment corresponds to the cosine moment of the original symmetric density $f_0$, hence the skewness measure $\mathcal{S}_s$ is only affected by the sine moments of each model. As $-\pi/2 \leq \arctan \left( \cdot \right) \leq \pi/2$, straightforward calculations for the arctan-skewed model lead to
\begin{align*}
    \beta_{\pb} 
    & = \frac{2}{\pi d} \sum_{i=1}^d \int_{\mathcal{T}_d} \sin(\pb^\intercal \thetab) f_0 (\thetab)  \arctan \left( \frac{\lambda_i \sin (\theta_i)}{1 + \sin^2 (\theta_i)} \right) \mathrm{d} \thetab \\
    & \leq \frac{2}{\pi d} \sum_{i=1}^d \int_{\mathcal{T}_d} \left| \sin(\pb^\intercal \thetab) \right| f_0 (\thetab) \frac{\pi}{2} \mathrm{d} \thetab 
    = \int_{\mathcal{T}_d} \left| \sin(\pb^\intercal \thetab) \right| f_0 (\thetab) \mathrm{d} \thetab =: \gamma^0_{\pb}
\end{align*}
and, similarly, $\beta_{\pb} \geq -\gamma^0_{\pb}$, so we can conclude that
\begin{equation}\label{eq: bound_on_cosine_moment}
    -\gamma^0_{\pb} \leq \beta_{\pb} \leq \gamma^0_{\pb}.
\end{equation}
Along the same lines, for the sine-skewed model it holds that
\begin{align*}
    \beta_{\pb}^{\rm SS} 
    & = \int_{\mathcal{T}_d} \sin(\pb^\intercal \thetab) f_0 (\thetab)  \left( 1+\sum_{j=1}^{d} \lambda_j \sin \theta_j \right) \mathrm{d} \thetab 
    = \sum_{j=1}^{d} \lambda_j \int_{\mathcal{T}_d} \sin(\pb^\intercal \thetab) f_0 (\thetab)  \sin \theta_j \mathrm{d} \thetab\\
    &\leq
    \sum_{j=1}^{d} |\lambda_j| \int_{\mathcal{T}_d} | \sin(\pb^\intercal \thetab) | f_0 (\thetab) \mathrm{d} \thetab \leq \gamma^0_{\pb}
\end{align*}
where in the last line we used the condition $\sum_{j=1}^{d} |\lambda_j| \leq 1$. Similarly we find that $\beta_{\pb}^{\rm SS} \geq -\gamma^0_{\pb}$ and hence the same range as in~\eqref{eq: bound_on_cosine_moment}. Both skewness measures thus are contained in the same bounded interval. 

\subsection{Dependence measures}\label{sec: depmeas}
We  consider two well-known types of correlation coefficients  proposed in the literature.
For $s,t \in \{1,\ldots,d\}$ we define
\begin{align*}
    \Sigma_s = \begin{pmatrix}
        \E \left[ \cos(\Theta_s) \right] \\
        \E \left[ \sin(\Theta_s) \right]
    \end{pmatrix}, 
    \quad
    \Sigma_{st} = \begin{pmatrix}
        \E \left[ \cos(\Theta_s) \cos(\Theta_t) \right] & 
        \E \left[ \sin(\Theta_s) \cos(\Theta_t) \right] \\
        \E \left[ \sin(\Theta_s) \cos(\Theta_t) \right] & 
        \E \left[ \sin(\Theta_s) \sin(\Theta_t) \right]
    \end{pmatrix}.
\end{align*}
Using this notation, we investigate the following two dependence measures and note that further can be found in \cite{jupp1989unified}:
\begin{equation*}
    \text{\cite{FL83}: } \rho_{\text{FL}} = \frac{\det(\Sigma_{st})}{\left( \det(\Sigma_{ss}) \det(\Sigma_{tt}) \right)^{1/2}},
    \quad
    \text{\cite{mackenzie1957estimation, STEPHENS1979}: } \rho_{\text{SM}} = \text{tr} \left[ \left( \Sigma_{st} \Sigma_{ts} \right)^{1/2} \right].
\end{equation*}
From the definition of trigonometric moments, it holds that $\Sigma_s = \left( \alpha_{\eb_s}^\intercal, \beta_{\eb_s}^\intercal \right)^\intercal,$ and through trigonometric identities it can be shown that 
\begin{align*}
    \Sigma_{st} = \frac{1}{2} \begin{pmatrix}
        \alpha_{\eb_s + \eb_t} + \alpha_{\eb_s - \eb_t} & 
        \beta_{\eb_s + \eb_t} + \beta_{\eb_s - \eb_t} \\
        \beta_{\eb_s + \eb_t} + \beta_{\eb_s - \eb_t} & 
        \alpha_{\eb_s - \eb_t} - \alpha_{\eb_s + \eb_t}
    \end{pmatrix}
\quad \rm{and} \quad 
    \Sigma_{ss} = \frac{1}{2} \begin{pmatrix}
        \alpha_{2\eb_s} + 1 & 
        \beta_{2\eb_s} \\
        \beta_{2\eb_s} & 
        1 - \alpha_{2\eb_s} 
    \end{pmatrix}.
\end{align*}

\subsection{Random variate generation}\label{sec: randomgen}

We will now show how one can generate random variates from our arctan-skewed model in a straightforward way. Theorem~\ref{thm: random_variate_generation} below provides  the simple algorithm and its proof is given in Section~\ref{sec: technical details}.

\begin{theorem} \label{thm: random_variate_generation}
    Let $f_0\in\mathcal{F}$. Consider the following simple algorithm to generate a random quantity $\Thetab\in\mathcal{T}_d$:
    \begin{enumerate}
        \item[Step 1.] Generate a random vector $\Xb = (X_1, \ldots, X_d)^\intercal$ from $f_0(\cdot - \mub)$;
        \item[Step 2.] Generate an independent random variable $U$ from the uniform distribution $U[0,1]$;
        \item[Step 3.] Set 
        \begin{equation*}
            \Thetab = \left\{ \begin{array}{cc}
                \Xb & \text{if } U \leq \frac{1}{2} \left( 1 + \frac{2}{\pi d} \sum_{i=1}^d \arctan \left( \frac{\lambda_i \sin (X_i - \mu_i)}{1 + \sin^2 (X_i - \mu_i)} \right) \right) \\
                2\mub - \Xb & \text{if } U > \frac{1}{2} \left( 1 + \frac{2}{\pi d} \sum_{i=1}^d \arctan \left( \frac{\lambda_i \sin (X_i - \mu_i)}{1 + \sin^2 (X_i - \mu_i)} \right) \right)
            \end{array}\right..
        \end{equation*}
    \end{enumerate}
    The resulting random vector $\Thetab$ then follows the $f_0$-arctan-skewed density~\eqref{eq: sum_model}.
\end{theorem}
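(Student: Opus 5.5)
We prove Theorem~\ref{thm: random_variate_generation} by computing the density of $\Thetab$ directly via the law of total probability, conditioning on $\Xb$. Write $\Pi(\thetab) := \frac{1}{2}\bigl(1 + \frac{2}{\pi d}\sum_{i=1}^d \arctan\bigl(\frac{\lambda_i \sin(\theta_i-\mu_i)}{1+\sin^2(\theta_i-\mu_i)}\bigr)\bigr)$ for the threshold used in Step~3. As noted after~\eqref{eq: sum_model}, the sum of arctan terms scaled by $2/(\pi d)$ lies in $[-1,1]$, so $\Pi(\thetab)\in[0,1]$. Since $U$ is uniform and independent of $\Xb$, we get $\Prob(U\le \Pi(\Xb)\mid \Xb)=\Pi(\Xb)$. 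The key structural fact is the reflection identity
\begin{equation*}
\Pi(2\mub-\thetab) = 1-\Pi(\thetab),
\end{equation*}
which holds because $\sin(\mu_i-\theta_i)=-\sin(\theta_i-\mu_i)$, $\sin^2$ is even, and $\arctan$ is odd.

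Take any measurable set $B\subseteq\mathcal{T}_d$, with everything understood modulo $2\pi$. Then
\begin{equation*}
\Prob(\Thetab\in B) = \int_{\mathcal{T}_d} \one_B(\xb)\,\Pi(\xb)\, f_0(\xb-\mub)\,\mathrm d\xb + \int_{\mathcal{T}_d} \one_B(2\mub-\xb)\,\bigl(1-\Pi(\xb)\bigr)\, f_0(\xb-\mub)\,\mathrm d\xb .
\end{equation*}
In the second integral we substitute $\thetab = 2\mub-\xb$. This map is measure-preserving on the torus: it is a reflection composed with a translation, and $2\pi$-periodicity of $f_0\in\mathcal{F}$ lets us keep the domain $\mathcal{T}_d$. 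Under this substitution $f_0(\xb-\mub)=f_0(\mub-\thetab)=f_0(\thetab-\mub)$ by the symmetry $f_0(\thetab)=f_0(-\thetab)$. Also $1-\Pi(\xb)=\Pi(2\mub-\xb)=\Pi(\thetab)$ by the reflection identity. So the second integral equals the first.

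Hence $\Prob(\Thetab\in B)=\int_B 2\Pi(\thetab) f_0(\thetab-\mub)\,\mathrm d\thetab$. Since $2\Pi(\thetab)f_0(\thetab-\mub)$ is exactly~\eqref{eq: sum_model}, this proves the claim.

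The only step needing care is the change of variables on the torus: the point $2\mub-\xb$ must be reduced modulo $2\pi$ into $[-\pi,\pi)^d$. This is handled by the periodicity in the definition of $\mathcal{F}$ and the periodicity of $\sin$. Everything else is routine.
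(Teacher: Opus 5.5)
Your proof is correct and follows the same route as the paper's: condition on $\Xb$, split into the two branches of Step~3, and combine the reflection identity $\Pi(2\mub-\thetab)=1-\Pi(\thetab)$ for your threshold $\Pi$ with the symmetry $f_0(\thetab-\mub)=f_0(\mub-\thetab)$. Your version is more rigorous than the paper's. The paper argues heuristically with point probabilities $\Prob(\Thetab=\thetab)$, whereas you compute $\Prob(\Thetab\in B)$ and carry out the change of variables $\thetab=2\mub-\xb$ explicitly on the torus.
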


 We note the simplicity and efficiency of the algorithm, which is well in line with the notoriously good random number generation algorithms accompanying the symmetry modulation approach, as mentioned in the Introduction.

\section{Inference \label{sec: inference}}

We first develop maximum likelihood estimation for the parameters of asymmetric models resulting from our arctan-skewing in Section~\ref{sec: MLE}. 
The associated (observed) Fisher Information matrix is given in Section~\ref{sec: FIM} of the Supplementary Material. 
Based on the Maximum Likelihood Estimators (MLEs), one can right away develop likelihood ratio tests for testing symmetry ($\lambdab = \zerob$) against arctan-skewed alternatives. 
We do not provide these tests here, as we propose instead efficient semi-parametric tests for symmetry in Section~\ref{sec: optimtest}. 
The performance of those tests is evaluated by means of Monte Carlo simulations in Section~\ref{sec: simulations} of the Supplementary Material. 
We consider there the setting $d=3$ for which we generate the symmetric data from the trivariate wrapped Cauchy (TWC) copula \citep{TWCC, TWCM} with marginal distributions chosen to be the wrapped Cauchy distribution.

\subsection{Maximum likelihood estimation \label{sec: MLE}}

We now develop methods to estimate the parameters of arctan-skewed asymmetric models. More concretely, we will perform maximum likelihood estimation. For notational simplicity, we adopt the notations

$$
M(\mu_k, \lambda_k; \theta_{jk}) = \frac{\partial}{\partial{\mu_k}}\left(1 + \frac{2}{\pi d} \sum_{i=1}^d \arctan \left( \frac{\lambda_i \sin (\theta_{ji} - \mu_{i})}{1 + \sin^2 (\theta_{ji} - \mu_{i})} \right)\right)=\frac{2\lambda_k}{\pi d}\frac{-\cos^3 (\theta_{jk} - \mu_k)}{\left( 1 + \sin^2 (\theta_{jk} - \mu_k) \right)^2+\lambda_k^2 \sin^2 (\theta_{jk} - \mu_k)}
$$
and
$$
L(\mu_k, \lambda_k; \theta_{jk}) = \frac{\partial}{\partial{\lambda_k}}\left(1 + \frac{2}{\pi d} \sum_{i=1}^d \arctan \left( \frac{\lambda_i \sin (\theta_{ji} - \mu_{i})}{1 + \sin^2 (\theta_{ji} - \mu_{i})} \right)\right)
= \frac{2}{\pi d}\frac{ \sin (\theta_{jk} - \mu_k) \left( 1 + \sin^2 (\theta_{jk} - \mu_k) \right)}
{\left( 1 +  \sin^2 (\theta_{jk} - \mu_k) \right)^2 + \lambda_k^2 \sin^2 (\theta_{jk} - \mu_k)},
$$
and note that 
\begin{equation*}
    M(\mu_k, 0; \theta_{jk})=0 
    \qquad \text{and} \qquad
    L(\mu_k, 0; \theta_{jk})=\frac{2}{\pi d}\frac{ \sin (\theta_{jk} - \mu_k)}
{1 +  \sin^2 (\theta_{jk} - \mu_k)}.
\end{equation*}
To obtain MLEs, we consider the following score functions for the parameters $\mub$ and $\lambdab$:

\begin{eqnarray*}
\frac{\partial}{\partial{\mu_k}} \log \left(\prod_{j=1}^n f (\thetab_j - \mub) \right)
&=&
\sum_{j=1}^n \frac{\frac{\partial}{\partial{\mu_k}}f_0(\thetab_j-\mub)}{f_0(\thetab_j-\mub)} 
+ \sum_{j=1}^n \frac{M(\mu_k, \lambda_k; \theta_{jk})}{ 1 + \frac{2}{\pi d} \sum_{i=1}^d \arctan \left( \frac{\lambda_i \sin (\theta_{ji} - \mu_{i})}{1 + \sin^2 (\theta_{ji} - \mu_{i})} \right) }
\end{eqnarray*}
and
$$\frac{\partial}{\partial{\lambda_k}} \log\left(\prod_{j=1}^n f (\thetab_j - \mub) \right)
=\sum_{j=1}^n \frac{L(\mu_k, \lambda_k; \theta_{jk})}
{1 + \frac{2}{\pi d} \sum_{i=1}^d \arctan \left( \frac{\lambda_i \sin (\theta_{ji} - \mu_{i})}{1 + \sin^2 (\theta_{ji} - \mu_{i})} \right)}.
$$
The MLEs $\hat{\mub}$ and $\hat{\lambdab}$ are obtained by setting these score equations equal to 0 and solving the resulting score equations numerically, along with other score equations that could arise through the existence of some other parameter $\varthetab$ that might be present in the base density $f_0$ (concentration, peakedness, shape, \ldots). Contrary to the sine-skewed models, we have no constraints on the skewness parameter, which we have observed in practice to simplify estimation.

\subsection{Optimal tests for symmetry}\label{sec: optimtest}

In this section we construct  tests for symmetry which are most efficient against our arctan-skewed alternatives and are moreover semi-parametric, {unlike purely parametric likelihood ratio tests}. Here we will propose procedures that are valid under a  broad class of symmetric densities. Moreover, we will consider both cases of tests about a specified and unspecified symmetry center. The backbone of our approach is the Le Cam theory of asymptotic experiments and we draw upon the methodology presented in~\cite{symmetry_test} (we also refer the reader to that paper for details about the Le Cam theory). {Of course, our procedure can also be adapted to optimal tests for symmetry in subvectors of $\lambdab$.}

In what follows, the symmetric densities $f_0$ need to satisfy slightly stronger conditions than those of $\mathcal{F}$, and we formulate those in the class of functions
\begin{align}\label{eq: def: H}
    \mathcal{H} := \left\{\begin{array}{ll}
& f_0(\thetab) > 0 \ \forall \thetab \in \mathcal{T}_d, \\
f_0: & f_0 \text{ unimodal in } \thetab \in \mathcal{T}_d \text{ with a mode at } \boldsymbol{0}, \\
& \thetab \mapsto f_{0}^{1/2}\left( \thetab\right) \text{ is } C^1 \forall \thetab \in \mathcal{T}_d \text{ with continuous derivative }
    \nabla_{\mub} f^{1/2}_{0}\left( \thetab - \mub \right). 
\end{array} \right\}
\end{align}
where $C^1$ denotes the class of continuously differentiable functions and it holds that
\begin{equation*}
    \nabla_{\mub} f^{1/2}_{0}\left( \thetab - \mub \right) 
    = \begin{pmatrix}
        \frac{\frac{\partial} {\partial\mu_1}  f_{0}\left( \thetab - \mub \right) } {2 f^{1/2}_{0}\left( \thetab - \mub \right) } \\
        \vdots \\
        \frac{\frac{\partial} {\partial\mu_d}  f_{0}\left( \thetab - \mub \right) } {2 f^{1/2}_{0}\left( \thetab - \mub \right) }
    \end{pmatrix}
    = \frac{1}{2} \begin{pmatrix}
         \phi^{f_0}_1(\thetab - \mub) \\
         \vdots \\
         \phi^{f_0}_d(\thetab - \mub)
    \end{pmatrix} f^{1/2}_{0}\left( \thetab - \mub \right),
\end{equation*}
where 
\begin{equation*} \label{eq: def: phi}
    \phi^{f_0}_j(\thetab - \mub) 
    = -\frac{\frac{\partial}{\partial \theta_j}f_0(\thetab -\mub)}{f_0(\thetab - \mub)}, \textrm{ for } j \in \{ 1, \ldots, d \}.
\end{equation*}
Most known models from the literature, such as the von Mises, wrapped Cauchy, cardioid, Sine, Cosine, bivariate wrapped Cauchy, or the trivariate
wrapped Cauchy with specified marginals \citep{TWCC, TWCM} satisfy the condition to belong to the intersection $\mathcal{F}\cap\mathcal{H}$. 

Let us denote by $P^{(n)}_{\mub, \lambdab; f_0}$, with $f_0 \in \mathcal{F} \cap \mathcal{H}$ as in \eqref{eq: def: F} and \eqref{eq: def: H}, the joint distribution of a sequence $\{ \thetab_i \}_{i=1}^n = \{(\theta_{i1}, \ldots, \theta_{id})^\intercal \}_{i=1}^n$ of independent and identically distributed hyper-toroidal random observations with density \eqref{eq: sum_model}. 
Any $f_0$ then induces a toroidal location-skewness model 
$$
P^{(n)}_{f_0}:=\left\{P^{(n)}_{\mub, \lambdab; f_0}: \mub \in \mathcal{T}_d, \lambdab \in \R^d\right\}.
$$
In order to derive tests for symmetry that are optimal in the Le Cam sense, we  establish the Uniform Local Asymptotic Normality (ULAN) property in the vicinity of symmetry, meaning at $\lambdab=\pmb0$,  of the parametric model $P^{(n)}_{f_0}$.

\begin{proposition}\label{prop: ULAN}
Let $f_0 \in \mathcal{F} \cap \mathcal{H}$. 
Then the arctan-skewed family $P^{(n)}_{f_0}$ is ULAN at $(\mub^\intercal,\zerob^\intercal)^\intercal$. More precisely, for any $\mu_i^{(n)}=\mu_i + O(n^{-1/2}), i=1,\ldots,d,$ and  for any bounded sequence $\taub^{(n)}:=(\taub^{(n)\intercal}_{\mu}, \taub^{(n)\intercal}_{\lambda})^\intercal \in\R^{2d}$ with $\taub^{(n)}_{\mu} = (\tau^{(n)}_1, \ldots, \tau^{(n)}_d)^\intercal \in \R^{d}$, $\taub^{(n)}_{\lambda} = (\tau^{(n)}_{d+1}, \ldots, \tau^{(n)}_{2d})^\intercal \in \R^{d}$ such that $\mu^{(n)}_i + n^{-1/2} \tau^{(n)}_i$, $i=1, \ldots, d,$ remains in $\mathcal{T}_1$, we have
\begin{align*} 
    \Lambda^{(n)} 
    & := \log \left( \frac{P^{(n)}_{\mub^{(n)} + n^{-1/2} \taub^{(n)}_{\mu}, n^{-1/2} \taub^{(n)}_{\lambda}; f_0} }{P^{(n)}_{\mub^{(n)}, \boldsymbol{0}; f_0}} \right) = \taub^{(n) \intercal} \Delta_{f_0}^{(n)}(\mub^{(n)}) - \frac{1}{2} \taub^{(n)\intercal} \Gamma_{f_0} \taub^{(n)} + o_P(1)
\end{align*}
as $n\to\infty$, and the central sequence
\begin{align*} \label{eq: def: central sequence}
    \Delta_{f_0}^{(n)}(\mub) 
    := & \frac{1}{\sqrt{n}}\sum_{i=1}^n \Delta_{f_0; i}^{(n)}(\mub) \\
    = & \frac{1}{\sqrt{n}}\sum_{i=1}^n \left(
        \phi^{f_0}_1(\thetab_{i} - \mub), \ \ldots, \
        \phi^{f_0}_d(\thetab_{i} - \mub), \
        \frac{2}{\pi d} \frac{\sin(\theta_{i1}-\mu_1)}{ 1+ \sin^2(\theta_{i1}-\mu_1)}, 
        \ \ldots, \
        \frac{2}{\pi d} \frac{\sin(\theta_{id}-\mu_d)}{ 1+ \sin^2(\theta_{id}-\mu_d)} 
    \right)^\intercal,
\end{align*}
for $\phi^{f_0}_j(\thetab-\mub)$ as in \eqref{eq: def: phi},
converges to $\mathcal{N}_{2d}(\zerob,\Gamma_{f_0})$ under $P^{(n)}_{\mub^{(n)}, \zerob; f_0}$ as $n\to\infty$, where the Fisher Information matrix $\Gamma_{f_0}$ is given by
\begin{equation*} \label{eq: def: Fisher Information}
    \Gamma_{f_0} = 
    \begin{pmatrix}
        I^{f_0}_{\mu_1\mu_1} & \hdots & I^{f_0}_{\mu_1\mu_d} & I^{f_0}_{\mu_1\lambda_1} & \hdots & I^{f_0}_{\mu_1\lambda_d} \\
        \vdots & \ddots & \vdots & \vdots & \ddots & \vdots \\
        I^{f_0}_{\mu_1\mu_d} & \hdots & I^{f_0}_{\mu_d\mu_d} & I^{f_0}_{\mu_d\lambda_1} & \hdots & I^{f_0}_{\mu_d\lambda_d} \\
        I^{f_0}_{\mu_1\lambda_1} & \hdots & I^{f_0}_{\mu_d\lambda_1} & I^{f_0}_{\lambda_1\lambda_1} & \hdots & I^{f_0}_{\lambda_1\lambda_d} \\
        \vdots & \ddots & \vdots & \vdots & \ddots & \vdots \\
        I^{f_0}_{\mu_1\lambda_d} & \hdots & I^{f_0}_{\mu_d\lambda_d} & I^{f_0}_{\lambda_1\lambda_d} & \hdots & I^{f_0}_{\lambda_d\lambda_d}
    \end{pmatrix}
\end{equation*}
with
    \begin{align} \label{eq: def: I}
        & I^{f_0}_{\mu_j\mu_k} = \int_{\mathcal{T}_d} \phi_j^{f_0}(\boldsymbol{\theta}) \phi_k^{f_0}(\boldsymbol{\theta}) f_0(\boldsymbol{\theta}) {\mathrm d}\boldsymbol{\theta}, \qquad  I^{f_0}_{\mu_j\lambda_j} = \frac{2}{\pi d} \int_{\mathcal{T}_d} \phi_j^{f_0}(\boldsymbol{\theta}) \frac{ \sin (\theta_{j})}
        {1 +  \sin^2 (\theta_{j})} f_0(\boldsymbol{\theta}) {\mathrm d}\boldsymbol{\theta}, \nonumber \\
        & I^{f_0}_{\lambda_j \lambda_k} = \frac{4}{\pi^2 d^2} \int_{\mathcal{T}_d} \frac{ \sin (\theta_{j})}
    {1 +  \sin^2 (\theta_{j})} \frac{ \sin (\theta_{k})}
    {1 +  \sin^2 (\theta_{k})}  f_0(\boldsymbol{\theta}) {\mathrm d}\boldsymbol{\theta}, \qquad I^{f_0}_{\mu_j\lambda_k} = 0, \text{ for } j\neq k, 
    \end{align}
where $j,k \in \{1,\ldots,d\}$.
\end{proposition}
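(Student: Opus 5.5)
We will establish ULAN through quadratic mean differentiability (QMD) of the root-density at $(\mub^\intercal,\zerob^\intercal)^\intercal$, and then invoke the standard route of Le Cam theory (e.g.\ Lemma~1 of Swensen or the version used in \cite{symmetry_test} for the sine-skewed model). Concretely, write $g_{\mub,\lambdab}(\thetab) := f^{1/2}(\thetab-\mub;\lambdab)$ with $f$ as in \eqref{eq: sum_model}. The first step is to show that $(\mub,\lambdab)\mapsto g_{\mub,\lambdab}$ is differentiable in quadratic mean at $(\mub,\zerob)$, uniformly in $\mub$ over $O(n^{-1/2})$ neighbourhoods, with derivative
\[
\tfrac12\,\Delta_{f_0;1}(\mub)\, f_0^{1/2}(\thetab-\mub),
\]
where $\Delta_{f_0;1}$ is the summand in the central sequence. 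The $\mub$-part follows from the assumption in $\mathcal{H}$ that $f_0^{1/2}$ is $C^1$ with continuous derivative on the compact $\mathcal{T}_d$, which gives the $\phi^{f_0}_j$ components. For the $\lambdab$-part, write $f = f_0\,(1+\frac{2}{\pi d}\sum_i a_i)$ with $a_i=\arctan(\lambda_i s_i/(1+s_i^2))$ and $s_i=\sin(\theta_i-\mu_i)$. Since $\partial_{\lambda_i} a_i|_{\lambdab=\zerob}=s_i/(1+s_i^2)$ and $\partial_{\mu_k}$ of the skewing factor vanishes at $\lambdab=\zerob$ (since $M(\mu_k,0;\cdot)=0$), the derivative of $g$ in $\lambda_k$ at $\zerob$ is $\frac12 f_0^{1/2}\frac{2}{\pi d}\frac{s_k}{1+s_k^2}$. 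All these expressions are bounded and continuous in $(\mub,\lambdab,\thetab)$ near $\lambdab=\zerob$ (the skewing factor is close to $1$ and hence bounded away from $0$). Pointwise differentiability with a continuous, square-integrable derivative on a compact set then yields QMD by dominated convergence, via the standard Hájek/Le Cam lemma: a continuously differentiable root density has continuous Fisher information.

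Second, with QMD in hand, the log-likelihood ratio expansion $\Lambda^{(n)}=\taub^{(n)\intercal}\Delta^{(n)}_{f_0}(\mub^{(n)})-\frac12\taub^{(n)\intercal}\Gamma_{f_0}\taub^{(n)}+o_P(1)$ follows from the usual Taylor expansion of $\sum_i\log(1+W_{ni})$ with $W_{ni}$ the root-ratio increments. Uniformity over $\mub^{(n)}=\mub+O(n^{-1/2})$ comes from the continuity of the derivative in $\mub$. The central sequence has mean zero under $P^{(n)}_{\mub^{(n)},\zerob;f_0}$: for the $\phi$ components, integrate $\partial_{\theta_j}f_0$ over a period; for the skewness components, use the oddness of $s/(1+s^2)$ together with $f_0(\thetab)=f_0(-\thetab)$. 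Its covariance is finite because all components are bounded or square-integrable. The multivariate CLT then gives convergence to $\mathcal{N}_{2d}(\zerob,\Gamma_{f_0})$ for triangular arrays, applying Lindeberg with uniformly integrable summands via continuity in $\mub$.

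Third, we compute $\Gamma_{f_0}=\E[\Delta_{f_0;1}\Delta_{f_0;1}^\intercal]$ entrywise to recover \eqref{eq: def: I}. The diagonal blocks are immediate. The only nontrivial claim is $I^{f_0}_{\mu_j\lambda_k}=0$ for $j\ne k$. For this we use integration by parts in $\theta_j$: $\int \phi_j^{f_0}\,\frac{s_k}{1+s_k^2}f_0=-\int \partial_{\theta_j}f_0\,\frac{s_k}{1+s_k^2}$, and since the weight does not depend on $\theta_j$, periodicity in $\theta_j$ kills the integral.

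The main obstacle is the QMD/uniformity step, specifically controlling the remainder uniformly in $\mub^{(n)}$ and handling $\log$ of the skewing factor for $\lambdab=n^{-1/2}\taub_\lambda$. I would handle this by using the explicit bound $|\partial^2_{\lambda}a_i|\le C$ uniformly in $\thetab$, which makes the $\lambdab$-remainder $O(n^{-1})$ uniformly. The $\mub$-remainder would be controlled by the $C^1$ assumption on $f_0^{1/2}$ combined with the $L^2$-continuity of translations on the torus.
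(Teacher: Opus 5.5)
Your proposal is correct and follows essentially the same route as the paper, which also establishes quadratic mean differentiability of the arctan-skewed root density (via Proposition~2.1 of \cite{symmetry_test}) and then runs the standard Le Cam/ULAN argument from the sine-skewed case. The only cosmetic difference is that the paper identifies the entries of $\Gamma_{f_0}$ from the observed Fisher information at $\lambdab=\zerob$ through integration by parts, whereas you compute them directly as the covariance of the score; also, by translation invariance the residuals $\thetab_i-\mub^{(n)}$ are exactly i.i.d.\ $f_0$ under $P^{(n)}_{\mub^{(n)},\zerob;f_0}$, so your triangular-array Lindeberg step for the CLT is not needed.
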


The proof of this theorem goes along the same lines as the proof of ULAN for sine-skewing in~\cite{symmetry_test} with only minimal variations. The quadratic mean differentiability of  arctan-skewed densities $f_0$ can be obtained via Proposition 2.1 of~\cite{symmetry_test}; we leave the details to the reader. We also draw the reader's attention to the fact that the elements of the Fisher Information matrix $\Gamma_{f_0}$ are  obtained from the observed Fisher Information quantities at $\lambdab=\zerob$ given in Section~\ref{sec: FIM} through  calculations involving integration by parts.

\subsubsection{Optimal tests for symmetry about a specified center}\label{sec: testsymspec}
We firstly assume that the center of symmetry  is known and fixed to $\mub \in \mathcal{T}_d$.
Our goal is to derive semi-parametrically optimal tests for the following hypothesis testing problem:
\begin{equation*}\label{eq: hypotheses known median}
    \mathcal{H}^{(n)}_{0; \mu}:= \bigcup_{f_0 \in\mathcal{F} \cap \mathcal{H}} P^{(n)}_{\mub, \boldsymbol{0}; f_0}
    \quad \text{vs} \quad
    \mathcal{H}^{(n)}_{1; \mu}:= \bigcup_{\lambdab \in \R^d \setminus \{\boldsymbol{0} \} } \bigcup_{f_0 \in\mathcal{F} \cap \mathcal{H}} P^{(n)}_{\mub, \lambdab; f_0}.
\end{equation*}
Since $\mub$ is known, we focus on the parts of the central sequence and Fisher Information matrix that only depend on $\lambdab$. 
To this end, we define 
\begin{equation}\label{eq: def: central sequence and FI for known median}
    \Delta_{\lambda}^{(n)}(\mub) = \frac{1}{\sqrt{n}}\sum_{i=1}^n 
    \begin{pmatrix}
        \frac{2}{\pi d} \frac{\sin(\theta_{i1}-\mu_1)}{ 1+ \sin^2(\theta_{i1}-\mu_1)} \\
        \vdots \\
        \frac{2}{\pi d} \frac{\sin(\theta_{id}-\mu_d)}{ 1+ \sin^2(\theta_{id}-\mu_d)}
    \end{pmatrix} 
    \quad \text{and} \quad
    \Gamma_{f_0; \lambda} =  
    \begin{pmatrix}
        I^{f_0}_{\lambda_1\lambda_1} & \hdots & I^{f_0}_{\lambda_1\lambda_d} \\
        \vdots & \ddots & \vdots \\
        I^{f_0}_{\lambda_1\lambda_d} & \hdots & I^{f_0}_{\lambda_d\lambda_d}
    \end{pmatrix}
\end{equation}
where $I^{f_0}_{\lambda_i\lambda_j}$ for $i,j=1,\ldots,d$ is defined in \eqref{eq: def: I}. Note that the central sequence for skewness does not depend on $f_0$, meaning that the most efficient parametric test statistic $\left(\Delta_{\lambda}^{(n)}(\mub)\right)^\intercal \left({\Gamma}_{f_0;\lambda} \right)^{-1} \Delta_{\lambda}^{(n)}(\mub)$ can readily be turned into a semi-parametric one by estimating $\Gamma_{f_0;\lambda}$, and that any $f_0$ will lead to the same most efficient semi-parametric test, which we present and study in the following result whose proof is left to the reader (see the proof of Theorem 3.1 in \cite{symmetry_test}).

\begin{theorem} \label{thm: optimality specified median}
We define the semi-parametric test statistic as
\begin{equation*}
    Q^{\ast(n); \mub} = \left(\Delta_{\lambda}^{(n)}(\mub)\right)^\intercal \left(\hat{\Gamma}_{\lambda} \right)^{-1} \Delta_{\lambda}^{(n)}(\mub),
\end{equation*}
where 
\begin{equation*}\label{eq: Fisher Information specified estimate}
    \hat{\Gamma}_{\lambda} = 
    \begin{pmatrix}
        \hat{I}_{\lambda_1\lambda_1} & \hdots & \hat{I}_{\lambda_1\lambda_d} \\
        \vdots & \ddots & \vdots \\
        \hat{I}_{\lambda_1\lambda_d} & \hdots & \hat{I}_{\lambda_d\lambda_d}
    \end{pmatrix},
    \quad \hat{I}_{\lambda_j\lambda_k} = \frac{4}{n \pi^2 d^2} \sum_{i=1}^n \frac{ \sin (\theta_{ij} - \mu_j)}
    {1 +  \sin^2 (\theta_{ij} - \mu_j)} \frac{ \sin (\theta_{ik} - \mu_k)}
    {1 +  \sin^2 (\theta_{ik} - \mu_k)} \text{  for  } j,k=1,\ldots,d,
\end{equation*}
is the empirical estimate of $\Gamma_{f_0; \lambda}$ defined in \eqref{eq: def: central sequence and FI for known median}. 
Let $\phi^{\ast (n); \mu}$ denote the test that rejects $\mathcal{H}^{(n)}_{0; \mu}$ against the alternative $\mathcal{H}^{(n)}_{1; \mu}$ at asymptotic level $\alpha$ whenever the statistic $Q^{\ast(n); \mub} > \chi^2_{d; \alpha}$, with $\chi^2_{d; \alpha}$ the $\alpha$-upper quantile of a chi-square distribution with $d$ degrees of freedom.
The following hold for $f_0 \in \mathcal{F} \cap \mathcal{H}$:
    \begin{enumerate}
        \item[(i)] Under $\mathcal{H}^{(n)}_{0; \mu}$, we have that
        \begin{equation*} \label{eq: asymptotic_result_specified}
            Q^{\ast(n); \mub}   \xrightarrow{\mathcal{D}} 
            \chi^2_d
        \end{equation*}
    as $n \rightarrow \infty$, so that the test $\phi^{\ast (n); \mu}$ has asymptotic level $\alpha$ under the null.
        \item[(ii)] Under $P^{(n)}_{\mub, n^{-1/2} \taub^{(n)}_\lambda; f_0}$ for $f_0 \in \mathcal{F} \cap \mathcal{H}$ and $\taub^{(n)}_\lambda$ as defined in Proposition~\ref{prop: ULAN}, it holds that
    \begin{equation*}\label{eq: dist_under_alternative d dimensions}
        Q^{\ast(n); \mub} \xrightarrow{\mathcal{D}} 
        \chi^2_d(\taub^{\intercal}_\lambda \Gamma_{f_0; \lambda} \taub_\lambda)
    \end{equation*}
     as $n \rightarrow \infty$, where $\chi^2_\nu(\kappa)$ denotes a non-central $\chi^2$ distribution with $\nu$ degrees of freedom and non-centrality parameter $\kappa$, with $\taub_\lambda = \lim_{n\rightarrow\infty} \taub^{(n)}_\lambda$.

     \item[(iii)] The test $\phi^{\ast (n); \mu}$ is universally (in $f_0$) locally and asymptotically maximin at level $\alpha$ when testing $\mathcal{H}^{(n)}_{0; \mu}$ against $\mathcal{H}^{(n)}_{1; \mu}$.
    \end{enumerate}
\end{theorem}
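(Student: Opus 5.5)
The plan follows the standard Le Cam route used for Theorem 3.1 of \cite{symmetry_test}, with Proposition~\ref{prop: ULAN} as the main input. The key structural fact is that the skewness part $\Delta_{\lambda}^{(n)}(\mub)$ of the central sequence does not involve $f_0$. Under $\mathcal{H}^{(n)}_{0;\mu}$ the summands
$$
\frac{2}{\pi d}\frac{\sin(\theta_{ij}-\mu_j)}{1+\sin^2(\theta_{ij}-\mu_j)}
$$
are i.i.d., bounded by $1/(\pi d)$, and have mean zero. The mean vanishes because $f_0(\thetab)=f_0(-\thetab)$ and $\theta\mapsto \sin\theta/(1+\sin^2\theta)$ is odd.

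\textbf{Part (i).} The multivariate CLT gives $\Delta_{\lambda}^{(n)}(\mub)\xrightarrow{\mathcal D}\mathcal N_d(\zerob,\Gamma_{f_0;\lambda})$. Since $\hat I_{\lambda_j\lambda_k}$ are sample means of bounded functions, the weak law of large numbers gives $\hat\Gamma_{\lambda}\to\Gamma_{f_0;\lambda}$ in probability.

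I would then check that $\Gamma_{f_0;\lambda}$ is positive definite. Suppose $\cb^\intercal\Gamma_{f_0;\lambda}\cb=0$. Then $\sum_j c_j\sin\theta_j/(1+\sin^2\theta_j)=0$ $f_0$-a.s. Because $f_0>0$ on $\mathcal T_d$ by $\mathcal H$, this sum vanishes identically, and so every $c_j=0$. Continuous mapping and Slutsky then yield $Q^{\ast(n);\mub}\to\chi^2_d$.

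\textbf{Part (ii).} Proposition~\ref{prop: ULAN}, with $\taub_\mu=\zerob$ and $\mub$ fixed, gives
$$
\Lambda^{(n)}=\taub_\lambda^{(n)\intercal}\Delta^{(n)}_\lambda-\tfrac12\taub_\lambda^{(n)\intercal}\Gamma_{f_0;\lambda}\taub_\lambda^{(n)}+o_P(1).
$$
The pair $(\Delta_\lambda^{(n)},\Lambda^{(n)})$ is jointly asymptotically normal under the null, with cross-covariance $\Gamma_{f_0;\lambda}\taub_\lambda$. Le Cam's third lemma then gives $\Delta_\lambda^{(n)}\to\mathcal N_d(\Gamma_{f_0;\lambda}\taub_\lambda,\Gamma_{f_0;\lambda})$ under $P^{(n)}_{\mub,n^{-1/2}\taub^{(n)}_\lambda;f_0}$.

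Contiguity, which follows from ULAN, transfers the consistency $\hat\Gamma_\lambda\to\Gamma_{f_0;\lambda}$ to the local alternatives. The quadratic form therefore converges to a noncentral $\chi^2_d$ with noncentrality
$$
(\Gamma\taub_\lambda)^\intercal\Gamma^{-1}(\Gamma\taub_\lambda)=\taub_\lambda^\intercal\Gamma_{f_0;\lambda}\taub_\lambda .
$$

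\textbf{Part (iii).} For fixed $f_0$ and known $\mub$, the limiting experiment for $\lambdab$ is Gaussian shift with information $\Gamma_{f_0;\lambda}$. In that experiment the locally asymptotically maximin test for $\taub_\lambda=\zerob$ rejects for large values of $\Delta_\lambda^{(n)\intercal}\Gamma_{f_0;\lambda}^{-1}\Delta_\lambda^{(n)}$ (Le Cam's theory as recalled in \cite{symmetry_test}).

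By part (i) and the consistency of $\hat\Gamma_\lambda$, $Q^{\ast(n);\mub}$ differs from this parametric optimal statistic by $o_P(1)$ under the null, and by contiguity also under the local alternatives. So for every $f_0$ the test $\phi^{\ast(n);\mu}$ attains the optimal local power. Part (i) shows it is valid at asymptotic level $\alpha$ over the whole union $\mathcal H^{(n)}_{0;\mu}$. Together these give universal local asymptotic maximinity.

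The main obstacle is (iii): making rigorous that a test valid over the semiparametric null and matching each parametric maximin test is maximin in the union sense. I would handle it exactly as in the proof of Theorem 3.1 in \cite{symmetry_test}, since the $f_0$-free central sequence means no tangent-space projection is needed.
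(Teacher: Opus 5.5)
Your proposal is correct and follows the route the paper takes. The paper defers the proof to Theorem 3.1 of \cite{symmetry_test} and stresses, as you do, that the skewness central sequence $\Delta_{\lambda}^{(n)}(\mub)$ does not depend on $f_0$, so the single test $\phi^{\ast (n); \mu}$ is valid over the whole null and matches every $f_0$-parametric maximin test. Your argument supplies the standard ingredients behind this: the CLT with consistency and positive definiteness of $\hat{\Gamma}_{\lambda}$ for (i), and Le Cam's third lemma via Proposition~\ref{prop: ULAN} for (ii).
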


The universal optimality under Theorem~\ref{thm: optimality specified median}(iii) follows from the aforementioned fact that the central sequence for skewness does not depend on $f_0$. Thanks to Theorem~\ref{thm: optimality specified median}(ii), one can even express the power of our test in a concise formula.

\subsubsection{Optimal tests for symmetry about an unspecified center}\label{sec: testsymunspec}

We now consider the case of an unspecified symmetry center.
In formal terms we consider the hypothesis testing problem
\begin{equation*}\label{eq: hypotheses unknown median}
    \mathcal{H}_0^{(n)}:= \bigcup_{\mub \in \mathcal{T}_d} \bigcup_{g_0 \in\mathcal{F} \cap \mathcal{H}} P^{(n)}_{\mub, \zerob; g_0}
    \quad \text{vs} \quad
    \mathcal{H}_1^{(n)}:= \bigcup_{\lambdab \in \R^d \setminus \{\boldsymbol{0} \} } \bigcup_{\mub \in \mathcal{T}_d} \bigcup_{g_0 \in\mathcal{F} \cap \mathcal{H}} P^{(n)}_{\mub, \lambdab; g_0}.
\end{equation*}
The principal change compared to the previous section lies in the fact that we now need to estimate the nuisance parameter $\mub$ by some estimator $\hat{\mub}^{(n)}$. To this end as well as for other developments, we adopt the following assumption.

\begin{assumption} \label{assumption}
\begin{enumerate}[(i)]
    \item The estimators $\hat{\mub}^{(n)} = \left( \hat{\mu}^{(n)}_1, \ldots, \hat{\mu}^{(n)}_d \right)^\intercal$ for $\mub$ are $\sqrt{n}$-consistent and locally asymptotic discrete.

    \item The mapping $\thetab \mapsto f_0(\thetab - \mub)$ is $\mathcal{C}^2$ almost everywhere on $\mathcal{T}_d$.
\end{enumerate}
\end{assumption}
Assumption~\ref{assumption}(i) is needed to ensure that we can use Lemma 4.4 of~\cite{kreiss_adaptive_1987} which is a crucial ingredient to replace a nuisance parameter $\mub$ with an estimator and still get asymptotic results. We refer the reader to Section~4 of \cite{symmetry_test} for more details, as well as a definition and discussion of the concept of locally asymptotic discrete estimators. Assumption~\ref{assumption}(i)  is a purely technical requirement, and for a fixed sample size, any estimator can be considered part of a locally asymptotically discrete sequence.

Constructing optimal tests for symmetry under an unspecified center requires many more steps compared to the specified center scenario. Establishing optimal tests requires working with two distinct symmetric densities $f_0, g_0\in\mathcal{F}\cap\mathcal{H}$, $f_0$ being the assumed one and $g_0$ the true unknown one (see below for more details), and we need to define the following:
\begin{equation*}
\Delta_{f_0; \mu}^{(n)}(\mub) =
    \frac{1}{\sqrt{n}}\sum_{i=1}^n \begin{pmatrix}
        \phi^{f_0}_1(\thetab_{i} - \mub), & \ldots, &
        \phi^{f_0}_d(\thetab_{i} - \mub) 
    \end{pmatrix}^\intercal,
\end{equation*}
\begin{align*}
    C^{f_0; g_0}_{\mu; \mu}
    := & \cov_{g_0} \left(\Delta_{f_0; \mu}^{(n)}(\mub), \Delta_{g_0; \mu}^{(n)}(\mub) \right) 
    = \begin{pmatrix}
        I^{f_0;g_0}_{\mu_1\mu_1} & \cdots & I^{f_0;g_0}_{\mu_1\mu_d} \\
        \vdots & \ddots & \vdots \\
        I^{f_0;g_0}_{\mu_d\mu_1} & \cdots & I^{f_0;g_0}_{\mu_d\mu_d}
    \end{pmatrix}, \\
    C^{g_0}_{\mu; \lambda}
    := & \cov_{g_0} \left( \Delta_{\lambda}^{(n)}(\mub), \Delta_{g_0; \mu}^{(n)}(\mub) \right) 
    = \begin{pmatrix}
        I^{g_0}_{\mu_1\lambda_1} & \cdots & 0 \\
        \vdots & \ddots & \vdots \\
        0 & \cdots & I^{g_0}_{\mu_d\lambda_d}
    \end{pmatrix}
\end{align*}
where 
\begin{align}\label{eq: param_cov_entries}
    I^{f_0;g_0}_{\mu_j\mu_k} = \int_{\mathcal{T}_d} \phi_j^{f_0}(\thetab - \mub) \phi_k^{g_0}(\thetab - \mub) g_0(\thetab - \mub) {\mathrm d} \thetab, \quad I^{g_0}_{\mu_j\lambda_j} = \frac{2}{\pi d} \int_{\mathcal{T}_d} \frac{\sin(\theta_{j}-\mu_j)}{ 1+ \sin^2(\theta_{j}-\mu_j)} \phi_j^{g_0}(\thetab - \mub) g_0(\thetab - \mub) {\mathrm d} \thetab
\end{align}
for $j,k\in\{1,\ldots,d\}$ as well as their estimates
\begin{equation*}
    \widehat{C}^{f_0}_{\mu; \mu} 
    = \begin{pmatrix}
        \hat{I}^{f_0}_{\mu_1\mu_1} & \cdots & \hat{I}^{f_0}_{\mu_1\mu_d} \\
        \vdots & \ddots & \vdots \\
        \hat{I}^{f_0}_{\mu_1\mu_d} & \cdots & \hat{I}^{f_0}_{\mu_d\mu_d}
    \end{pmatrix},
    \quad
    \widehat{C}_{\mu; \lambda}
    = \begin{pmatrix}
        \hat{I}_{\mu_1\lambda_1} & \cdots & 0 \\
        \vdots & \ddots & \vdots \\
        0 & \cdots & \hat{I}_{\mu_d\lambda_d}
    \end{pmatrix},
\end{equation*}
where
\begin{equation}\label{eq: semi_param_cov_entries}
    \hat{I}^{f_0}_{\mu_j\mu_k} =  \frac{1}{n}\sum_{i=1}^n \frac{\partial}{\partial\theta_k}\phi_j^{f_0}(\thetab_{i} - \hat{\mub}^{(n)}),
    \quad
    \hat{I}_{\mu_j\lambda_j} = \frac{1}{n \pi d \sqrt{2}} \sum_{i=1}^n \log \left( \frac{\sec ^2 \left( \left( \theta_{ji} - \hat{\mu}^{(n)}_j\right)/{2} \right) -2 \sqrt{2}+2}{\sec ^2 \left( \left(\theta_{ji} - \hat{\mu}^{(n)}_j \right)/{2} \right)+2 \sqrt{2}+2} \right).
\end{equation}
The quantities~\eqref{eq: semi_param_cov_entries} are indeed estimates for~\eqref{eq: param_cov_entries} thanks to the law of large numbers and integration by parts under the measure induced by $g_0$. For the link between $\hat{I}_{\mu_j\lambda_j}$ and ${I}^{g_0}_{\mu_j\mu_k}$ we remind the equality~\eqref{eq: integral of weird sines}.

Equipped with these notations, we can develop our tests and start by optimal $f_0$-parametric tests which we later render valid under $g_0$. We first note that the Fisher Information matrix is not block-diagonal, meaning that asymptotically there is no independence under the null between the central sequences for skewness $\Delta_{\lambda}^{(n)}(\mub)$ and location $\Delta_{f_0;\mu}^{(n)}(\mub)$. Consequently, any variation in $\mub$ (as occurs when estimating it) induces an impact on the central sequence for skewness, which implies that we need to work with the so-called $f_0$-efficient central sequence
\begin{align*}
     \Delta_{f_0;\lambda}^{(n)\ast} (\mub)
     & = \Delta_{\lambda}^{(n)}(\mub) - \cov_{f_0}\left( \Delta_{\lambda}^{(n)}(\mub), \Delta_{f_0;\mu}^{(n)}(\mub) \right) \var_{f_0} \left( \Delta_{f_0;\mu}^{(n)} (\mub)\right)^{-1} \Delta_{f_0;\mu}^{(n)}(\mub),  
\end{align*}
obtained through a Gram-Schmidt orthogonalization procedure ensuring orthogonality of $\Delta_{f_0;\lambda}^{(n)\ast} (\mub)$
with respect to the location central sequence $\Delta_{f_0;\mu}^{(n)}(\mub)$. Under $f_0$, one can now replace $\mub$ with an estimator $\hat{\mub}^{(n)}$ satisfying Assumption~\ref{assumption} without altering the asymptotic distribution of $\Delta_{f_0;\lambda}^{(n)\ast} (\mub)$. The test statistic based on $\Delta_{f_0;\lambda}^{(n)\ast} (\hat{\mub}^{(n)})$, which we do not write out for the sake of space, will be optimal under $f_0$ but not valid under any other density $g_0$. Hence, whenever the true data density $g_0$ differs from the assumed density $f_0$, the parametric test  will not work, which is why we now focus on rendering it semi-parametric. To this end, we need to modify directly the $f_0$-efficient central sequence $\Delta_{f_0;\lambda}^{(n)\ast} (\mub)$. It is easy to see that $\cov_{g_0} \left(\Delta_{f_0;\lambda}^{(n)\ast} (\mub), \Delta_{g_0; \mu}^{(n)}(\mub) \right)\neq\zerob$, hence orthogonalization needs to occur under $g_0$, not under $f_0$. This leads to the new quantity
\begin{align}\label{eq: param test unspecified mu f,g}
    \Tilde{\Delta}_{f_0;g_0;\lambda}^{(n)} (\mub)
     & = \Delta_{\lambda}^{(n)}(\mub) - C^{g_0}_{\mu; \lambda} \left( C^{f_0; g_0}_{\mu; \mu} \right) ^{-1} \Delta_{f_0; \mu}^{(n)}(\mub),
\end{align}
which is now orthogonal to $\Delta_{g_0; \mu}^{(n)}(\mub)$ under the null hypothesis induced by $g_0$. The expression~$ \Tilde{\Delta}_{f_0;g_0;\lambda}^{(n)} (\mub)$ is thus the right starting point for building the desired semi-parametric tests. Replacing $\mub$ by an adequate estimator $\hat\mub^{(n)}$ and estimating the matrices by the estimates mentioned before yields the efficient central sequence
\begin{align}\label{eq: semi-param test unspecified mu f,g}
    \Tilde{\Delta}_{f_0; \lambda}^{(n)\ast} (\hat{\mub}^{(n)})
    = \Delta_{\lambda}^{(n)}(\hat{\mub}^{(n)}) - \widehat{C}_{\mu; \lambda}
    \left( \widehat{C}^{f_0}_{\mu; \mu} \right)^{-1} \Delta_{f_0; \mu}^{(n)}(\hat{\mub}^{(n)}).
\end{align}
In Proposition~\ref{prop: estimate quantities} below we show that the asymptotic behavior of $\Tilde{\Delta}_{f_0; \lambda}^{(n)\ast} (\hat{\mub}^{(n)})$ can be derived from that of $\Tilde{\Delta}_{f_0;g_0;\lambda}^{(n)} (\mub)$. This naturally brings us to build a most efficient test statistic on a quadratic product involving $\Tilde{\Delta}_{f_0; \lambda}^{(n)\ast} (\hat{\mub}^{(n)})$ and a quantity that estimates the variance of $ \Tilde{\Delta}_{f_0;g_0;\lambda}^{(n)} (\mub)$ under $g_0$, denoted by $V_{g_0}^{f_0} \left( \mub \right)$. Such an  estimate is given by
\begin{equation} \label{eq: variance estimate}
    \hat{V}_{f_0}\left(\hat{\mub}^{(n)} \right) 
    = \frac{1}{n} \sum_{i=1}^n \tilde{\Delta}^{(n)\ast}_{f_0; \lambda; i} \left(\hat{\mub}^{(n)} \right) \tilde{\Delta}^{(n) \ast}_{f_0; \lambda; i} \left( \hat{\mub}^{(n)} \right)^\intercal 
\end{equation}
for $\Tilde{\Delta}_{f_0; \lambda; i}^{(n)\ast} (\hat{\mub}^{(n)})$ defined via
\begin{equation*}
    \Tilde{\Delta}_{f_0; \lambda}^{(n)\ast} (\hat{\mub}^{(n)})
    = \frac{1}{\sqrt{n}} \sum_{i=1}^n \Tilde{\Delta}_{f_0; \lambda; i}^{(n)\ast} (\hat{\mub}^{(n)}).
\end{equation*}

\begin{proposition} \label{prop: estimate quantities}
    Suppose that $f_0, g_0 \in \mathcal{F} \cap \mathcal{H}$ and that Assumption \ref{assumption} holds.
    Then, for $j,k \in \{ 1,\ldots,d \}$, it holds that
    \begin{enumerate}
        \item[(i)] $
        \!
        \begin{aligned}[t]
        I^{g_0}_{\mu_j\lambda_j} - \hat{I}_{\mu_j\lambda_j} = o_P(1),
        \end{aligned}$ 
        as defined in \eqref{eq: param_cov_entries} and \eqref{eq: semi_param_cov_entries};
        
    \item[(ii)] $
        \!
        \begin{aligned}[t]
        I^{f_0;g_0}_{\mu_j\mu_k} - \hat{I}^{f_0}_{\mu_j\mu_k} = o_P(1), 
        \end{aligned}$
        as defined in \eqref{eq: param_cov_entries} and \eqref{eq: semi_param_cov_entries};
        
        \item[(iii)] $
        \!
        \begin{aligned}[t]
        \Tilde{\Delta}_{f_0; \lambda}^{(n)\ast} (\hat{\mub}^{(n)}) - \Tilde{\Delta}_{f_0;g_0;\lambda}^{(n)} (\mub) = o_P(1) \one_{d \times 1}, 
        \end{aligned}$
        as defined in \eqref{eq: param test unspecified mu f,g} and \eqref{eq: semi-param test unspecified mu f,g};

        \item[(iv)] $
        \!
        \begin{aligned}[t]
        \hat{V}_{f_0} \left(\hat{\mub}^{(n)} \right)  - V^{f_0}_{g_0}\left(\mub \right) = o_P(1)\one_{d \times d}, 
        \end{aligned}$
        as defined in \eqref{eq: variance estimate},
    \end{enumerate}
as $n\rightarrow\infty$ under $P^{(n)}_{\mub, \boldsymbol{0}; g_0}$, where $\one_{n\times m}$ is the $n\times m$ matrix of ones.
\end{proposition}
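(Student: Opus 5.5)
The plan is to treat (i) and (ii) as laws of large numbers combined with a Taylor expansion in $\mub$, to derive (iii) from (i), (ii) and Le Cam's third lemma via Lemma 4.4 of \cite{kreiss_adaptive_1987}, and to obtain (iv) from (i)--(iii) by a uniform LLN for products.

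\textbf{Step 1 (identities behind the estimates).} Integrate by parts on $\mathcal{T}_d$, using periodicity, so that boundary terms vanish. This gives
\[
I^{f_0;g_0}_{\mu_j\mu_k}=\E_{g_0}\!\left[\partial_{\theta_k}\phi^{f_0}_j(\thetab-\mub)\right].
\]
For (i), let $G(x)=\frac{1}{\sqrt 2}\log\frac{\sec^2(x/2)-2\sqrt2+2}{\sec^2(x/2)+2\sqrt2+2}$. One checks by differentiation that $G'(x)=\sin x/(1+\sin^2 x)$; this is the equality \eqref{eq: integral of weird sines} the paper refers to. Integration by parts then gives
\[
I^{g_0}_{\mu_j\lambda_j}=\frac{2}{\pi d}\int G'(\theta_j)\,\phi^{g_0}_j\,g_0
=-\frac{2}{\pi d}\int G'(\theta_j)\,\partial_j g_0
=\frac{2}{\pi d}\int G''(\theta_j)\,g_0 .
\]
Working the derivation carefully should give the form $\frac{1}{\pi d}\E_{g_0}[\,\cdot\,]$ of the statistic (up to the constant normalisation used in \eqref{eq: semi_param_cov_entries}); I would double-check the constants here. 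Apart from this constant, the point is that both targets are $g_0$-expectations of fixed functions $\psi(\thetab-\mub)$.

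\textbf{Step 2 (consistency with estimated $\mub$).} By the weak LLN, $\frac1n\sum_i\psi(\thetab_i-\mub)\to\E_{g_0}\psi$. Replacing $\mub$ by $\hat\mub^{(n)}$, write
\[
\frac1n\sum_i\bigl[\psi(\thetab_i-\hat\mub^{(n)})-\psi(\thetab_i-\mub)\bigr]
\]
and bound it by a mean-value argument. Since $\psi$ is $C^1$ almost everywhere (Assumption~\ref{assumption}(ii) for $\psi=\partial_k\phi_j^{f_0}$; $G$ is smooth away from $0$), this difference is $O_P(\|\hat\mub^{(n)}-\mub\|)=O_P(n^{-1/2})$, provided the derivative has a $g_0$-integrable local envelope. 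This proves (i) and (ii). Alternatively, local asymptotic discreteness reduces the problem to deterministic $\mub+n^{-1/2}\taub$, where convergence in $L^1$ suffices.

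\textbf{Step 3 (the main obstacle, item (iii)).} Take first a deterministic sequence $\mub_n=\mub+n^{-1/2}\taub$. By ULAN (Proposition~\ref{prop: ULAN}) applied under $g_0$ and Le Cam's third lemma, the following asymptotic linearity holds:
\[
\Delta_\lambda^{(n)}(\mub_n)=\Delta_\lambda^{(n)}(\mub)-C^{g_0}_{\mu;\lambda}\taub+o_P(1),\qquad
\Delta^{(n)}_{f_0;\mu}(\mub_n)=\Delta^{(n)}_{f_0;\mu}(\mub)-C^{f_0;g_0}_{\mu;\mu}\taub+o_P(1).
\]
To establish these, I would compute the first line directly: under $P_{\mub,\zerob;g_0}$, the difference minus its mean has vanishing variance, and the mean is $-C^{g_0}_{\mu;\lambda}\taub+o(1)$ by differentiating under the integral. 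For the second line, the cross-covariance between the $f_0$-score and the $g_0$ log-likelihood ratio gives exactly the matrix $C^{f_0;g_0}_{\mu;\mu}$. The delicate part is the $L^2$ continuity of the score $\phi^{f_0}$ under $g_0$ when $f_0\neq g_0$. I would handle it through the $C^1$ property of $f_0^{1/2}$ in $\mathcal H$ and the positivity of $f_0$ on the compact torus, which make $\phi^{f_0}$ continuous and bounded.

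Substitute these expansions into \eqref{eq: param test unspecified mu f,g}. The $\taub$ terms cancel, since
\[
-C^{g_0}_{\mu;\lambda}\taub+C^{g_0}_{\mu;\lambda}(C^{f_0;g_0}_{\mu;\mu})^{-1}C^{f_0;g_0}_{\mu;\mu}\taub=0 .
\]
Lemma 4.4 of \cite{kreiss_adaptive_1987} then lets us replace $\mub_n$ by $\hat\mub^{(n)}$. Finally, swap the true matrices for $\widehat C$ using (i) and (ii), continuity of the inverse, and $\Delta^{(n)}_{f_0;\mu}(\hat\mub^{(n)})=O_P(1)$.

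\textbf{Step 4 (item (iv)).} Write the individual terms as $\tilde\Delta^{\ast}_i=a(\thetab_i-\hat\mub^{(n)})-\widehat C_{\mu;\lambda}(\widehat C^{f_0}_{\mu;\mu})^{-1}b(\thetab_i-\hat\mub^{(n)})$, where $a$ and $b$ are the skewness and location score functions. Expand the outer product $\frac1n\sum_i\tilde\Delta^{\ast}_i\tilde\Delta^{\ast\intercal}_i$ into empirical second moments of $(a,b)$ evaluated at $\hat\mub^{(n)}$. Each of these converges to the corresponding $g_0$-moment by the argument of Step 2, and the matrix factors converge by (i) and (ii). The limit is $\var_{g_0}$ of the summand of \eqref{eq: param test unspecified mu f,g}, which is $V^{f_0}_{g_0}(\mub)$.
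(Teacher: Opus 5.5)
The gap is in Step~1 for (i), exactly where you say you would ``double-check the constants'': the discrepancy is not a constant. Your integration by parts is the right one. With $\mub=\zerob$ and periodicity,
\[
I^{g_0}_{\mu_j\lambda_j}=-\frac{2}{\pi d}\int_{\mathcal{T}_d}\frac{\sin\theta_j}{1+\sin^2\theta_j}\,\partial_{\theta_j}g_0(\thetab)\,{\mathrm d}\thetab=\frac{2}{\pi d}\,\E_{g_0}\!\left[\frac{\cos^3\Theta_j}{(1+\sin^2\Theta_j)^2}\right].
\]
This is the mean of the \emph{derivative} of the skewness score, consistent with the observed information at $\lambdab=\zerob$ in Section~\ref{sec: FIM}.

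By \eqref{eq: integral of weird sines}, however, $\hat I_{\mu_j\lambda_j}$ in \eqref{eq: semi_param_cov_entries} is the sample mean of $\frac{1}{\sqrt2\pi d}\log R$, where $R(x)=\frac{\sec^2(x/2)-2\sqrt2+2}{\sec^2(x/2)+2\sqrt2+2}$. That function is an \emph{antiderivative} of the skewness score. Your Step~2 therefore gives $\hat I_{\mu_j\lambda_j}\to\frac{1}{\sqrt2\pi d}\E_{g_0}[\log R(\Theta_j-\mu_j)]$ in probability, and this is not $I^{g_0}_{\mu_j\lambda_j}$:
\begin{itemize}
\item since $\sec^2\geq1>2\sqrt2-2$, we have $0<R<1$, so $\hat I_{\mu_j\lambda_j}<0$ always;
\item for $d=1$ and $g_0$ von Mises with $\kappa>0$, one has $\phi^{g_0}_1(\theta)=\kappa\sin\theta$ and $I^{g_0}_{\mu_1\lambda_1}=\frac{2\kappa}{\pi}\E_{g_0}[\sin^2\Theta/(1+\sin^2\Theta)]>0$.
\end{itemize}
So (i) cannot be proved for the estimator as defined.

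This propagates to (iii) and (iv). The final substitution in your Step~3, and the matrix factors in Step~4, require $\widehat C_{\mu;\lambda}-C^{g_0}_{\mu;\lambda}=o_P(1)$. Meanwhile $\Delta^{(n)}_{f_0;\mu}(\hat\mub^{(n)})$ is $O_P(1)$ but in general not $o_P(1)$, so with this $\widehat C_{\mu;\lambda}$ both (iii) and (iv) fail. (Minor slip: with your $G$, $G'(x)=2\sin x/(1+\sin^2x)$, not $\sin x/(1+\sin^2 x)$.)

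The paper's own argument has the same hole. The bound \eqref{eq: log bound} only controls the effect of replacing $\mub$ by $\hat\mub^{(n)}$. The identification of the limit with $I^{g_0}_{\mu_j\lambda_j}$ rests on the appeal to \eqref{eq: integral of weird sines}, which integrates by parts in the wrong direction.

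The repair is to follow your own computation through and set
\[
\hat I_{\mu_j\lambda_j}=\frac{2}{n\pi d}\sum_{i=1}^n\frac{\cos^3(\theta_{ij}-\hat\mu^{(n)}_j)}{\bigl(1+\sin^2(\theta_{ij}-\hat\mu^{(n)}_j)\bigr)^2},
\]
the analogue of the cosine-based estimator in the sine-skewed case. Its summand is smooth, bounded and Lipschitz, so your Step~2 gives an $O_P(n^{-1/2})$ error with no envelope condition. Steps~3--4 then go through, following the same route as Propositions~4.1--4.2 of \cite{symmetry_test}, to which the paper defers (ii)--(iv).

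One smaller point on (ii): your mean-value argument needs $\partial_{\theta_k}\phi^{f_0}_j$ to be $C^1$, i.e.\ third derivatives of $f_0$. Assumption~\ref{assumption}(ii) does not provide that, so rely there on your fallback via local asymptotic discreteness and $L^1$-continuity of translations.
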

Elements of this proof are  given in Section~\ref{sec: technical details}. 
Proposition~\ref{prop: estimate quantities} is crucial in establishing the final semi-parametric test statistic and its asymptotic properties, as done in the  following result.

\begin{theorem} \label{thm: optimality unspecified median}
We define the test statistic 
\begin{equation*} \label{eq: semi-parametric test unspecified}
    Q_{f_0}^{\ast (n)} := \left(\Tilde{\Delta}_{f_0; \lambda}^{(n)\ast} (\hat{\mub}^{(n)}) \right)^\intercal \left(\hat{V}_{f_0} (\hat{\mub}^{(n)} ) \right)^{-1} \Tilde{\Delta}_{f_0; \lambda}^{(n)\ast} (\hat{\mub}^{(n)})
\end{equation*}
and let $\phi^{\ast (n)}_{f_0}$ denote the test that rejects $\mathcal{H}^{(n)}_{0}$ at asymptotic level $\alpha$ whenever the statistic $Q^{\ast(n)}_{f_0} > \chi^2_{d; \alpha}$.
Considering $f_0, g_0 \in \mathcal{F} \cap \mathcal{H}$ and working under Assumption \ref{assumption}, the following hold:
    \begin{enumerate}
        \item[(i)] Under $\mathcal{H}^{(n)}_{0}$,
        \begin{equation*}
            Q_{f_0}^{\ast(n)}   \xrightarrow{\mathcal{D}} \chi^2_d
        \end{equation*}
        as $n \rightarrow \infty$, so that the test $\phi^{\ast (n)}_{f_0}$ has asymptotic level $\alpha$ under the same hypothesis.

        \item[(ii)] Under $\cup_{\mub \ \in \mathcal{T}_d}P^{(n)}_{\mub, n^{-1/2} \boldsymbol{\tau}^{(n)}_\lambda; g_0}$,
\begin{equation*}\label{eq: dist_under_alternative unspecified}
    Q_{f_0}^{\ast (n)} \xrightarrow{\mathcal{D}} 
    \chi^2_d \Bigl( \boldsymbol{\tau}^\intercal_\lambda C_{g_0}^{f_0}(\mub) V_{g_0}^{f_0}(\mub)^{-1} C_{g_0}^{f_0}(\mub) \boldsymbol{\tau}_\lambda \Bigr)
\end{equation*}
as $n \rightarrow \infty$ with $\boldsymbol{\tau}_\lambda = \lim_{n\rightarrow\infty} \boldsymbol{\tau}^{(n)}_\lambda$
and $C_{g_0}^{f_0}(\mub) = \cov_{g_0}\left( \Tilde{\Delta}^{(n)}_{f_0;g_0;\lambda} (\mub), \Delta^{(n)}_{\lambda} (\mub) \right)$.

     \item[(iii)] The test $\phi^{\ast (n)}_{f_0}$ is locally and asymptotically maximin at level $\alpha$ when testing $\mathcal{H}^{(n)}_0$ against $\cup_{\mu \ \in \mathcal{T}_d}P^{(n)}_{\mub, n^{-1/2} \boldsymbol{\tau}^{(n)}_\lambda; f_0}$.
    \end{enumerate}
\end{theorem}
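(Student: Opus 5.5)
The plan is to follow the standard Le Cam route, as in the proof of Theorem 4.1 of \cite{symmetry_test}, with Proposition~\ref{prop: estimate quantities} doing most of the work.

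\emph{Part (i).} Fix $\mub$ and $g_0$ under the null. By Proposition~\ref{prop: estimate quantities}(iii), $\Tilde{\Delta}_{f_0; \lambda}^{(n)\ast} (\hat{\mub}^{(n)})$ is asymptotically equivalent to $\Tilde{\Delta}_{f_0;g_0;\lambda}^{(n)} (\mub)$. The latter is a normalized sum of i.i.d.\ terms. Under $g_0$ these terms have mean zero: the skewness components are odd functions integrated against a symmetric density, and $\phi^{f_0}_j$ is odd. They have finite second moments because $f_0,g_0\in\mathcal{H}$. The multivariate CLT then gives convergence to $\mathcal{N}_d(\zerob, V^{f_0}_{g_0}(\mub))$. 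Proposition~\ref{prop: estimate quantities}(iv) provides a consistent estimator of $V^{f_0}_{g_0}(\mub)$. Assuming this matrix is positive definite, which is where the singularity-free property of Theorem~\ref{thm: singularity} enters, the continuous mapping theorem and Slutsky's lemma give the $\chi^2_d$ limit of the quadratic form. The level statement follows.

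\emph{Part (ii).} I would use Le Cam's third lemma. By Proposition~\ref{prop: ULAN} under $g_0$, the log-likelihood ratio of $P^{(n)}_{\mub, n^{-1/2}\taub^{(n)}_\lambda; g_0}$ against $P^{(n)}_{\mub,\zerob;g_0}$ equals $\taub^{(n)\intercal}_\lambda\Delta^{(n)}_\lambda(\mub) - \frac12\taub^{\intercal}_\lambda\Gamma_{g_0;\lambda}\taub_\lambda + o_P(1)$. The pair $\bigl(\Tilde{\Delta}_{f_0;g_0;\lambda}^{(n)}(\mub), \Lambda^{(n)}\bigr)$ is jointly asymptotically normal by the same CLT, with cross-covariance $C^{f_0}_{g_0}(\mub)\taub_\lambda$. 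The third lemma therefore gives $\mathcal{N}_d\bigl(C^{f_0}_{g_0}(\mub)\taub_\lambda, V^{f_0}_{g_0}(\mub)\bigr)$ under the alternative. The estimation results transfer by contiguity, since $o_P(1)$ under the null is also $o_P(1)$ under contiguous alternatives. The noncentral $\chi^2_d$ limit with the stated noncentrality then follows, using that $C^{f_0}_{g_0}$ is symmetric or transposing as needed.

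\emph{Part (iii).} Take $g_0=f_0$. Then $\Tilde{\Delta}_{f_0;f_0;\lambda}^{(n)}$ coincides with the $f_0$-efficient central sequence $\Delta^{(n)\ast}_{f_0;\lambda}(\mub)$. Both $C^{f_0}_{f_0}$ and $V^{f_0}_{f_0}$ reduce to the efficient information $\Gamma^\ast_{f_0;\lambda}=\Gamma_{f_0;\lambda}-C\,\Gamma_{f_0;\mu}^{-1}C^\intercal$. The statistic is therefore asymptotically equivalent to the efficient-score test $\Delta^{(n)\ast\intercal}_{f_0;\lambda}(\Gamma^\ast_{f_0;\lambda})^{-1}\Delta^{(n)\ast}_{f_0;\lambda}$.

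The key point is that, by local asymptotic discreteness (Lemma 4.4 of \cite{kreiss_adaptive_1987}) together with ULAN, plugging in $\hat\mub^{(n)}$ does not change the asymptotics. Since the model is ULAN, the limiting experiment for $\lambda$ with nuisance $\mub$ is Gaussian. In that Gaussian shift, the $\chi^2$ test based on the efficient central sequence is maximin for $\lambdab=\zerob$ against $\|\taub_\lambda\|$ bounded away from zero. This is the standard result (e.g.\ Section 11.9 of Le Cam's book). Our test inherits local asymptotic maximinity, and its validity under the whole $\mathcal{H}_0^{(n)}$ from (i) ensures it belongs to the correct class of level-$\alpha$ tests.

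The main obstacle is Proposition~\ref{prop: estimate quantities}(iii) in part (i). This requires an asymptotic linearity of $\Delta^{(n)}_\lambda$ and $\Delta^{(n)}_{f_0;\mu}$ in $\mub$ under $g_0$, namely $\Delta(\mub+n^{-1/2}\taub)-\Delta(\mub)=-C\,\taub+o_P(1)$. Since that proposition may be assumed, the remaining delicate point is to check nondegeneracy of $V^{f_0}_{g_0}(\mub)$.
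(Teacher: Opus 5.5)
Your argument is correct and follows the route the paper intends. The paper gives no self-contained proof and defers to the proof of Theorem~4.3 (not 4.1) of \cite{symmetry_test}, which proceeds as you do: Proposition~\ref{prop: estimate quantities}(iii)--(iv) with the CLT and Slutsky for (i), Le Cam's third lemma plus contiguity for (ii), and reduction to the $f_0$-efficient central sequence at $g_0=f_0$ for (iii).

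Your transposition caveat in (ii) is warranted. For $g_0\neq f_0$ and $d\geq 2$ the matrix $C^{f_0}_{g_0}(\mub)$ need not be symmetric, so the noncentrality should be read as $\taub_\lambda^\intercal C^{f_0}_{g_0}(\mub)^\intercal V^{f_0}_{g_0}(\mub)^{-1}C^{f_0}_{g_0}(\mub)\taub_\lambda$.
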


The proof of this theorem is left to the reader (see the proof of Theorem 4.3 of~\cite{symmetry_test}). We obtain a different test $\phi^{\ast (n)}_{f_0}$ for each density $f_0$ it is based on, and  each test is valid under the entire null hypothesis $\mathcal{H}^{(n)}_{0}$. Thanks to our result on Fisher Information singularity for arctan-skewing (see Theorem~\ref{thm: singularity}), we know that we can use any density $f_0$ from the literature without running into any issues, which is one of the main shortcomings of sine-skewing where optimal tests should not be based on $f_0$ being the von Mises or the Cosine density, for instance (see~\cite{symmetry_test} and~\cite{schutte_singularity_2026}). This is a further advantage of our arctan-skewing compared to sine-skewing.

\section{Real data analysis}\label{sec: real}
In this section, we compare  our arctan-skewing scheme to sine-skewing in terms of fitting real data. The chosen dataset includes data from the protein folding problem, which involves predicting the three-dimensional structure of a protein using the sequence of amino acids that make up the protein.
This is an important and challenging problem with implications to vaccine and medicine development. Proteins are made up of amino acids, which in turn are made up of the backbone and the sidechains.
In this analysis, we only focus on the backbone but one could also consider the angles formed in the sidechains.
The backbone consists of two flexible chemical bonds, NH-C$\alpha$ and C$\alpha$-CO, where C$\alpha$ denotes the central Carbon atom.
The angle that can be found at the beginning of each residue is the NH-C$\alpha$ torsion angle, denoted by $\phi$ and at the end is the C$\alpha$-CO torsion angle, denoted by $\psi$.
These angles need to be studied as specific combinations of them allow the favourable hydrogen bonding patterns, while others can `result in clashes within the backbone or between adjacent sidechains' \citep{jacobsen_introduction_2023}. The dihedral angles $\phi,\psi$ form data on the two-dimensional torus and \cite{symmetry_test} showed that such data are not symmetric, hence are best modelled with asymmetric toroidal distributions. 

For the present data analysis, we consider position 55 at 2000 randomly selected times in the molecular dynamic trajectory of the SARS-CoV-2 spike domain from \cite{genna_sars-cov-2_2020}.
The position occurs in $\alpha$-helix throughout the trajectory. 
Since we only use data from an $\alpha$-helix, the data used have a unique mode and need not be modelled by a mixture model. 
Applying our test $\phi^{\ast (n)}_{f_0}$ using the Sine distribution as $f_0$, with various combinations of $\kappa_1$, $\kappa_2$ and $\rho$, rejects the null hypothesis of symmetry at the 5\% level of significance.
Applying the same test with the Cosine model gives a p-value of around 0.06-0.07 for different values of the parameters, thus we can reject the null hypothesis at the 10\% level but not at 5\%.
The bwC model is not used as the data is likely too concentrated for the heavy tails of the bwC distribution.
Our evidence for asymmetry  confirms the conclusion of \cite{symmetry_test} when using 2 dihedral angles instead of 3.
Thus, we compute the fit of skewed models on our data, and more specifically
we compare the arctan-skewed and sine-skewed versions of the bwC, Sine and Cosine models in Table~\ref{tab: proteins}. 
We provide the maximum likelihood estimates of all parameters along with the value of the log-likelihood. 
Since all models have the same number of parameters, there is no need to give the Akaike or Bayesian Information Criteria and we compare the fitting capacity of the different models in terms of the log-likelihood: the higher the better. 
We see that for each symmetric density, its arctan-skewed version provides a better fit than the sine-skewed version. 
We observe that for the bwC and the Sine models, the sine-skewed version reaches the limit of the parameter space for the skewness parameter, meaning that $|\lambda_1|+|\lambda_2|=1$. 
The overall best fit is given by the arctan-skewed Sine model.
The skewness for each component corresponding to this model is given by $\mathcal{S}_1 = 0.36$ and $\mathcal{S}_2 = -0.46$.
The dependence between $\phi$ and $\psi$ is given by $\rho_{\text{FL}} = -0.83$, where we focus on the signed correlation coefficient $\rho_{\text{FL}}$. We note that both the skewness direction as well as the negative correlation correspond to what we observe visually, see Figure~\ref{fig: proteins} which also illustrates the good fit of the arctan-skewed Sine model. 
We notice the bad fit provided by any skewed version of the bwC model. 
This is due to the fact that the bwC is not suitable for light tails, as already mentioned.

\begin{table}[ht]
\centering
\begin{tabular}{llrrrrrrrr}
  \hline
 Skewing & Model & log-likelihood & $\hat{\mu}_1$ & $\hat{\mu}_2$ & $\hat{\kappa}_1$ & $\hat{\kappa}_2$ & $\hat{\rho}$ & $\hat{\lambda}_1$ & $\hat{\lambda}_2$ \\ 
  \hline
\multirow{3}{*}{Arctan} & bwC & -0.52 & 1.98 & 2.77 & 0.88 & 0.85 & -0.33 & -4.40 & 0.96 \\
& \textbf{Sine} & \textbf{719.39} & \textbf{1.85} & \textbf{2.93} & \textbf{33.49} & \textbf{20.80} & \textbf{-15.83} & \textbf{8.15} & \textbf{-7.51} \\ 
  & Cosine & 712.34 & 1.96 & 2.85 & 43.81 & 32.11 & -11.78 & -6.50 & -6.93 \\ 
  \hline
  \multirow{3}{*}{Sine} & bwC & -746.08 & 1.82 & -3.14 & 0.85 & 0.70 & -0.50 & -0.00 & -1.00 \\ 
  & Sine & 706.33 & 1.96 & 2.80 & 34.83 & 22.19 & -13.93 & -0.88 & -0.12 \\ 
  & Cosine & 706.06 & 1.96 & 2.80 & 48.80 & 36.17 & -14.37 & -0.77 & -0.13 \\ 
   \hline
\end{tabular}
\caption{Maximum likelihood estimates of the parameters of the bivariate wrapped Cauchy, Sine and Cosine models combined with arctan- or sine-skewing mechanisms, along with the resulting log-likelihood values. In boldface we indicate the best-fitting model.}
\label{tab: proteins}
\end{table}

\begin{figure}[htbp]
    \centering

    \begin{subfigure}{0.48\textwidth}
        \centering
        \includegraphics[height=5cm, width=\textwidth, trim=0cm 1cm 0cm 2cm, clip]{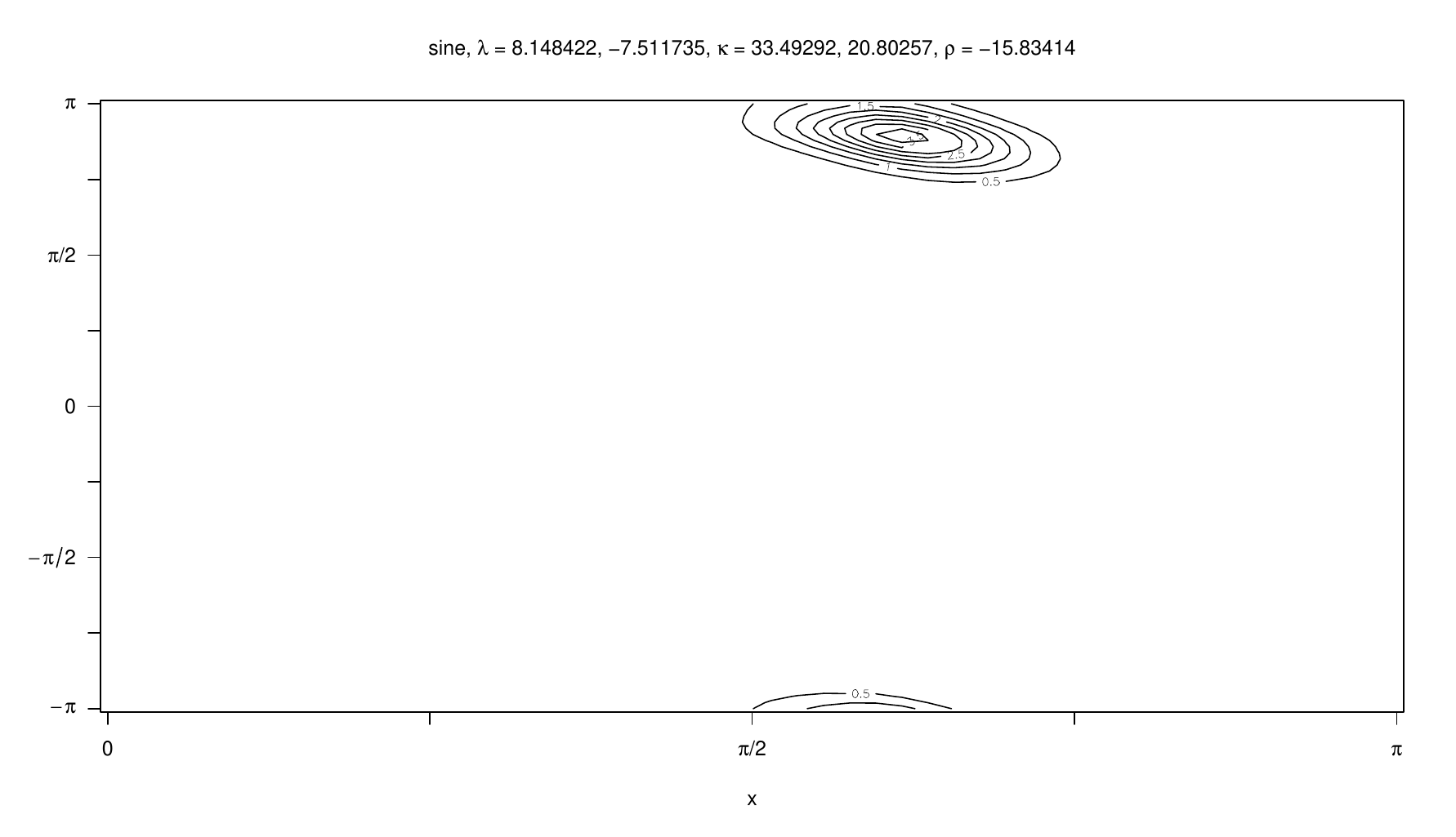}
    \end{subfigure}
    \begin{subfigure}{0.48\textwidth}
        \centering
        \includegraphics[height=5cm, width=\textwidth, trim=0cm 1cm 0cm 2cm, clip]{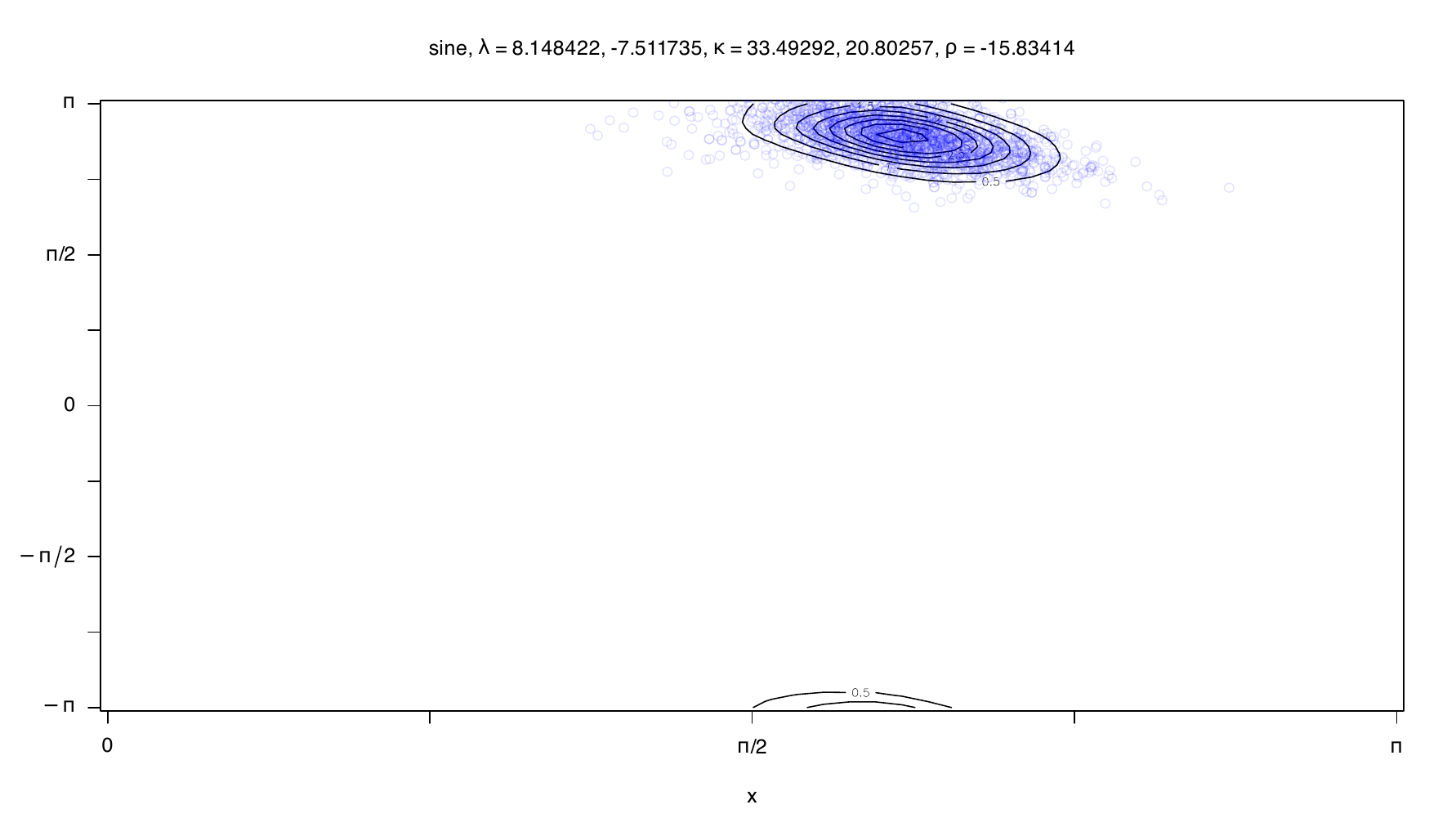}
    \end{subfigure}

    \caption{Contour plots of the best-fitting arctan-skewed Sine density with parameters $\hat{\mu}_1 = 1.85$, $\hat{\mu}_2 = 2.93$, $\hat{\kappa}_1 = 33.49$, $\hat{\kappa}_2 = 20.80$, $\hat{\rho} = -15.83$ and skewness parameters $\hat{\lambda}_1 = 8.15$   $\hat{\lambda}_2 = -7.51$.
    Plots without (left) and with (right) the data points are provided.}
    \label{fig: proteins}
\end{figure}

\section{Discussion}\label{sec: final}
In this paper, we have proposed the arctan-skewing scheme, an alternative to sine-skewing that retains the latter’s advantages while overcoming its main shortcomings, namely the restriction imposed on the skewness parameter and the singularity of the Fisher Information matrix in the vicinity of symmetry when sine-skewing is combined with some of the most popular symmetric densities $f_0$ from the literature like the circular von Mises or bivariate Cosine models. Removing the information singularity issue also entails the advantage that the optimal tests for symmetry we have built in this paper can be based on any density $f_0$ from the literature, unlike for the tests built in \cite{symmetry_test} which are optimal against sine-skewing.
The real data analysis on protein data moreover revealed  a superior fit of arctan-skewing over sine-skewing. Our arctan-skewing scheme therefore provides a compelling alternative to sine-skewing.

\section{Technical details} \label{sec: technical details}

\begin{proof}[\rm\textbf{Proof of Theorem~\ref{thm: singularity}.}]
The Fisher Information Matrix $I_{f_0}$ for any symmetric distribution $f_0$ subject to the arctan-skewing is given by
\begin{equation*}
    I_{f_0} = \E \left[S_{f_0} \left(S_{f_0}\right)^\intercal \right],
\end{equation*}
with $S_{f_0}$ the notation for the score function evaluated in the vicinity of symmetry. The matrix
$I_{f_0}$ is singular if and only if the components of $S_{f_0}$ are linearly dependent.
Immediate calculations yield that the score function of \eqref{eq: sum_model} in the vicinity of symmetry corresponds to
\begin{equation} \label{eq: score}
    S_{f_0} 
    =  \left( 
         -\frac{\frac{\partial}{\partial \theta_1}f_0(\thetab-\mub)}{f_0(\thetab-\mub)},
        \dots,
        -\frac{\frac{\partial}{\partial \theta_d}f_0(\thetab-\mub)}{f_0(\thetab-\mub)},
        \frac{2}{\pi d} \frac{\sin(\theta_{1}-\mu_1)}{ 1+ \sin^2(\theta_{1}-\mu_1)},
        \dots,
        \frac{2}{\pi d} \frac{\sin(\theta_{d}-\mu_d)}{ 1 + \sin^2(\theta_{d}-\mu_d)} 
    \right)^\intercal.
\end{equation}
For the rest of the proof, we set $\mub = \zerob$ without loss of generality, for notational convenience.
Therefore, the Fisher Information matrix is singular if and only if there exist real coefficients 
$\alpha_1,\ldots,\alpha_d,\beta_1,\ldots,\beta_d$ not all equal to 0 satisfying 
\begin{equation*}
    - \sum_{i=1}^d \alpha_i \frac{\frac{\partial}{\partial \theta_i}f_0(\thetab)}{f_0(\thetab)} + \frac{2}{\pi d}\sum_{i=1}^d \beta_i \frac{\sin\theta_i}{1 + \sin^2\theta_i} = 0
\end{equation*}
which can be re-written as
\begin{equation}\label{eq: pde_to_be_solved}
    \sum_{i=1}^d \alpha_i \frac{\partial}{\partial \theta_i}f_0(\thetab) = \frac{2}{\pi d}
    f_0(\thetab) \sum_{i=1}^d \beta_i \frac{\sin\theta_i}{1 + \sin^2\theta_i}.
\end{equation}
Our goal is to solve the partial differential equation (PDE) \eqref{eq: pde_to_be_solved} with the initial condition $\thetab=\thetab_0$, using the method of characteristics (for more details see \cite{pdes_book}). 
Note that if $\alpha_1 = \ldots = \alpha_d = 0$, then \eqref{eq: pde_to_be_solved} cannot hold for all $\thetab \in \mathcal{T}_d$, so we require at least one $\alpha_i \neq 0.$
The transport structure of this PDE leads to a family of curves, known as the characteristic curves, along which the PDE \eqref{eq: pde_to_be_solved} reduces to an ODE. 
Those curves are classically defined by integrating the transport coefficients, and are here given by
\begin{align*}
\frac{\partial}{\partial t} \varphi(t,\thetab_{0}) = \alphab.
\end{align*}
Combining this with the initial condition $\varphi(0,\thetab_0)=\thetab_0$, the characteristic curves are the straight lines given by
\begin{align*}
    \varphi(t,\thetab_{0}) & = \thetab_{0} + t\alphab.
\end{align*}

\noindent Next, in order to simplify the PDE \eqref{eq: pde_to_be_solved} by absorbing its right-hand side, we define 
\begin{align*}
   h_0(\thetab)    & = f_0(\thetab) \exp \left( - \frac{2}{\pi d}\frac{1}{2 \sqrt{2}} \sum^d_{i=1} \frac{\beta_i}{\alpha_i} \log \left( \frac{\sec ^2\left(\frac{\theta_i}{2}\right)-2 \sqrt{2}+2}{\sec ^2\left(\frac{\theta_i}{2}\right)+2 \sqrt{2}+2} \right) \right) \\
   & = f_0(\thetab) \exp \left( \log \left( \prod^d_{i=1} \left( \frac{\sec ^2\left(\frac{\theta_i}{2}\right)-2 \sqrt{2}+2}{\sec ^2\left(\frac{\theta_i}{2}\right)+2 \sqrt{2}+2} \right)^{-\frac{\beta_i}{\sqrt{2}\pi d\alpha_i}} \right) \right) \\
   & = f_0(\thetab) \prod^d_{i=1} \left( \frac{\sec ^2\left(\frac{\theta_i}{2}\right)-2 \sqrt{2}+2}{\sec ^2\left(\frac{\theta_i}{2}\right)+2 \sqrt{2}+2} \right) ^{-\frac{\beta_i}{\sqrt{2}\pi d\alpha_i}} 
   = f_0(\thetab) \prod^d_{i=1}  \left( \frac{\sec ^2\left(\frac{\theta_i}{2}\right)-2 \sqrt{2}+2}{\sec ^2\left(\frac{\theta_i}{2}\right)+2 \sqrt{2}+2} \right)^{\delta_i}
\end{align*}

\noindent where 
\begin{equation*}
    \delta_i = \left\{ \begin{array}{cc}
       -\frac{\beta_i}{\sqrt{2}\pi d\alpha_i} & \alpha_i \neq 0 \\
        -\frac{\beta_i}{\sqrt{2}\pi d} & \alpha_i = 0
    \end{array} \right. .
\end{equation*}
We note that
\begin{equation}\label{eq: integral of weird sines}
    \frac{d}{d \theta_i} \log \left( \frac{\sec ^2\left(\frac{\theta_i}{2}\right)-2 \sqrt{2}+2}{\sec ^2\left(\frac{\theta_i}{2}\right)+2 \sqrt{2}+2} \right)
    = \frac{2\sqrt{2} \sin\theta_i}{1 + \sin^2\theta_i}.
\end{equation}
Solving \eqref{eq: pde_to_be_solved} is equivalent to solving the homogeneous PDE
\begin{align}
    \sum^d_{i=1} \alpha_i \frac{\partial h_0(\thetab)}{\partial \theta_i}
    = & \sum^d_{i=1} \alpha_i \frac{\partial f_0(\thetab)}{\partial \theta_i} \exp \left(- \frac{1}{\sqrt{2}\pi d} \sum^d_{i=1} \frac{\beta_i}{\alpha_i} \log \left( \frac{\sec ^2\left(\frac{\theta_i}{2}\right)-2 \sqrt{2}+2}{\sec ^2\left(\frac{\theta_i}{2}\right)+2 \sqrt{2}+2} \right) \right) \nonumber \\
    & - f_0(\thetab) \sum_{i=1}^d \frac{2\beta_i}{\pi d} \frac{\sin\theta_i}{1 + \sin^2\theta_i} \exp \left(- \frac{1}{\sqrt{2}\pi d} \sum^d_{i=1} \frac{\beta_i}{\alpha_i} \log \left( \frac{\sec ^2\left(\frac{\theta_i}{2}\right)-2 \sqrt{2}+2}{\sec ^2\left(\frac{\theta_i}{2}\right)+2 \sqrt{2}+2} \right) \right) 
    =  \ 0.\label{eq: pde_to_be_solved h}
\end{align}
We then derive the simple ODE verified by the solution of this transport problem along the characteristic curves:
\begin{equation*}
     \frac{\partial h_0(\varphi(t,\thetab_0))}{\partial t} = \sum^d_{i=1}\frac{\partial h_0}{\partial\theta_i}(\varphi(t,\thetab_0))\cdot \frac{\partial (\thetab_{0} +t \alphab)}{\partial t}= \sum^d_{i=1} \alpha_i \frac{\partial h_0}{\partial \theta_i}(\varphi(t,\thetab_0)).
\end{equation*}
Reminding its equivalent formulation \eqref{eq: pde_to_be_solved h}, the PDE \eqref{eq: pde_to_be_solved} is true if and only if for all $t,\thetab_0$, 
$$\frac{\partial h_0(\varphi(t,\thetab_0))}{\partial t}=0,$$
namely if $h_0$ is constant along each characteristic curve. This holds if and only if there exist $\alpha_i \neq 0$ 
such that, for all $t\in \mathbb{R}$ and $\thetab_0 \in \mathcal{T}_d$, we have
    $h_0(\thetab_0 + t\alphab) = h_0(\thetab_0).$
\end{proof}

\begin{proof}[\rm\textbf{Proof of Theorem~\ref{theo:margins}.}]
Under the condition that $f_0(\thetab_m, \thetab_{m^c}) = f_0(\thetab_m, -\thetab_{m^c})$, it holds that
\begin{align*}
   f_m(\thetab_m) 
   & = \int_{\mathcal{T}_d} f_0 (\thetab) \left( 1 + \frac{2}{\pi d} \sum_{i=1}^d \arctan \left( \frac{\lambda_i \sin (\theta_i)}{1 + \sin^2 (\theta_i)} \right) \right) \mathrm{d}\theta_{m^c} \\
   & = f_{0;m} (\thetab_m) \left( 1 + \frac{2}{\pi d} \sum_{i\in m} \arctan \left( \frac{\lambda_i \sin (\theta_i)}{1 + \sin^2 (\theta_i)} \right) \right)  + \frac{2}{\pi d} \sum_{i \in m^c} \int_{-\pi}^\pi \cdots \int_{-\pi}^\pi f_0 (\thetab) \arctan \left( \frac{\lambda_i \sin (\theta_i)}{1 + \sin^2 (\theta_i)} \right) \mathrm{d}\theta_{m^c} \\
   & = f_{0;m} (\thetab_m) \left( 1 + \frac{2}{\pi d} \sum_{i\in m} \arctan \left( \frac{\lambda_i \sin (\theta_i)}{1 + \sin^2 (\theta_i)} \right) \right)  + \frac{2}{\pi d} \sum_{i \in m^c} \int_{0}^\pi \cdots \int_{0}^\pi f_0 (\thetab) \arctan \left( \frac{\lambda_i \sin (\theta_i)}{1 + \sin^2 (\theta_i)} \right) \mathrm{d}\theta_{m^c} \\
   & + \frac{2}{\pi d} \sum_{i \in m^c} \int_{0}^\pi \cdots \int_{0}^\pi f_0 (\thetab_m, -\thetab_{m^c}) \arctan \left( \frac{\lambda_i \sin (-\theta_i)}{1 + \sin^2 (-\theta_i)} \right) \mathrm{d}\theta_{m^c}\\
   & = f_{0;m} (\thetab_m) \left( 1 + \frac{2}{\pi d} \sum_{i\in m} \arctan \left( \frac{\lambda_i \sin (\theta_i)}{1 + \sin^2 (\theta_i)} \right) \right).
\end{align*}
\end{proof}

\begin{proof}[\rm\textbf{Proof of Theorem~\ref{theo:conds}.}]
Straightforward calculations yield
\begin{align*}
    f_{m|m^c}(\thetab_m| \thetab_{m^c}) 
    & = \frac{f(\thetab)}{f_{m^c}(\thetab_{m^c})}
    = \frac{f_0 (\thetab) \left( 1 + \frac{2}{\pi d} \sum_{i=1}^d \arctan \left( \frac{\lambda_i \sin (\theta_i)}{1 + \sin^2 (\theta_i)} \right) \right)}{f_{0;m^c} (\thetab_{m^c}) \left( 1 + \frac{2}{\pi d} \sum_{i\in m^c} \arctan \left( \frac{\lambda_i \sin (\theta_i)}{1 + \sin^2 (\theta_i)} \right) \right)} \\
    & = f_{0;m|m^c} (\thetab_{m} | \thetab_{m^c}) \frac{1 + \frac{2}{\pi d} \sum_{i \in m} \arctan \left( \frac{\lambda_i \sin (\theta_i)}{1 + \sin^2 (\theta_i)} \right) + \frac{2}{\pi d} \sum_{i \in m^c} \arctan \left( \frac{\lambda_i \sin (\theta_i)}{1 + \sin^2 (\theta_i)} \right)}{1 + \frac{2}{\pi d} \sum_{i\in m^c} \arctan \left( \frac{\lambda_i \sin (\theta_i)}{1 + \sin^2 (\theta_i)} \right)} \\
    & = f_{0;m|m^c} (\thetab_{m} | \thetab_{m^c}) \left(1 + \frac{\frac{2}{\pi d} \sum_{i \in m} \arctan \left( \frac{\lambda_i \sin (\theta_i)}{1 + \sin^2 (\theta_i)} \right)}{1 + \frac{2}{\pi d} \sum_{i\in m^c} \arctan \left( \frac{\lambda_i \sin (\theta_i)}{1 + \sin^2 (\theta_i)} \right)} \right).
\end{align*}
\end{proof}

\begin{proof}[\rm\textbf{Proof of Theorem~\ref{Theor: moments}.}]
Simple manipulations yield
\begin{align*}
    \alpha_{\pb} 
    & = \E[\cos( \pb^\intercal \Thetab)]
    = \int_{\mathcal{T}_d} \cos(\pb^\intercal \thetab) f(\thetab) \mathrm{d} \thetab \\
    & = \int_{\mathcal{T}_d} \cos(\pb^\intercal \thetab) f_0 (\thetab) \left( 1 + \frac{2}{\pi d} \sum_{i=1}^d \arctan \left( \frac{\lambda_i \sin (\theta_i)}{1 + \sin^2 (\theta_i)} \right) \right) \mathrm{d} \thetab \\
    & = \int_{\mathcal{T}_d} \cos(\pb^\intercal \thetab) f_0 (\thetab) \mathrm{d} \thetab + \frac{2}{\pi d} \sum_{i=1}^d \int_{\mathcal{T}_d} \cos(\pb^\intercal \thetab) f_0 (\thetab)  \arctan \left( \frac{\lambda_i \sin (\theta_i)}{1 + \sin^2 (\theta_i)} \right) \mathrm{d} \thetab  = \alpha^0_{\pb}
\end{align*}
where the last equality follows from integrating an odd function over a symmetric domain. With a similar reasoning we obtain 
\begin{align}\label{eq: cosine_moments}
    \beta_{\pb} 
    & = \E[\sin( \pb^\intercal \Thetab)]
    = \int_{\mathcal{T}_d} \sin(\pb^\intercal \thetab) f(\thetab) \mathrm{d} \thetab \nonumber \\
    & = \int_{\mathcal{T}_d} \sin(\pb^\intercal \thetab) f_0 (\thetab) \left( 1 + \frac{2}{\pi d} \sum_{i=1}^d \arctan \left( \frac{\lambda_i \sin (\theta_i)}{1 + \sin^2 (\theta_i)} \right) \right) \mathrm{d} \thetab \nonumber \\
    & = \int_{\mathcal{T}_d} \sin(\pb^\intercal \thetab) f_0 (\thetab) \mathrm{d} \thetab + \frac{2}{\pi d} \sum_{i=1}^d \int_{\mathcal{T}_d} \sin(\pb^\intercal \thetab) f_0 (\thetab)  \arctan \left( \frac{\lambda_i \sin (\theta_i)}{1 + \sin^2 (\theta_i)} \right) \mathrm{d} \thetab \nonumber \\
    & = \frac{2}{\pi d} \sum_{i=1}^d \int_{\mathcal{T}_d} \sin(\pb^\intercal \thetab) f_0 (\thetab)  \arctan \left( \frac{\lambda_i \sin (\theta_i)}{1 + \sin^2 (\theta_i)} \right) \mathrm{d} \thetab.
\end{align}
Using Fourier {sine series}, we can rewrite 
\begin{equation*}
    \arctan \left( \frac{\lambda_i \sin (\theta_i)}{1 + \sin^2 (\theta_i)} \right)
    = \sum_{n=1}^\infty b_n (\lambda_i) \sin \left( n \theta_i \right) 
\end{equation*}
for $b_n (\lambda_i)$ as in~\eqref{Fourcoeff} since $\arctan \left( \frac{\lambda_i \sin (\theta_i)}{1 + \sin^2 (\theta_i)} \right)$ is {odd and} periodic on $\mathcal{T}_1$. With this in hand~\eqref{eq: cosine_moments} turns into 
\begin{align*}
    \beta_{\pb} 
    & = \frac{2}{\pi d} \sum_{i=1}^d \int_{\mathcal{T}_d} \sin(\pb^\intercal \thetab) f_0 (\thetab)  \sum_{n=1}^\infty b_n (\lambda_i) \sin \left( n \theta_i \right)  \mathrm{d} \thetab \\
    & = \frac{2}{\pi d} \sum_{i=1}^d \sum_{n=1}^\infty  b_n (\lambda_i) \int_{\mathcal{T}_d} \sin(\pb^\intercal \thetab) f_0 (\thetab) \sin \left( n \theta_i \right)  \mathrm{d} \thetab \\
    & = \frac{2}{\pi d} \sum_{i=1}^d \sum_{n=1}^\infty  b_n (\lambda_i) \int_{\mathcal{T}_d} \frac{1}{2} \Biggl( \cos(p_1 \theta_1 + \cdots + p_{i-1} \theta_{i-1} + (p_i - n) \theta_i + p_{i+1} \theta_{i+1} + \cdots + p_d \theta_d) \\
    & \hspace{2cm} - \cos(p_1 \theta_1 + \cdots + p_{i-1} \theta_{i-1} + (p_i + n) \theta_i + p_{i+1} \theta_{i+1} + \cdots + p_d \theta_d) \Biggr) f_0 (\thetab)  \mathrm{d} \thetab \\
    & = \frac{1}{\pi d} \sum_{i=1}^d \sum_{n=1}^\infty  b_n (\lambda_i) \left( \alpha^0_{[p_1, \ldots, p_{i-1}, p_i-n, p_{i+1}, \ldots, p_d]} - \alpha^0_{[p_1, \ldots, p_{i-1}, p_i+n, p_{i+1}, \ldots, p_d]} \right) \\
    & = \frac{1}{\pi d} \sum_{i=1}^d \sum_{n=1}^\infty  b_n (\lambda_i) \left( \alpha^0_{\pb - n \eb_i} - \alpha^0_{\pb + n \eb_i} \right),
\end{align*}
where the second equality follows from the boundedness of all involved functions.
\end{proof}

\begin{proof}[\rm\textbf{Proof of Theorem~\ref{thm: random_variate_generation}.}]
    For notational convenience, we define
    \begin{align*}
        A_{\Xb} \coloneqq \left\{ U \leq \frac{1}{2} \left( 1 + \frac{2}{\pi d} \sum_{i=1}^d \arctan \left( \frac{\lambda_i \sin (X_i - \mu_i)}{1 + \sin^2 (X_i - \mu_i)} \right) \right) \right\}
    \end{align*}
    and note that 
    \begin{align*}
        A_{2\mub - \Xb} = \left\{ U \leq \frac{1}{2} \left( 1 + \frac{2}{\pi d} \sum_{i=1}^d \arctan \left( \frac{\lambda_i \sin (\mu_i - X_i)}{1 + \sin^2 (\mu_i - X_i)} \right) \right) \right\}
        = \left\{ U \leq \frac{1}{2} \left( 1 - \frac{2}{\pi d} \sum_{i=1}^d \arctan \left( \frac{\lambda_i \sin (X_i - \mu_i)}{1 + \sin^2 (X_i - \mu_i)} \right) \right) \right\}.
    \end{align*}
    For $\thetab \in \mathcal{T}_d$, it holds that
    \begin{align*}
        \Prob \left( \Thetab = \thetab \right) 
        = & \Prob \left( \Xb = \thetab, A_{\Xb} \right) + \Prob \left( \Xb = 2\mub - \thetab, A^c_{2\mub - \Xb} \right) \\
        = & \Prob \left( \Xb = \thetab \right) \Prob \left( A_{\Xb} \right) + \Prob \left( \Xb = 2\mub - \thetab \right) \Prob \left( A^c_{2\mub - \Xb} \right) \\
        = & f_0(\thetab - \mub) \frac{1}{2} \left( 1 + \frac{2}{\pi d} \sum_{i=1}^d \arctan \left( \frac{\lambda_i \sin (X_i - \mu_i)}{1 + \sin^2 (X_i - \mu_i)} \right) \right) \\
        & + f_0(\mub - \thetab) \left[ 1 - \frac{1}{2} \left( 1 - \frac{2}{\pi d} \sum_{i=1}^d \arctan \left( \frac{\lambda_i \sin (X_i - \mu_i)}{1 + \sin^2 (X_i - \mu_i)} \right) \right) \right] \\
        = & f_0(\thetab - \mub) \left( 1 + \frac{2}{\pi d} \sum_{i=1}^d \arctan \left( \frac{\lambda_i \sin (X_i - \mu_i)}{1 + \sin^2 (X_i - \mu_i)} \right) \right),
    \end{align*}
    where we used that $f_0 (\thetab - \mub) = f_0(\mub - \thetab)$ thanks to the symmetry around $\mub$.
\end{proof}

\begin{proof}[\rm{\textbf{Elements of the proof of Proposition~\ref{prop: estimate quantities}.}}]
Parts (ii), (iii) and (iv) can be proved the same way as Propositions 4.1 and 4.2 of \cite{symmetry_test}.
For the proof of (i), we note that 

\begin{equation*}
    \frac{\sec ^2 \left( \left( \theta_{ji} - \hat{\mu}^{(n)}_j\right)/{2} \right) -2 \sqrt{2}+2}{\sec ^2 \left( \left(\theta_{ji} - \hat{\mu}^{(n)}_j \right)/{2} \right)+2 \sqrt{2}+2} 
    = 1 - \frac{4 \sqrt{2}}{\sec ^2 \left( \left(\theta_{ji} - \hat{\mu}^{(n)}_j \right)/{2} \right)+2 \sqrt{2}+2} 
    \in \left[ \frac{3 - 2\sqrt{2}}{3 + 2\sqrt{2}}, 1 \right]
\end{equation*}
so
\begin{equation} \label{eq: log bound}
    \left| \log \left( \frac{\sec ^2 \left( \left( \theta_{ji} - \hat{\mu}^{(n)}_j\right)/{2} \right) -2 \sqrt{2}+2}{\sec ^2 \left( \left(\theta_{ji} - \hat{\mu}^{(n)}_j \right)/{2} \right)+2 \sqrt{2}+2} \right) \right| 
    \leq \log \left( \frac{3 + 2\sqrt{2}}{3 - 2\sqrt{2}} \right).
\end{equation}
Using \eqref{eq: log bound} instead of the inequality bounding the difference of two cosines by 2 in the proof of Proposition 4.1 of \cite{symmetry_test}, (i) can be proven along the same lines as in \cite{symmetry_test}.
    
\end{proof}

\section*{Supplementary Material}
The Supplementary Material contains the observed Fisher Information matrix and some simulations related to the tests of Section~\ref{sec: optimtest}.

\section*{Code availability}
The relevant code of the paper is available in \url{https://github.com/Sophia-Loizidou/Arctan-skewing}.

\section*{Acknowledgments}
Sophia Loizidou was supported by the grant PRIDE/21/16747448/MATHCODA from the Luxembourg National Research Fund. Both authors thank Toshihiro Abe and Chiara Passamonti for helpful comments on an early draft of the paper and Thomas Hamelryck and Ola R\o nning for the protein data.

\bibliographystyle{abbrv}
\bibliography{references}

\clearpage

\begingroup
\renewcommand{\thepage}{S\arabic{page}} 
\renewcommand{\thesection}{S\arabic{section}} 
\renewcommand{\thefigure}{S\arabic{figure}}
\renewcommand{\thetable}{S\arabic{table}}
\renewcommand{\theequation}{S\arabic{equation}}
\renewcommand{\thetheorem}{S\arabic{theorem}}
\renewcommand{\thelemma}{S\arabic{lemma}}
\renewcommand{\theassumption}{S\arabic{assumption}}

\setcounter{page}{1}                    
\setcounter{section}{0}
\setcounter{equation}{0}
\setcounter{figure}{0}
\setcounter{table}{0}

\begin{center}
  \LARGE \textbf{Supplement to ``Arctan-skewing: a new skewing scheme on the hypertorus''} 
\end{center}

\section{Elements of the observed Fisher Information matrix}\label{sec: FIM}

In this section we provide the elements of the observed Fisher Information matrix for our arctan-skewed models, first in the general case and then what they become when $\lambdab=\zerob$. Before that, we give the following results:
\begin{align*}
    \frac{\partial}{\partial_{\mu_k}}M(\mu_k,\lambda_k; \theta_{jk})
    = & -\frac{2\lambda_k}{\pi d} \frac{\cos^2 (\theta_{jk} - \mu_k) \sin (\theta_{jk} - \mu_k)}{\left( \left(1 + \sin^2 (\theta_{jk} - \mu_k) \right)^2+\lambda_k^2 \sin^2 (\theta_{jk} - \mu_k) \right)^2}\\
    &\times \left\{ {2 \cos^2 (\theta_{jk} - \mu_k)  \left( 2 \left( 1+\sin^2 (\theta_{jk} - \mu_k) \right) + \lambda_k^2 \right)}  + {3 \left( \left( 1 + \sin^2 (\theta_{jk} - \mu_k) \right)^2+\lambda_k^2 \sin^2 (\theta_{jk} - \mu_k) \right)}\right\},
\end{align*}

$$
\frac{\partial}{\partial{\lambda_k}} M(\mu_k,\lambda_k; \theta_{jk})
= -\frac{2\cos^3 (\theta_{jk} - \mu_k)}{\pi d}\frac{\left( 1 + \sin^2 (\theta_{jk} - \mu_k) \right)^2-\lambda_k^2 \sin^2 (\theta_{jk} - \mu_k)}{\left((1 + \sin^2 (\theta_{jk} - \mu_k))^2+\lambda_k^2 \sin^2 (\theta_{jk} - \mu_k)\right)^2},
$$
and
$$
\frac{\partial}{\partial{\lambda_k}}L(\mu_k,\lambda_k; \theta_{jk})
= - \frac{4\lambda_k}{\pi d}\frac{ \sin^3 (\theta_{jk} - \mu_k) \left( 1 + \sin^2 (\theta_{jk} - \mu_k) \right)}
{\left( \left( 1 +  \sin^2 (\theta_{jk} - \mu_k) \right)^2 + \lambda_k^2 \sin^2 (\theta_{jk} - \mu_k)\right)^2}.
$$
Making use of the newly established notations and these results, we get
\begin{align*}
    \frac{\partial^2}{\partial^2{\mu_k}}\log\left(\prod_{j=1}^n f (\thetab_j - \mub) \right)
    = & \sum_{j=1}^n \frac{\left(\frac{\partial^2}{\partial^2 {\mu_k}} f_0(\thetab_j-\mub)\right)f_0(\thetab_j-\mub)-\left( \frac{\partial}{\partial{\mu_k}} f_0(\thetab_j-\mub) \right)^2}{f_0^2(\thetab_j-\mub)} \\
    & + \sum_{j=1}^n\frac{\frac{\partial}{\partial{\mu_k}}M(\mu_k, \lambda_k; \theta_{jk})}{ 1 + \frac{2}{\pi d} \sum_{i=1}^d \arctan \left( \frac{\lambda_i \sin (\theta_{ji} - \mu_{ji})}{1 + \sin^2 (\theta_{ji} - \mu_{ji})} \right)}\\
    & - \sum_{j=1}^n\frac{\left( M(\mu_k, \lambda_k; \theta_{jk}) \right)^2}{\left(1 + \frac{2}{\pi d} \sum_{i=1}^d \arctan \left( \frac{\lambda_i \sin (\theta_{ji} - \mu_{ji})}{1 + \sin^2 (\theta_{ji} - \mu_{ji})} \right)\right)^2 },
\end{align*}

\begin{align*}
\frac{\partial^2}{\partial{\mu_\ell}\partial{\mu_k}}\log\left(\prod_{j=1}^n f (\thetab_j - \mub) \right)
= & \sum_{j=1}^n \frac{\left(\frac{\partial^2}{\partial{\mu_{\ell}}\partial{\mu_k}}f_0(\thetab_j-\mub)\right) f_0(\thetab_j-\mub)-\frac{\partial}{\partial{\mu_{\ell}}}f_0(\thetab_j-\mub) \frac{\partial}{\partial{\mu_k}}f_0(\thetab_j-\mub) }{f_0^2(\thetab_j-\mub)}\\
&- \sum_{j=1}^n \frac{M(\mu_k,\lambda_k; \theta_{jk})M(\mu_\ell,\lambda_\ell; \theta_{j\ell})}
{ \left(1 + \frac{2}{\pi d} \sum_{i=1}^d \arctan \left( \frac{\lambda_i \sin (\theta_{ji} - \mu_{i})}{1 + \sin^2 (\theta_{ji} - \mu_{i})} \right)\right)^2 },
\end{align*}

\begin{align*}
\frac{\partial^2}{\partial{\lambda_k}\partial{\mu_k}}\log\left(\prod_{j=1}^n f (\thetab_j - \mub) \right)
= & \sum_{j=1}^n\frac{\frac{\partial}{\partial{\lambda_k}}M(\mu_k,\lambda_k; \theta_{jk})}
{ 1 + \frac{2}{\pi d} \sum_{i=1}^d \arctan \left( \frac{\lambda_i \sin (\theta_{ji} - \mu_{ji})}{1 + \sin^2 (\theta_{ji} - \mu_{ji})} \right) } \\
& - \sum_{j=1}^n\frac{M(\mu_k,\lambda_k; \theta_{jk})L(\mu_k,\lambda_k; \theta_{jk})}
{ \left(1 + \frac{2}{\pi d} \sum_{i=1}^d \arctan \left( \frac{\lambda_i \sin (\theta_{ji} - \mu_{ji})}{1 + \sin^2 (\theta_{ji} - \mu_{ji})} \right)\right)^2 },
\end{align*}

\begin{eqnarray*}
\frac{\partial^2}{\partial{\lambda_\ell}\partial{\mu_k}}\log\left(\prod_{j=1}^n f (\thetab_j - \mub) \right)&=&-\sum_{j=1}^n\frac{M(\mu_k,\lambda_k; \theta_{jk})L(\mu_\ell,\lambda_\ell; \theta_{j\ell})}
{\left(1 + \frac{2}{\pi d} \sum_{i=1}^d \arctan \left( \frac{\lambda_i \sin (\theta_{ji} - \mu_{ji})}{1 + \sin^2 (\theta_{ji} - \mu_{ji})} \right) \right)^2 },
\end{eqnarray*}

\begin{eqnarray*}
    \frac{\partial^2}{\partial{\lambda_\ell}\partial{\lambda_k}}\log\left(\prod_{j=1}^n f (\thetab_j - \mub) \right)&=&-\sum_{j=1}^n \frac{L(\mu_k,\lambda_k; \theta_{jk})L(\mu_\ell,\lambda_\ell; \theta_{jk})}
{\left(1 + \frac{2}{\pi d} \sum_{i=1}^d \arctan \left( \frac{\lambda_i \sin (\theta_{ji} - \mu_{ji})}{1 + \sin^2 (\theta_{ji} - \mu_{ji})} \right)\right)^2},
\end{eqnarray*}
and finally
\begin{align*}
    \frac{\partial^2}{\partial{\lambda_k}\partial{\lambda_k}}\log\left(\prod_{j=1}^n f (\thetab_j - \mub) \right)
    = & \sum_{j=1}^n\frac{\frac{\partial}{\partial{\lambda_k}}L(\mu_k,\lambda_k; \theta_{jk})} {1 + \frac{2}{\pi d} \sum_{i=1}^d \arctan \left( \frac{\lambda_i \sin (\theta_{ji} - \mu_{ji})}{1 + \sin^2 (\theta_{ji} - \mu_{ji})} \right)} \\
    & - \sum_{j=1}^n\frac{L^2(\mu_k,\lambda_k; \theta_{jk})}
    {\left(1 + \frac{2}{\pi d} \sum_{i=1}^d \arctan \left( \frac{\lambda_i \sin (\theta_{ji} - \mu_{ji})}{1 + \sin^2 (\theta_{ji} - \mu_{ji})} \right)\right)^2}.
\end{align*}

When $\lambdab=\zerob$, these expressions  simplify into
\begin{align*}
\left. \frac{\partial^2}{\partial^2 {\mu_k}} \log\left(\prod_{j=1}^n f (\thetab_j - \mub) \right) \right|_{\lambdab=\zerob}
& = \sum_{j=1}^n \frac{\left(\frac{\partial^2}{\partial^2 {\mu_k}} f_0(\thetab_j-\mub)\right)f_0(\thetab_j-\mub)-\left( \frac{\partial}{\partial{\mu_k}}f_0(\thetab_j-\mub) \right)^2}{f_0^2(\thetab_j-\mub)},\\
\left. \frac{\partial^2}{\partial {\mu_\ell} \partial {\mu_k}} \log\left(\prod_{j=1}^n f (\thetab_j - \mub) \right) \right|_{\lambdab=\zerob}
& = \sum_{j=1}^n \frac{ \left( \frac{\partial^2}{\partial {\mu_\ell} \partial {\mu_k}} f_0(\thetab_j-\mub)\right) f_0(\thetab_j-\mub)-\frac{\partial}{\partial{\mu_{\ell}}}f_0(\thetab_j-\mub) \frac{\partial}{\partial{\mu_k}}f_0(\thetab_j-\mub) }{f_0^2(\thetab_j-\mub)},\\
\left. \frac{\partial^2}{\partial {\lambda_k} \partial {\mu_k}} \log\left(\prod_{j=1}^n f (\thetab_j - \mub) \right) \right|_{\lambdab=\zerob}
& = -\sum_{j=1}^n \frac{2\cos^3 (\theta_{jk} - \mu_k)}{\pi d \left(1 + \sin^2 (\theta_{jk} - \mu_k) \right)^2}, \\
\left. \frac{\partial^2}{\partial {\lambda_\ell} \partial {\mu_k}} \log\left(\prod_{j=1}^n f (\thetab_j - \mub) \right) \right|_{\lambdab=\zerob}
& = 0, \\
\left. \frac{\partial^2}{\partial {\lambda_\ell} \partial {\lambda_k}} \log\left(\prod_{j=1}^n f (\thetab_j - \mub) \right) \right|_{\lambdab=\zerob}
& = \sum_{j=1}^n \frac{4}{\pi^2 d^2}\frac{ \sin (\theta_{jk} - \mu_k)}
{1 +  \sin^2 (\theta_{jk} - \mu_k)}\frac{ \sin (\theta_{j\ell} - \mu_\ell)}
{1 +  \sin^2 (\theta_{j\ell} - \mu_\ell)},
\end{align*}
and
$$
\left. \frac{\partial^2}{\partial^2 {\lambda_k}} \log\left(\prod_{j=1}^n f (\thetab_j - \mub) \right) \right|_{\lambdab=\zerob}
= - \sum_{j=1}^n\frac{4}{\pi^2 d^2}\left(\frac{ \sin (\theta_{jk} - \mu_k)}
{1 +  \sin^2 (\theta_{jk} - \mu_k)}\right)^2.
$$

\section{Simulations} \label{sec: simulations}

In this section we present the simulation results for the tests proposed in Section~\ref{sec: optimtest}, $\phi^{\ast (n); \mu}$ and $\phi^{\ast (n)}_{f_0}$, for the specified and unspecified symmetry center cases, respectively.
More specifically, we consider correctly and incorrectly specified symmetry center  $\mub$ in $\phi^{\ast (n); \mu}$ (denoted Specified and Misspecified, respectively), and for $\phi^{\ast (n)}_{f_0}$ (denoted Unspecified)  the assumed model $f_0$ is indicated in the tables. 
For each setting under consideration, we conduct a simulation study with 1000 replications.
The Monte Carlo estimates of the probability of rejection of the respective null hypotheses ($\mathcal{H}^{(n)}_{0; \mu}$ and $\mathcal{H}^{(n)}_{0}$) at the $5\%$ level of significance are presented in Tables~\ref{tab: arctan-sk}-~\ref{tab: misspecified}.
We focus on $d=3$ and use the trivariate wrapped Cauchy (TWC) copula \citep{TWCC, TWCM} with marginal distributions chosen to be the wrapped Cauchy distribution. 
This distribution is denoted by $TWC_{\rho_{12};\rho_{13};\rho_{23}}^ {\beta_1; \beta_2; \beta_3}$, where $\rho_{12}, \rho_{13}, \rho_{23}\in\R\setminus\{0\}$ are the copula parameters satisfying Equation (8) in \cite{TWCC} and $\beta_1, \beta_2, \beta_3>0$ are the parameters for the marginal distributions.

We firstly consider Table~\ref{tab: arctan-sk}, for which data are generated from the arctan-skewed TWC$^{0.5, 0.7, 0.9}_{0.25, 1, 4}$ model.
As expected, the power of the tests increases as the sample size and the skewness degree increase.
For the specified test, the desired percentage of rejection under the null hypothesis of symmetry is attained, while the unspecified test has a higher percentage of rejections.

The settings chosen Tables~\ref{tab: sine-sk} and~\ref{tab: misspecified}, described in the next paragraphs, both violate assumptions under which our optimal tests are built: the former because the skewing is not arctan, the latter because the center of symmetry is misspecified. 
It is relevant to know how our tests behave when different assumptions are not met.

In Table~\ref{tab: sine-sk}, data are generated from the sine-skewed TWC$^{0.5, 0.7, 0.9}_{0.25, 1, 4}$ model.
This table indicates that our tests can detect other types of skewing, although a larger sample size is required for deviations from symmetry to be detected consistently.

Finally, Table~\ref{tab: misspecified} considers the case when the symmetry center $\mub$ is slightly misspecified ($\mub = (0.05, 0.05, 0.05)^\intercal$ instead of $\mub=\zerob$) and data are generated from the arctan-skewed TWC$^{0.5, 0.7, 0.9}_{0.25, 1, 4}$ model.
It is easy to see that the percentage of rejections decreases as the skewness increases.
This can be explained by the fact that positively skewing data also slightly shifts their mean  to larger values, and so the skewness with respect to a different symmetry center can be hidden.

\begin{table}[ht]
\centering
\caption[Percentage of rejections of $\phi^{\ast (n); \mu}$ and $\phi^{\ast (n)}$.]{Percentage of rejections of Specified ($\phi^{\ast (n); \mu}$) and Unspecified ($\phi^{\ast (n)}$) tests when generating data from the arctan-skewed TWC$^{0.5, 0.7, 0.9}_{0.25, 1, 4}$ distribution.
The variable $n$ represents the sample size and $\lambdab$ the value of the skewness parameter. \label{tab: arctan-sk}}
\begin{tabular}{m{2.5em}|m{3em}m{3em}m{3em}|m{3em}m{3em}m{3em}|m{3em}m{3em}m{3em}}
    \hline
    \multicolumn{10}{l}{Data generated from arctan-skewed TWC$^{0.5, 0.7, 0.9}_{0.25, 1, 4}$} \\
    \hline
  Test & \multicolumn{3}{c|}{Specified} & \multicolumn{3}{c|}{Unspecified} & \multicolumn{3}{c}{Unspecified} \\
  & \multicolumn{3}{c|}{} & \multicolumn{3}{c|}{$f_0$ = TWC$^{0.5, 0.7, 0.9}_{0.25, 1, 4}$} & \multicolumn{3}{c}{$f_0$ = TWC$^{0.1, 0.1, 0.1}_{0.3, 1, 3}$} \\
   \hline
 \diagbox[height=8ex, width=3.7em]{\raisebox{0.5\height}{\enspace $n$}}{ \raisebox{-0.8\height}{$\lambdab$}} & $\begin{pmatrix}
     0 \\ 0 \\ 0 
 \end{pmatrix}$ & $\begin{pmatrix}
     1 \\ 0.5 \\ 1 
 \end{pmatrix}$ & $\begin{pmatrix}
     1 \\ 2 \\ 3 
 \end{pmatrix}$ & $\begin{pmatrix}
     0 \\ 0 \\ 0 
 \end{pmatrix}$ & $\begin{pmatrix}
     1 \\ 0.5 \\ 1 
 \end{pmatrix}$ & $\begin{pmatrix}
     1 \\ 2 \\ 3 
 \end{pmatrix}$ & $\begin{pmatrix}
     0 \\ 0 \\ 0 
 \end{pmatrix}$ & $\begin{pmatrix}
     1 \\ 0.5 \\ 1 
 \end{pmatrix}$ & $\begin{pmatrix}
     1 \\ 2 \\ 3 
 \end{pmatrix}$ \\ 
  \hline
250 & 0.054 & 0.178 & 0.568 & 0.114 & 0.277 & 0.647 & 0.111 & 0.271 & 0.624 \\ 
  500 & 0.054 & 0.382 & 0.886 & 0.096 & 0.436 & 0.897 & 0.071 & 0.436 & 0.894 \\ 
  750 & 0.045 & 0.538 & 0.976 & 0.073 & 0.547 & 0.977 & 0.069 & 0.558 & 0.969 \\ 
  1000 & 0.039 & 0.663 & 0.993 & 0.060 & 0.671 & 0.993 & 0.075 & 0.678 & 0.991 \\ 
\end{tabular}
\end{table}

\begin{table}[ht]
\centering
\caption[Percentage of rejections of $\phi^{\ast (n); \mu}$ and $\phi^{\ast (n)}$.]{Percentage of rejections of Specified ($\phi^{\ast (n); \mu}$) and Unspecified ($\phi^{\ast (n)}$) tests when generating data from the sine-skewed TWC$^{0.5, 0.7, 0.9}_{0.25, 1, 4}$ distribution.
The variable $n$ represents the sample size and $\lambdab$ the value of the skewness parameter. \label{tab: sine-sk}}
\begin{tabular}{m{2.5em}|m{3em}m{3em}m{3em}|m{3em}m{3em}m{3em}|m{3em}m{3em}m{3em}}
    \hline
    \multicolumn{10}{l}{Data generated from sine-skewed TWC$^{0.5, 0.7, 0.9}_{0.25, 1, 4}$} \\
    \hline
  Test & \multicolumn{3}{c|}{Specified} & \multicolumn{3}{c|}{Unspecified} & \multicolumn{3}{c}{Unspecified} \\
  & \multicolumn{3}{c|}{} & \multicolumn{3}{c|}{$f_0$ = TWC$^{0.5, 0.7, 0.9}_{0.25, 1, 4}$} & \multicolumn{3}{c}{$f_0$ = TWC$^{0.1, 0.1, 0.1}_{0.3, 1, 3}$} \\
   \hline
 \diagbox[height=8ex, width=3.7em]{\raisebox{0.5\height}{\enspace $n$}}{ \raisebox{-0.8\height}{$\lambdab$}} & $\begin{pmatrix}
     0 \\ 0 \\ 0 
 \end{pmatrix}$ & $\begin{pmatrix}
     0.2 \\ 0.2 \\ 0.2 
 \end{pmatrix}$ & $\begin{pmatrix}
     0.6 \\ 0.1 \\ 0.2 
 \end{pmatrix}$ & $\begin{pmatrix}
     0 \\ 0 \\ 0 
 \end{pmatrix}$ & $\begin{pmatrix}
     0.2 \\ 0.2 \\ 0.2 
 \end{pmatrix}$ & $\begin{pmatrix}
     0.6 \\ 0.1 \\ 0.2 
 \end{pmatrix}$ & $\begin{pmatrix}
     0 \\ 0 \\ 0 
 \end{pmatrix}$ & $\begin{pmatrix}
     0.2 \\ 0.2 \\ 0.2 
 \end{pmatrix}$ & $\begin{pmatrix}
     0.6 \\ 0.1 \\ 0.2 
 \end{pmatrix}$ \\ 
  \hline
500 & 0.049 & 0.086 & 0.280 & 0.075 & 0.139 & 0.292 & 0.081 & 0.144 & 0.284 \\ 
  1000 & 0.060 & 0.128 & 0.544 & 0.080 & 0.154 & 0.496 & 0.079 & 0.152 & 0.490 \\ 
  1500 & 0.058 & 0.181 & 0.722 & 0.063 & 0.192 & 0.653 & 0.067 & 0.202 & 0.634 \\ 
  2000 & 0.061 & 0.229 & 0.844 & 0.074 & 0.213 & 0.781 & 0.069 & 0.264 & 0.767 \\ 
\end{tabular}
\end{table}

\begin{table}[ht]
\centering
\caption[Percentage of rejections of $\phi^{\ast (n); \mu}$ and $\phi^{\ast (n)}$.]{Percentage of rejections of Specified and Misspecified ($\phi^{\ast (n); \mu}$) tests when generating data from the arctan-skewed TWC$^{0.5, 0.7, 0.9}_{0.25, 1, 4}$ distribution.
The variable $n$ represents the sample size and $\lambdab$ the value of the skewness parameter. \label{tab: misspecified}}
\begin{tabular}{m{2.5em}|m{3em}m{3em}m{3em}|m{3em}m{3em}m{3em}}
    \hline
    \multicolumn{7}{l}{Data generated from arctan-skewed TWC$^{0.5, 0.7, 0.9}_{0.25, 1, 4}$} \\
    \hline
  Test & \multicolumn{3}{c|}{Specified} & \multicolumn{3}{c}{Misspecified} \\
  & \multicolumn{3}{c|}{} & \multicolumn{3}{c}{$\mub = (0.05, 0.05, 0.05)^\intercal$} \\
   \hline
   \diagbox[height=8ex, width=3.7em]{\raisebox{0.5\height}{\enspace $n$}}{ \raisebox{-0.8\height}{$\lambdab$}} & $\begin{pmatrix}
     0 \\ 0 \\ 0 
 \end{pmatrix}$ & $\begin{pmatrix}
     1 \\ 0.5 \\ 1 
 \end{pmatrix}$ & $\begin{pmatrix}
     1 \\ 2 \\ 3 
 \end{pmatrix}$ & $\begin{pmatrix}
     0 \\ 0 \\ 0 
 \end{pmatrix}$ & $\begin{pmatrix}
     1 \\ 0.5 \\ 1 
 \end{pmatrix}$ & $\begin{pmatrix}
     1 \\ 2 \\ 3 
 \end{pmatrix}$ \\ 
  \hline
250 & 0.055 & 0.190 & 0.514 & 0.831 & 0.604 & 0.257 \\ 
  500 & 0.048 & 0.375 & 0.894 & 0.997 & 0.880 & 0.468 \\ 
  750 & 0.054 & 0.533 & 0.973 & 1.000 & 0.978 & 0.631 \\ 
  1000 & 0.050 & 0.667 & 0.998 & 1.000 & 0.999 & 0.808 \\
\end{tabular}
\end{table}

\end{document}